\newenvironment{IEEEproof}{\begin{proof}}{\end{proof}}
\theoremstyle{plain}
\newtheorem{thm}{Theorem}[section]
\newtheorem{lem}[thm]{Lemma}
\newtheorem{pro}[thm]{Proposition}
\theoremstyle{definition}
\newtheorem{defn}[thm]{Definition}
\newtheorem{rem}[thm]{Remark}
\newtheorem{assumption}[]{Assumption}
\renewcommand{\th}{\theta}
\renewcommand{\TH}{\boldsymbol{\th}}
\newcommand{\LL}{\mathbb{L}}
\newcommand{\F}{\mathbb{F}}
\newcommand{\Fq}{\F_q}
\newcommand{\Fqq}{\F_{q^2}}
\newcommand{\Fqm}{\F_{q^m}}
\newcommand{\K}{\mathbb{K}}
\newcommand{\ZZ}{\mathbb{Z}}
\newcommand{\NN}{\mathbb{N}}
\renewcommand{\a}{\alpha}
\newcommand{\code}[1]{\mathscr{#1}}
\newcommand{\CC}{\code{C}}
\newcommand{\DD}{\code{D}}
\newcommand{\RM}{\text{RM}}
\newcommand{\plotkin}{\diamondsuit}
\newcommand{\plotkinAS}{\diamond}
\renewcommand{\aa}{\mathbf{a}}
\newcommand{\av}{\mathbf{a}}
\newcommand{\bb}{\mathbf{b}}
\newcommand{\ev}{\mathbf{e}}
\newcommand{\fv}{\mathbf{f}}
\newcommand{\mv}{\mathbf{m}}
\newcommand{\nn}{\mathbf{n}}
\newcommand{\uv}{\mathbf{u}}
\newcommand{\vv}{\mathbf{v}}
\newcommand{\xv}{\mathbf{x}}
\newcommand{\yv}{\mathbf{y}}
\newcommand{\wt}{\text{w}_{\text{H}}}
\newcommand{\Tr}{\textrm{Tr}}
\newcommand{\B}{\mathcal{B}}
\newcommand{\rank}{\text{Rk}\,}
\newcommand{\rk}{\text{Rk}\,}
\newcommand{\Am}{{\mathbf{A}}}
\newcommand{\Bm}{{\mathbf{B}}}
\newcommand{\Cm}{{\mathbf{C}}}
\newcommand{\Dm}{{\mathbf{D}}}
\newcommand{\Em}{{\mathbf{E}}}
\newcommand{\Fm}{{\mathbf{F}}}
\newcommand{\Gm}{{\mathbf{G}}}
\newcommand{\Hm}{{\mathbf{H}}}
\newcommand{\II}{\mathbf{I}}
\newcommand{\Mm}{{\mathbf{M}}}
\newcommand{\Rm}{{\mathbf{R}}}
\newcommand{\Xm}{\mathbf{X}}
\newcommand{\Ym}{\mathbf{Y}}
\newcommand{\GG}{\textrm{G}}
\newcommand{\Sm}{\mathbf{S}}
\newcommand{\Um}{\mathbf{U}}
\newcommand{\Span}[2]{{\left\langle #1 \right\rangle}_{#2}}
\newcommand{\<}{\left<}
\renewcommand{\>}{\right>}
\newcommand{\Gal}{\text{Gal}}
\renewcommand{\gg}{\texttt{g}}
\newcommand{\hg}{\texttt{h}}
\newcommand{\Prob}[1]{\mathbb{P}\left( #1 \right)}
\newcommand{\Esp}[1]{\mathbb{E}\left( #1 \right)}
\newcommand{\eqdef}{\stackrel{\text{def}}{=}}
\newcommand{\map}[4]{
  \left\{
  \begin{array}{ccc}
    #1 & \longrightarrow & #2 \\
    #3 & \longmapsto     & #4
  \end{array}
  \right.
}
\newcommand{\OO}{\mathcal{O}}
\newcommand{\compsyn}{\mathcal{C}_{\text{syn}}}
\newcommand{\compdec}{\mathcal{C}_{\text{dec}}}
\newcommand{\mul}{\mathcal{M_{\K}}}
\colorlet{known}{teal}
\colorlet{unknown}{black}
\title{Recursive decoding of binary rank Reed--Muller codes and Plotkin construction for matrix codes} 
\author{Alain Couvreur and Rakhi Pratihar}
\thanks{Inria \& Laboratoire LIX, École Polytechnique, Institut Polytechnique
  de Paris, 91120 Palaiseau CEDEX, France}
\email{\texttt{\{alain.couvreur,rakhi.pratihar\}@inria.fr}}
\begin{document}
\maketitle

\begin{abstract}
  In 2021, Augot, Couvreur, Lavauzelle and Neri introduced a new class
  of rank metric codes which can be regarded as rank metric
  counterparts of Reed--Muller codes. Given a finite Galois extension
  $\LL / \K$, these codes are defined as some specific
  $\LL$--subspaces of the twisted group algebra $\LL [\GG]$. We
  investigate the decoding of such codes in the ``binary'' case,
  \emph{i.e.,} when $\GG = (\ZZ/2\ZZ)^m$.  Our approach takes its
  inspiration from the decoding of Hamming metric binary Reed--Muller
  codes using their recursive Plotkin ``$(u ~|~ u+v)$'' structure. If
  our recursive algorithm restricts to a specific subclass of rank metric
  Reed--Muller codes, its asymptotic complexity beats that of the
  recently proposed decoding algorithm for arbitrary rank metric
  Reed--Muller codes based on Dickson matrices. Also, this decoder is
  of completely different nature and leads a natural rank metric
  counterpart of the Plotkin construction.  To illustrate this, we
  also propose a generic Plotkin-like construction for matrix rank
  metric codes with an associate decoder, which can be applied to any
  pair of codes equipped with an efficient decoder.
\end{abstract}

\bigskip

\textbf{Keywords}: Binary rank metric Reed-Muller codes, decoding, matrix codes, Plotkin construction 
 \setcounter{tocdepth}{1}
\tableofcontents

\section{Introduction}
\emph{Rank metric codes} were introduced by Delsarte \cite{Del} with a
combinatorial interest as sets of bilinear forms on a pair of
finite-dimensional vector spaces over a finite field $\Fq$, whereas
Gabidulin independently reintroduced a variant of rank metric codes
\cite{Gab} as $\Fqm$-linear subspaces of $\Fqm^n$ and proposed a first
decoding algorithm. Due to a broad diversity of applications of
applications of these codes, such as crisscross error correction
\cite{Rot91}, network coding \cite{KK08}, space-time coding
\cite{LK05,R15,R15a} and public-key cryptography \cite{GPT91}, the
theoretical interest to study rank metric codes is constantly rising
leading to new mathematical and algorithmic challenges.  In the finite
field setting, the very first class of efficiently decodable rank
metric codes is given by Gabidulin \cite{Gab} using $q$-polynomials of
bounded degree. This construction is actually included in a more general
framework introduced in \cite{ACLN} as subspaces of the skew group
algebra $\LL[\GG]$ for an arbitrary finite Galois extension $\LL/\K$
with Galois group $\GG = \Gal(\LL/\K)$. This point of view has been
particularly useful in defining a rank analogue of Reed--Muller codes,
also called \emph{$\TH$-Reed–-Muller codes} in \cite{ACLN}, where
$\TH = (\theta_1, \ldots , \theta_m)$ specifies a generating set of
the abelian group $\GG$ (See \S \ref{RMcodes} for a discussion on
Reed--Muller-type codes in different metrics).

Towards various applications, for instance post--quantum cryptography,
it is particularly important for a family of codes to have efficient
decoding algorithms. Indeed, on one hand, McEliece-like schemes
\cite{M78} require codes with an efficient decoder and whose structure
can be hidden to the attackers.  Such a scheme has been instantiated
with codes both in Hamming (see for instance
\cite{M78,N86,ABCCGLMMMNPPPSSSTW20}) and in rank metric (see for
instance \cite{GPT91,GMRZ13,ROLLO,L17}).  On the other hand, many
Alekhnovich-like schemes in code or lattice based cryptography require
a decoder to conclude the decryption phase and get rid from a residual
noise term. See for instance \cite{AABBBBDDGLPRVZ22,A19b}. For these
reasons, there is a strong motivation in broadening the diversity of
decodable codes in rank metric: first to improve our understanding of
decoding problems and second in view toward applications to future new
cryptographic designs. However, the known classes of rank metric codes
with efficient decoding algorithms are only the following few; simple
codes \cite{SKK11, GHPT17}, some families of MRD codes including
Gabidulin codes \cite{Loid} and its variants, cf. \cite[Chapter
2]{BHLPRW}, low-rank parity-check (LRPC) codes \cite{AGHRZ} and the
interleaved version of the aforementioned codes \cite{SB10,RJB}.  In
\cite{ACLN}, a construction of rank metric analogues of Reed--Muller
codes is proposed and, for decoding, the authors suggested the use of the
\emph{rank error--correcting pairs} paradigm \cite{MP}. This led to a
polynomial--time decoder that could correct an amount of errors that remained
far below half the minimum distance.
In \cite{CP25}, the authors of the present article give a
deterministic decoding algorithm for $\TH$-Reed--Muller codes that
involved so--called \emph{$\GG$-Dickson matrices} and which corrects
any error patter of rank up to half the minimum rank distance. In the present
article, we investigate a recursive structure of these codes in the
``binary--like'' case, \emph{i.e.}, when $\GG\cong (\ZZ/{2 \ZZ})^{m}$ and
propose a new decoding technique resting on this recursive structure.

It is worth noting that the structure we identified can be considered
as a rank metric analogue to the recursive structure that the Hamming
metric binary Reed--Muller codes possess. More precisely, in the
Hamming setting, any codeword of $\RM_2(r,m)$ can be written as
$(u ~|~ u+v)$ where $u \in \RM_2(r, m-1)$ and $v\in
\RM_2(r-1,m-1)$. The general $(u~|~u+v)$ construction was introduced
by Plotkin \cite{MS77} and known as \emph{Plotkin construction} for
linear codes with Hamming metric. This recursive construction and its
use for decoding have been studied extensively. Besides binary
Reed--Muller codes and their decoding, the Plotkin construction has
striking applications since it naturally appears in the construction
of polar codes or in the post--quantum signature \emph{Wave}
\cite{BCCCDGKLNSST23,DST19a}.  One can also mention that this
construction iteratively applied on Reed--Solomon codes permits to
achieve the capacity of the discrete symmetric channel as proved in
\cite{MCT17}.

\subsection*{Our contributions}
In the present article, we demonstrate a recursive structure of
binary--like rank metric Reed--Muller codes and give a decoding
algorithm that can correct up to half minimum distance. This algorithm
may fail on some unlikely instances but has a better asymptotic
complexity than the Dickson-based decoding of \cite{CP25}. In short,
for $N \eqdef |\GG|$, the aforementioned Dickson-based approach
corrects $t$ errors $\OO(tN^2)$ operations in $\LL$ while the decoder
proposed in this article, requires $\OO(t^{\omega - 1}N)$ operations in
$\LL$, where $\omega$ is the complexity exponent of linear algebra.

Next, inspired by the structure of binary--like rank metric
Reed--Muller codes, we adapt the recursive structure to propose
Plotkin-like construction for matrix rank metric codes over any field.
As a consequence, we provide new efficiently decodable matrix rank
metric codes over finite fields which are not equivalent to Gabidulin
codes. This new framework provides new families of rank metric codes
equipped with a decoder. To our knowledge, this is the first
successful attempt of providing a rank metric counterpart of the
Plotkin construction.

  \subsection*{Organization of the article} Section \ref{sec:3}
  introduces the notation used in this paper, as well as basic
  notions regarding rank metric codes as subspaces of skew group
  algebras including binary rank metric Reed--Muller codes. In Section
  \ref{Binary_rankRM}, we discuss a recursive structure of binary rank
  metric Reed--Muller codes. A decoding algorithm for these codes
  resting on the recursive structured is presented in Section
  \ref{decoding}. We show that the recursive decoding approach permits
  to correct error patterns of rank up to half the minimum distance,
  whenever some folding of the error matrix preserves the rank. The
  complexity of this novel decoder is studied in depth in
  Section~\ref{sec:complexity}. Finally, Section \ref{sec:5} presents
  a rank metric analogue of Plotkin construction by adapting the
  recursive structure of binary rank metric Reed--Muller codes for
  matrix rank metric codes over finite fields. We demonstrate how the
  recursive decoding algorithm can be adapted to efficiently decode a
  Plotkin construction and provides new efficiently decodable
  matrix rank metric codes inequivalent to Gabidulin codes.

\subsection*{Note} The present article is an extended version of a short paper presented
at the \emph{IEEE International Symposium on Information Theory} (ISIT) 2025 \cite{CP25b}.
Compared to the original short paper, the current article includes detailed proofs and 
improves the main algorithm by simplifying its initialisation step.

 \section*{Acknowledgements} The authors warmly thank Martin Weimann
and Cl\'ement Pernet for their very useful advice. The authors are
partially funded by French Agence Nationale de la Recherche through
the France 2030 ANR project ANR-22-PETQ-0008 PQ-TLS, by French Grant
ANR projet \emph{projet de recherche collaboratif}
ANR-21-CE39-0009-BARRACUDA, and by Horizon-Europe MSCA-DN project
ENCODE.

 \section{Preliminaries}\label{sec:3}

\subsection{Notation}
Throughout this paper, $\K$ denotes a field, 
and $\LL$ denotes a finite Galois extension of $\K$.  We use $\GG$ to
denote the Galois group $\Gal(\LL/\K)$ and the elements of $\GG$ are
usually denoted as $\gg_0, \gg_1, \ldots{}, \gg_{n-1}$. By $\B$, we
denote a basis of the finite dimensional vector space $\LL$ over
$\K$. For a $\K$-linear space $V$, the span of vectors
$\boldsymbol{v}_1, \dots, \boldsymbol{v}_t \in V$ is denoted as $\Span{\boldsymbol{v}_1, \dots,
  \boldsymbol{v}_t}{\K}$. The space of linear endomorphisms of $V$ is denoted
${\rm End}_{\K}(V)$. The space of matrices with $m$ rows and $n$ columns with entries in
$\K$ is denoted using $\K^{m \times n}$ and $\LL^n$ denotes the space
of vectors of length $n$ over $\LL$. Matrices are usually denoted in
bold capital letters and we use $\rk(\mathbf{A})$ to denote rank of a
matrix $\mathbf{A}$. 
Finally, when handling complexities we will use Landau notation for comparison.
Namely, for $m$ going to infinity we denote
\begin{align*}
  f(m) = \OO(g(m)) \quad &\text{if}\quad \exists M >0, \quad\text{such that} \quad \forall m \geq M,\ f(m)\leq \kappa g(m)\quad \text{for some }\kappa >0;\\
  f(m) = \Omega(g(m)) \quad &\text{if}\quad \exists M >0, \quad\text{such that} \quad \forall m \geq M,\ f(m)\geq \kappa g(m)\quad \text{for some }\kappa >0;\\
  f(m) = \Theta(g(m)) \quad &\text{if both}\quad f(m) = \OO(g(m)) \quad \text{and}\quad f(m) = \Omega(g(m));\\
  f(m) = \circ (g(m)) \quad &\text{if}\quad f(m) = g(m)\varepsilon_m \quad \text{where}\quad \varepsilon_m \rightarrow 0.
\end{align*}
Also we denote $f(m) = \widetilde{\OO}(g(m))$ if $f(m)=\OO(g(m)P(\log(m)))$ for some polynomial $P$.

\medskip

In this section, we recall the relevant definitions and basic notions
of rank metric codes as well as their various equivalent
representations. We also record some results about Reed--Muller codes
with rank metric from \cite{ACLN}.
\subsection{Matrix codes}
Delsarte introduced rank metric codes in \cite{Del} as $\K$-linear
subspaces of the matrix space $\K^{m \times n}$ where the rank
 distance of two codewords (\emph{i.e.},
matrices) $\mathbf{A}, \mathbf{B} \in \K^{m \times n}$ is given by
\[
d_{\rk}(\mathbf{A},\mathbf{B}) = \rank(\mathbf{A}-\mathbf{B}).
\]
Such matrix spaces are called \emph{matrix rank metric codes} and
denoted by \emph{$[m \times n, k,d]_{\K}$--codes} where $k$ denotes the
$\K$-dimension of the code and $d$ denotes the minimum distance, \emph{i.e.} the minimum of the rank distances of any two distinct codewords.  The \emph{dual} of a matrix rank metric code $\CC \subseteq \K^{m \times n}$ is the matrix rank metric code 
\[
\CC^{\perp} \eqdef \{\Am \in \K^{m \times n} \colon \Tr(\Am\Cm^\top) = 0 \text{ for any } \Cm \in \CC\}. 
\]
\begin{rem}
  Note that a more abstract point of view can be adopted by
  considering subspaces of the space of $\K$-linear maps from a finite
  dimensional $\K$-linear space $V$ to another $\K$-linear space
  $W$. This point of view is somehow considered in the sequel when we
  deal with subspaces of skew group algebras (see
  \S~\ref{sec:LG-codes}).
\end{rem}

\subsection{Vector codes}
Since the works of Gabidulin \cite{Gab}, the classical literature
on rank metric codes also involves $\LL$-linear subspaces
of $\LL^n$, where the \emph{$\K$--rank} or \emph{rank} (we will omit the field $\K$ when it is
clear from the context) of a vector is defined as
\[
\forall \aa \in \LL^n,\quad  \rk_{\K}(\aa) \eqdef \dim_{\K} \Span{a_1, \dots, a_n}{\K}.
\]
Next, the distance between two vectors $\aa, \, \bb \in \LL^n$ is
defined as
\[
d_{\rk}(\aa,\bb) \eqdef \rk_{\K}(\aa-\bb).
\]
$\LL$--subspaces of $\LL^n$ are called \emph{vector rank metric codes}
and denoted by $[n, k, d]_{\LL/\K}$ codes where $k$ denotes the
$\LL$--dimension of the code and $d$ denotes the minimum distance.

It is well--known that such vector codes actually can be turned into
matrix codes by choosing a $\K$--basis $\B = ({\beta}_1, \dots, {\beta}_m)$ of
$\LL$ and proceeding as follows. Given an element $x$ of $\LL$ denote by $x^{(1)}, \dots, x^{(m)}$ its coefficients in the basis $\B$, \emph{i.e.},
$x = x^{(1)}{\beta}_1 + \cdots + x^{(m)} {\beta}_m$. Consider the map
\[
  \text{Exp}_\B : \map{\LL^n}{\K^{m \times n}}{(x_1, \dots, x_n)}{
    \begin{pmatrix}
      x_1^{(1)} & \cdots & x_n^{(1)}\\
      \vdots &  & \vdots \\
      x_1^{(m)} & \cdots & x_n^{(m)}      
    \end{pmatrix}.
}
\]
Then, any vector code $\CC \subset \LL^n$ can be turned into a matrix
code by considering $\text{Exp}_{\B}(\CC)$. The induced matrix code
depends on the choice of the basis $\B$ but choosing another basis provides
an isometric code with respect to the rank metric.

\begin{rem}
  Note that if an $\LL$--linear rank metric code can be turned into a
  matrix code, the converse is not true. A subspace of
  $\K^{m\times n}$ can be turned into a $\K$-linear subspace of $\LL^n$
  by applying the inverse map of $\text{Exp}_\B$ but the resulting
  code will not be $\LL$--linear in general. A matrix rank metric code $\CC \subseteq \K^{m \times n}$ is $\LL$-linear if its \emph{left idealiser} contains a subring isomorphic to $\LL$ \cite{LTZ17}, where the left idealiser \[L(\CC) \eqdef \{\Am \in \K^{m \times m} \colon \Am \Cm \in \CC \text{ for all } \Cm \in \CC \} \]
  is a ring with respect to sum and product of matrices.
 Thus, codes of the form
  $\text{Exp}_\B(\CC)$ when $\CC$ ranges over all $\LL$--subspaces of
  $\LL^n$ form a proper subclass of matrix codes in $\K^{m\times n}$.
\end{rem}

\subsection{Rank metric codes as
  \texorpdfstring{$\LL[\GG]$}{}--codes}\label{sec:LG-codes}

The study of rank metric codes as $\LL$-subspaces of the skew group algebra
$\LL[\GG]$ has been initiated in \cite{ACLN}.
It generalizes the study of rank metric codes over arbitrary cyclic Galois extensions in
\cite{ALR18}.
We recall below the definitions and
basic notions of rank metric codes in this setting.

Consider an arbitrary but fixed finite Galois extension $\LL/\K$ with
$\GG \eqdef \Gal(\LL/\K)$. The \emph{skew group} algebra $\LL[\GG]$ of $\GG$ over $\LL$ is defined as
\[
\LL[\GG] \eqdef \Bigg\{\sum_{\gg \in \GG} a_\gg \gg ~\colon~ a_\gg \in \LL\Bigg\}
\]
and endowed with its additive group structure and the following composition law derived from the group law of $\GG$:
\[
(a_\gg \gg ) \circ (a_\hg \hg) = (a_\gg \gg(a_\hg)) (\gg \hg),
\]
which is extended by associativity and distributivity.
This equips $\LL[\GG]$ with a non-commutative algebra structure.

\begin{thm}\label{thm:correspondence_poly_endo}
  Any element $A = \sum_\gg a_\gg \gg \in \LL[\GG]$ defines a
  $\K$-endomorphism of $\LL$ that sends $x \in \LL$ to
  $\sum_\gg{a_\gg\gg(x)}$. This correspondence induces
  a $\K$-linear isomorphism between $\LL[\GG]$ and
  $\ensuremath{\textrm{End}}_{\K}(\LL)$.
\end{thm}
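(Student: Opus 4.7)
The plan is to define the map $\varphi : \LL[\GG] \to \mathrm{End}_\K(\LL)$ explicitly by
\[
\varphi\Bigl(\sum_{\gg \in \GG} a_\gg \gg\Bigr) : x \longmapsto \sum_{\gg \in \GG} a_\gg \gg(x),
\]
and then prove that $\varphi$ is a $\K$-linear isomorphism in three standard steps: well-definedness and linearity, injectivity via Dedekind's lemma, and surjectivity by a dimension count.

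First, I would check that for each $A \in \LL[\GG]$, the map $\varphi(A)$ genuinely lies in $\mathrm{End}_\K(\LL)$: each $\gg \in \GG$ fixes $\K$ pointwise, hence is $\K$-linear, and multiplication by $a_\gg \in \LL$ is $\K$-linear, so $\varphi(A)$ is a $\K$-linear endomorphism of $\LL$. The $\K$-linearity (actually $\LL$-linearity) of $\varphi$ itself is immediate from the fact that $\LL[\GG]$ is defined as the free $\LL$-module on $\GG$ and the definition of $\varphi$ is $\LL$-linear in the coefficients $a_\gg$.

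The core of the argument is injectivity, which rests on Dedekind's lemma on the linear independence of characters: distinct field automorphisms of $\LL$ are $\LL$-linearly independent as maps $\LL \to \LL$. I would either cite this classical fact or sketch its proof by a minimality argument: assume a nontrivial relation $\sum_{\gg \in S} a_\gg \gg = 0$ of shortest support $S$, pick $\gg_0, \gg_1 \in S$ distinct, choose $y \in \LL$ with $\gg_0(y) \neq \gg_1(y)$, evaluate the relation at $xy$ and subtract $\gg_0(y)$ times the relation evaluated at $x$, yielding a shorter nontrivial relation and a contradiction. Applied to our situation, this forces $\varphi(A) = 0 \Rightarrow a_\gg = 0$ for all $\gg$, hence $\ker \varphi = 0$.

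Finally, I would conclude by a dimension count over $\K$. Since $\LL/\K$ is Galois, $|\GG| = [\LL:\K] \eqdef n$, so
\[
\dim_\K \LL[\GG] = |\GG| \cdot \dim_\K \LL = n^2 = (\dim_\K \LL)^2 = \dim_\K \mathrm{End}_\K(\LL).
\]
Injectivity of the $\K$-linear map $\varphi$ between two $\K$-vector spaces of the same finite dimension $n^2$ implies surjectivity, so $\varphi$ is an isomorphism. The only genuinely non-formal step is Dedekind's independence of characters; everything else is bookkeeping. If a fully self-contained proof is desired, I would include the short minimal-support argument above, otherwise I would simply invoke Dedekind's lemma by name.
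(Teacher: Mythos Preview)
Your proof is correct and follows essentially the same approach as the paper: the key step is the linear independence of characters (which the paper calls Artin's lemma and you call Dedekind's lemma), combined with a dimension count. The paper simply cites this lemma and a reference, while you spell out the details and even sketch the minimal-support argument, but the underlying strategy is identical.
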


\begin{proof}
  It follows from Artin's lemma on linear independence of characters. A proof can be found, for instance, in {\cite[Thm.~1]{GQ09}}. \end{proof}
Thus, the \emph{rank} of an element $A \in \LL[\GG]$ is well-defined as its rank when viewed as a $\K$-linear endomorphism of $\LL$.
From the above theorem, it is clear that with respect to a fixed basis
$\B = ({\beta}_1, \dots, {\beta}_m)$ of $\LL/\K$, we get $\K$--linear rank--preserving isomorphisms
\begin{equation}\label{representation}
  \LL[\GG] \cong \textrm{End}_\K(\LL) \cong \K^{m\times m}.
\end{equation}
Also, w.r.t. the basis $\B = \{ {\beta}_1, \ldots, {\beta}_m\}$, every element  $A \in \LL[\GG]$ can be seen as a vector
\[
  \textbf{a} = (A({\beta}_1), \ldots, A({\beta}_m)) \in \LL^m.
\]
The aforementioned definition of rank for a vector of $\LL^m$
coincides with the rank of $A$ when regarded as a $\K$--endomorphism
of $\LL$ (according to Theorem~\ref{thm:correspondence_poly_endo}).

\begin{rem}
In the particular case of a Galois extension of finite fields
$\LL/\K$, the group $\GG$ is cyclic and there
are many characterizations of $\LL[\GG]$ studied in \cite{WL13}. One of
the very well--known characterization is in terms of \emph{linear
polynomials} studied by Ore \cite{Oreqpoly} followed by his work on the
theory of non-commutative polynomials \cite{Orenoncom}.  Let
$\K = \Fq$ and $\LL = \Fqm$ for some prime power $q$ and a positive
integer $m$, then the \emph{linear polynomials} over $\Fqm$ are given by
\[
  L(x) = \sum\limits_{i=0}^{d} a_ix^{q^i},\ \  \text{for\ some\ } d \in \NN \quad
  \text{and}\quad a_0, \dots, a_d \in \Fqm
\]
and endowed with the composition law to give a structure of (non
commutative) ring. With this point of view, the skew group algebra
$\Fqm[\GG]$ is isomorphic to the ring of linear polynomials modulo
the two--sided ideal generated by $x^{q^m}-x$.
\end{rem}

\begin{defn}
    An \emph{$\LL$-linear rank metric code} $\CC$ in the skew group algebra $\LL[\GG]$ is an $\LL$-linear subspace of $\LL[\GG]$, equipped with the rank distance. The dimension of $\CC$ is defined as its dimension as $\LL$-vector space. The minimum rank distance is defined as 
    \[
    d(\CC) \eqdef \min \{\rk(A) ~\colon~ A \in \CC\setminus \{0\}\},
  \]
  where the rank of $A \in \LL [\GG]$ is the rank of the
  $\K$--endomorphism it induces on $\LL$ (according to
  Theorem~\ref{thm:correspondence_poly_endo}).
  \end{defn}
  
  We denote the parameters of an $\LL$-linear rank metric code
  $\CC \subseteq \LL[\GG]$ of dimension $k$ and minimum distance $d$ by
  $[\LL [\GG],k,d]$. If $d$ is either unknown or clear from the context, we
  simply write $[\LL[\GG],k]_{\K}$.

As observed earlier, an element of $\LL[\GG]$ can be seen as a
  $\K$-linear endomorphism of $\LL$. Therefore, if we fix a $\K$-basis
  $\B$ of $\LL$, then,  one can
  transform an $[\LL[\GG],k,d]_{\K}$--code $\CC$ into an
  $[m \times m, k, d]_{\K}$ matrix code or into an $[m,k,d]_{\LL/\K}$ vector code.

\subsection{Rank metric Reed--Muller codes}\label{RMcodes}

The class of classical $q$-ary Reed--Muller codes is among the most widely studied class of linear codes in the Hamming metric. Introduced independently by Muller and Reed in 1954 in the binary case (i.e., $q=2$), the class of $q$-ary Reed--Muller codes $\textrm{RM}_{q}(r,m)$ of order $r$ and type $m$ was generalized by Kasami, Lin and Peterson \cite{KLP68,KLP68b} in the 1960's as the code over $\Fq$ of length $q^m$ obtained by evaluation of polynomials in $m$ variables of degree bounded by $r$ at the points of $\F_q^m$. An extension to the projective setting is proposed and studied in \cite{L88,L90,S91}.

Counterparts of Hamming metric Reed--Muller codes in the rank and the sum-rank metric have been defined by evaluations of certain multivariate versions of non-commutative Ore polynomials \cite{Orenoncom}. For an introduction to the multivariate extension of the algebra of Ore polynomials that has been used to define the sum-rank metric counterpart of Reed--Muller codes, refer to \cite[\S~1]{BC24}. An iterated construction of multivariate skew polynomials leading to Reed--Muller codes in skew metric is studied in \cite{GU19}.

For the present work, we recall the construction of \cite{ACLN} using skew group algebras.
The so--called \emph{$\TH$-Reed--Muller codes} of \cite{ACLN} can be
considered as a ``multivariate'' version of the Gabidulin codes
defined from cyclic extensions in \cite{ALR18}. Let
$\LL/\K$ be an abelian extension with
$\GG = \ZZ/{n_1 \ZZ} \times \dots \times \ZZ/{n_m \ZZ}$ with a system
of generators $\theta_1, \ldots, \theta_m$, a \emph{$\TH$-monomial}
$\theta_1^{i_1}\cdots\theta_m^{i_m}$ describes the $m$-tuple
$(i_1, \ldots, i_m) \in \ZZ/{n_1 \ZZ} \times \dots \times \ZZ/{n_m
  \ZZ}$. Every element in $\LL[\GG]$ has a unique
representation as a \emph{$\TH$-polynomial}
\[P = \sum_{(i_1,\ldots, i_m)} b_{(i_1,\ldots,
    i_m)}\theta_1^{i_1}\cdots\theta_m^{i_m}.
\]
We define
\[
  \deg_{\TH}(P) \eqdef\max\{i_1+ \cdots+i_m ~\colon~ b_{(i_1, \ldots,
    i_m)}\neq 0\} .
\]
For $0< r \le \sum_{i=1}^m(n_i -1)$,
the \emph{$\TH$-Reed--Muller code of order $r$ and type
$\boldsymbol{n} \eqdef (n_1,\ldots, n_m)$} is defined as
\[
  \RM_{\TH}(r,\nn)\eqdef \{P \in \LL[\GG] ~\colon~ \deg_{\TH}(P) \le
  r\}.
\] By fixing a $\K$-basis $\beta = \{\beta_1, \ldots, \beta_N \}$ of
$\LL$, the code $\RM_{\TH}(r,\nn)$ can be seen in the vector form as
\[\{(P(\beta_1), \ldots, P(\beta_N)) ~\colon~ P \in \LL[\GG], \,
  \deg_{\TH}(P)\le r\} \subseteq \LL^N,\] where $N = |\GG|$ and
equivalently, every codeword can be seen as $N\times N$ matrices with
entries in $\K$. When $m =1$, \emph{i.e.} $\GG$ cyclic, we recover
Gabidulin codes \cite{Del,Gab} when $\LL$ is finite and their
generalisation to a cyclic extension \cite{ALR18} when $\LL$ is
arbitrary.  

\begin{rem}
It is worth noting that contrary to Gabidulin codes,
$\TH$--Reed--Muller codes cannot be defined over finite fields as soon as the underlying Galois group is not cyclic. Rather, $\TH$--Reed--Muller codes can be defined, e.g., over abelian extensions of number fields, local fields, function fields of algebraic curves over finite fields.  
\end{rem}

\subsection{Binary-like rank metric Reed--Muller codes.}

In the present article, we fix a positive integer $m$ and focus on the
particular case of rank metric Reed--Muller codes with
$\GG \cong (\ZZ/2\ZZ)^m $ and we call these \emph{binary rank metric
  Reed--Muller codes} since the $\TH$-polynomials resemble those in the
case of binary Reed--Muller codes with Hamming metric. In particular,
binary rank metric Reed--Muller codes have exactly the same parameters
as their Hamming metric counterpart. See \cite[Prop.~48 \& Thm.~50]{ACLN}.
We denote by $N$ the
``code length'':
\[
  N \eqdef 2^m.
\]

\begin{defn}\label{def:RM_codes}
  Let $r,m$ be positive integers such that $m > 1$. A
  \emph{rank metric binary Reed--Muller code} of length $N=2^m$ over
  the extension $\LL/\K$ of order $r$ and type $m$ is the following
  $\LL$-subspace of the skew group algebra $\LL[\textrm{G}]$:
  \[
  \RM_{\LL/\K}(r,m) \eqdef \bigg\{f \in \LL[\textrm{G}] ~\colon~ f =\sum_{\gg\in \textrm{G},\ \wt(\gg) \leq r} f_{\gg}
    \gg \bigg\},
    \]
    where $\wt(\gg)$ is the Hamming weight of the element $\gg$ seen
    as a vector in $(\ZZ/2\ZZ)^m$.  By convention, for any $r<0$, we
    define $\RM_{\LL/\K}(r,m)$ to be the zero subspace of $\LL [\GG]$.    
\end{defn}

\begin{rem}
    Note that whenever $r \geq m$, then $\RM_{\LL/\K}(r,m) = \LL[\GG]$.
\end{rem}

\begin{rem}
To keep in mind the analogy with binary Hamming metric Reed--Muller
codes, we denote $\RM_{\LL/\K}(r,m)$ with $m$ instead of
$\textbf{n} = (2, \ldots, 2)$ as generally defined (see Section~\ref{subsec:classical_RM}
where Hamming metric binary Reed--Muller codes are shortly discussed). Moreover, we note the
extension $\LL/\K$ as a subscript and omit the mention of a  set of
generators $\TH$. This choice is motivated by the fact that we will
handle various field extensions and hence it will be useful to keep
track of which extension we are considering.
\end{rem}

 \section{Recursive structure of binary rank metric Reed--Muller
  codes}\label{Binary_rankRM}
 In this section, we show a recursive structure of
  binary rank metric Reed--Muller codes analogous to the classical case
  of Hamming metric Reed--Muller codes. First, to motivate this work, we recall the
  recursive structure of classical Reed--Muller codes in the Hamming metric setting and
  how this structure is used for decoding.

  \subsection{The Hamming case}\label{subsec:classical_RM}
   Recall that the binary Reed--Muller codes of order
    $r$ and type $m$ are defined as the evaluation of polynomials in
    $\F_2[x_1, \ldots, x_m]$ at all the vectors in $\F_2^{m}$. For a fixed
    ordering of the $2^m$ elements of $\F_2^m$ in the reverse lexicographic
    order, the code $\RM_2{(r,m)}$ is of the form
$$\{(f(p_1), \ldots, f(p_{2^m})): f \in \F_2[x_1, \ldots, x_m], \deg \,
f \leq r \}.$$ A crucial structure of these codes comes from the
observation that any polynomial $f \in \F_2[x_1, x_2, \ldots, x_{m}]$
of degree $r$ can always be decomposed into two parts:
\[f(x_1,\ldots,x_m) = g(x_1,\ldots,x_{m-1})+x_m
  h(x_1,\ldots,x_{m-1}),\] where $\deg g \leq r$ and
$\deg h \leq r-1$.
A consequence of these observations is
the following recursive structure of binary Reed--Muller codes, first
studied by Plotkin.

\begin{lem}[Plotkin, 1960]\label{recursivestructure}
  When sorting the elements of $\F_{2}^m$ in the lexicographic order,
  the binary Reed--Muller code $\RM_2(r,m)$ can be constructed
  recursively using smaller order Reed--Muller codes: 
\begin{equation*}
{\RM_2(r,m) = \{(\uv ~|~ \uv+\vv) ~\colon~ \uv \in \RM_2(r,m-1),\, \vv
  \in \RM_2(r-1,m-1) \}.}
\end{equation*}

\end{lem}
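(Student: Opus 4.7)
The plan is to exploit the standard polynomial decomposition recalled just before the lemma. Since only values on $\F_2^m$ matter, one may without loss of generality restrict to multilinear polynomials, and any such $f$ of degree at most $r$ splits uniquely as
\[
f(x_1,\ldots,x_m) \;=\; g(x_1,\ldots,x_{m-1}) \;+\; x_m\, h(x_1,\ldots,x_{m-1}),
\]
obtained by grouping the monomials of $f$ according to whether they involve $x_m$; factoring $x_m$ out forces $\deg g \leq r$ and $\deg h \leq r-1$.

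For the forward inclusion, I would fix the ordering of $\F_2^m$ so that its first $2^{m-1}$ elements are the tuples with $x_m=0$ and the last $2^{m-1}$ are those with $x_m=1$, with the remaining $m-1$ coordinates listed in the same order in both halves. On the first block, $f$ restricts to $g$, so the corresponding part of the evaluation vector is the evaluation of $g$ on $\F_2^{m-1}$, which lies in $\RM_2(r,m-1)$; call it $\uv$. On the second block, $f$ restricts to $g+h$, so the corresponding part equals $\uv+\vv$, where $\vv \in \RM_2(r-1,m-1)$ is the evaluation of $h$.

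For the reverse inclusion, given any $\uv \in \RM_2(r,m-1)$ and $\vv \in \RM_2(r-1,m-1)$, I pick polynomial witnesses $g$ and $h$ of degrees at most $r$ and $r-1$ respectively and set $f = g + x_m h$. Then $\deg f \leq r$, and evaluating $f$ under the same ordering produces the vector $(\uv \mid \uv+\vv)$, so the vector belongs to $\RM_2(r,m)$.

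The only real bookkeeping point — and the nearest thing to an obstacle — is verifying that the map $(g,h) \mapsto g + x_m h$ is a bijection between pairs with the prescribed degree bounds and multilinear polynomials in $m$ variables of total degree at most $r$, and that the chosen ordering actually separates $\F_2^m$ into the cosets $\{x_m=0\}$ and $\{x_m=1\}$ with matching enumeration of $\F_2^{m-1}$ on each side. Once these are in place, both inclusions follow by direct substitution and there is nothing deeper to check.
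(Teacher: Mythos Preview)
Your proposal is correct and follows essentially the same approach the paper indicates: the paper does not give a formal proof of this classical lemma but, immediately before stating it, records the key observation that any $f\in\F_2[x_1,\dots,x_m]$ of degree $\le r$ decomposes as $g(x_1,\dots,x_{m-1})+x_m\,h(x_1,\dots,x_{m-1})$ with $\deg g\le r$ and $\deg h\le r-1$, and declares the lemma a consequence. Your write-up simply fleshes out that consequence with the two inclusions and the bookkeeping about the ordering of $\F_2^m$, which is exactly what is needed.
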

 The recursive structure permits to prove that the minimum
distance of $\RM_2(r,m)$ is $2^{m-r}$ and leads to an
efficient decoding algorithm correcting up to half their minimum
distance in quasi-linear time in the block length \cite{ASAM}.
This decoder is shortly discussed in the beginning of Section \ref{decoding}.

\subsection{The recursive structure of rank metric Reed--Muller codes}

 To explore the recursive structure of the $\RM_{\LL/\K}(r,m)$, first we
note that if $\GG \cong (\ZZ/2\ZZ)^m$, the Galois extension $\LL/\K$
can be of the following two types.

\begin{lem}\label{extensiontype}
  Let $\LL/\K$ be a Galois extension with
  $\GG = \Gal(\LL/\K) \cong (\ZZ/{2\ZZ})^m$ where $m \ge 1$. Then
  there exist $\a_1, \dots, \a_m \in \LL$ such that
  $\LL = \K(\a_1, \ldots, \a_m)$ where $\K(\a_1),\dots, \K(\a_m)$ are
  quadratic extensions over $\K$. Moreover, for any $i$,
  the minimal polynomial of $\a_i$ has the following shape:
\[
\begin{cases}
    X^2 - a_i \text{ for some } a_i \in \K \text{ when } {\rm Char } (\K) \neq 2,\\
    X^2 +X + a_i \text{ for some } a_i \in \K \text{ when } {\rm Char } (\K) = 2.
\end{cases}
\]
\end{lem}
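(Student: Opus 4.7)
The plan is to use the Galois correspondence to peel off $m$ quadratic subextensions and then put each one into the desired normal form using either completing the square or Artin--Schreier theory. Fix a set of generators $\theta_1, \ldots, \theta_m$ of $\GG \cong (\ZZ/2\ZZ)^m$. For each $i \in \{1, \ldots, m\}$, let
\[
H_i \eqdef \langle \theta_1, \ldots, \theta_{i-1}, \theta_{i+1}, \ldots, \theta_m\rangle,
\]
which is a subgroup of index $2$ in $\GG$. By the Galois correspondence, the fixed field $\K_i \eqdef \LL^{H_i}$ is a degree-$2$ (in particular separable) extension of $\K$. Moreover, since every non-identity element of $(\ZZ/2\ZZ)^m$ has at least one non-zero coordinate, we have $\bigcap_{i=1}^m H_i = \{1\}$, so that the compositum $\K_1 \cdots \K_m$ equals $\LL^{\{1\}} = \LL$.

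Next I would produce, for each $i$, a generator $\alpha_i$ of $\K_i$ over $\K$ with minimal polynomial of the prescribed form. Pick any $\beta_i \in \K_i \setminus \K$; since $[\K_i : \K] = 2$, its minimal polynomial can be written as $X^2 + bX + c$ with $b, c \in \K$. In characteristic different from $2$, I would set $\alpha_i \eqdef \beta_i + b/2$. Then $\alpha_i^2 = b^2/4 - c \in \K$, so its minimal polynomial is $X^2 - a_i$ with $a_i \eqdef b^2/4 - c$, and clearly $\K(\alpha_i) = \K(\beta_i) = \K_i$. In characteristic $2$, separability of $\K_i/\K$ (which follows from $\LL/\K$ being Galois) forces $b \neq 0$, because otherwise the minimal polynomial would be $X^2 + c$, whose only root in $\overline{\K}$ is a double root. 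I would then set $\alpha_i \eqdef \beta_i / b$, which satisfies $\alpha_i^2 + \alpha_i + c/b^2 = 0$, yielding a minimal polynomial of the form $X^2 + X + a_i$ with $a_i \eqdef c/b^2 \in \K$.

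Combining the two steps, the elements $\alpha_1, \ldots, \alpha_m$ satisfy $\K(\alpha_i) = \K_i$ and therefore
\[
\K(\alpha_1, \ldots, \alpha_m) = \K_1 \cdots \K_m = \LL,
\]
which is the desired conclusion. The one delicate point is the characteristic $2$ case, where one has to rule out the inseparable shape $X^2 + c$ for the minimal polynomial of $\beta_i$; this is where the assumption that $\LL/\K$ is Galois (hence separable) is genuinely used. Everything else is a routine application of the Galois correspondence and either completing the square or the Artin--Schreier normalisation.
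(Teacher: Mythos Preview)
Your proof is correct and follows essentially the same route as the paper's: define the index-$2$ subgroups $H_i$ by omitting one generator, use the Galois correspondence and $\bigcap_i H_i = \{1\}$ to see that the compositum of the fixed fields is $\LL$, and then normalise each quadratic generator by completing the square (char $\neq 2$) or by the Artin--Schreier substitution $\beta_i \mapsto \beta_i/b$ (char $2$). Your explicit justification that $b \neq 0$ in characteristic $2$ via separability is a point the paper leaves implicit.
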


\begin{IEEEproof}
  Let $\GG = \<\th_1, \ldots, \th_m \>$ and $\LL^{H_i}$ is the fixed
  field of
  $H_i = \langle \th_1, \ldots, \widehat{\th_i}, \ldots, \th_m
  \rangle$ for $i = 1, \ldots, m$. Since $\bigcap_{i=1}^{m} H_i = Id$,
  the compositum of the $\LL^{H_i}$'s in $\LL$ is $\LL$
  itself. Considering the cardinalities of the $H_i$'s, we have
  $\LL^{H_i} = \K(\a_i)$  where the minimal polynomial
  of $\a_i$ is a quadratic polynomial. If
  Char $\K \neq 2$, then by completing the square, the minimal polynomial of
  $\a_i$ over $\K$
  becomes $X^2 - a_i$ for some $a_i \in \K$.
  Similarly, when $\text{Char}\ \K = 2$, the minimal polynomial of $\alpha_i$
  over $\K$ is a quadratic polynomial $X^2 + sX + t$ for some $s, t \in \K$.
  Then, replacing $\alpha_i$ by $s^{-1}\alpha_i$ we get a minimal polynomial
  $X^2 + X + s^{-2}t$.
\end{IEEEproof}
The case of characteristic $\neq 2$ will be referred to as the \emph{Kummer} case
since all the minimal intermediary extensions are Kummer ones and the
case of characteristic $2$ will be referred to as the \emph{Artin--Schreier}
case.

\subsubsection{The Kummer case}
First, we consider the case $\text{Char} \,\K \neq 2$, \emph{i.e.}, a Kummer
type extension $\LL/\K$.  We set $\LL = \K (\a_1, \dots, \a_m)$ where
each one of the $\a_i$'s satisfies $\a_i^2 = a_i$ for some $a_i \in \K$.
We also introduce the intermediary extension
\[
  \LL_{m-1} \eqdef \K (\a_1, \dots, \a_{m-1}).
\]
The Galois group $\GG$ has generators $\theta_1, \dots, \theta_m$ which satisfy
\[
  \forall i,j \in \{1,\dots, m\},\quad
  \theta_i(\alpha_j) = (-1)^{\delta_{ij}} \alpha_j,
\]
where $\delta_{ij}$ denotes the Kronecker Delta. Finally, we set
\[
  \GG_{m-1} \eqdef \langle \theta_1, \dots, \theta_{m-1}\rangle
\]
and note that $\Gal (\LL_{m-1}/\K) \cong \GG_{m-1}$.
This is summarized in the diagram below.

  \begin{center}
    \begin{tikzpicture}
      \node at (2.35,3) {$\LL = \K (\a_1, \dots, \a_m)$}; \node at
      (2.85,1.5) {\(\LL_{m-1} = \K (\a_1, \dots, \a_{m-1})\)}; \node at
      (1,0) {\(\K\)}; \draw[-] (1,.3) to (1,1.2) ; \draw[-] (1,1.8) to
      (1,2.7) ; \draw[-] (1.1,1.8) .. controls (1.2,2.1) and (1.2,2.4)
      .. (1.1,2.7); \node at (1.6,2.25)
      {$\scriptstyle{\<\theta_m\>}$}; \draw[-] (1.1,.3) .. controls
      (1.2,.6) and (1.2,.9) .. (1.1,1.2); \node at (2.2,.7)
      {$\scriptstyle{\GG/ \<\theta_m\> \cong\GG_{m-1}}$}; \draw[-] (.9,.3)
      .. controls (.6,.6) and (.6,2.4) .. (.9,2.7); \node at (0.3,1.5)
      {$\scriptstyle{\GG}$};
    \end{tikzpicture}
  \end{center}

  In addition, we will equip $\LL$ and its intermediary subfields with
  a specific $\K$--basis that is iteratively constructed as follows. We take
  $\mathcal{B}_1 = (1, \alpha_1)$ to be a $\K$--basis of $\K(\alpha_1)$.
  Then, for any $i\in \{1,\dots, m-1\}$ the basis $\mathcal{B}_{i+1}$ is defined
  as $\mathcal{B}_i \cup \mathcal{B}_i\alpha_{i+1}$. For the sake of convenience, we set $\mathcal{B} \eqdef \mathcal{B}_m$.

The statement to follow describes the recursive structure of binary rank metric 
Reed--Muller codes. In the Hamming setting, the vectors in the codes had
a recursive structure. In the rank setting the recursive structure
can be read via their matrix representation.

\medskip

 \begin{pro}[Recursive structure in the Kummer case]\label{rec_structure}
   Let ${\rm Char ~} \K \neq 2$, $\Gal(\LL/\K) \cong (\ZZ/{2\ZZ})^m$ and $r$ be an 
   integer and $m$ be a nonnegative integer. The code $\RM_{\LL/\K}(r,m)$ of Definition~\ref{def:RM_codes} has the
   following recursive structure.
   \begin{itemize}
    \item If $r\leq -1$ then it is the zero code, composed only with the
       $2^m \times 2^m$ zero matrix. 
    \item Otherwise, it is defined as:
      \begin{equation*}
        \left\{\begin{pmatrix}
         {\Am_0+\Bm_0 } &\quad &{a_m( \Am_1-  \Bm_1)}\\ 
                && \\
          { \Am_1+ \Bm_1} &\quad &{\Am_0 -\Bm_0}
          \end{pmatrix} \quad\colon\quad  \begin{aligned}
             &\Am_i \in \RM_{{\LL_{m-1}}/\K}(r, m-1), \notag \\
             &\Bm_i \in \RM_{{\LL_{m-1}}/\K}(r-1, m-1)
            \end{aligned}\right\},
          \end{equation*}
           where $a_m \eqdef \alpha_m^2 \in \K$.
          \end{itemize}
\end{pro}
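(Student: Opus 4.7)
The plan is to exploit simultaneously the two direct-sum decompositions $\GG = \GG_{m-1}\sqcup \GG_{m-1}\theta_m$ of the Galois group and $\LL = \LL_{m-1}\oplus \alpha_m\LL_{m-1}$ of the top field, together with the block structure $\mathcal{B}_m = \mathcal{B}_{m-1}\cup \alpha_m\mathcal{B}_{m-1}$ of the basis, and to read off the matrix representation of an arbitrary $f \in \RM_{\LL/\K}(r,m)$ by evaluating its action separately on $\LL_{m-1}$ and on $\alpha_m\LL_{m-1}$.

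Concretely, I would first pick $f = \sum_\gg f_\gg\gg$ and decompose each $\gg \in \GG$ uniquely as $\hh\theta_m^{\varepsilon}$ with $\hh \in \GG_{m-1}$ and $\varepsilon \in \{0,1\}$, and each coefficient as $f_\gg = u_\gg + v_\gg\alpha_m$ with $u_\gg, v_\gg \in \LL_{m-1}$. Grouping the four resulting families of scalars produces four auxiliary elements of $\LL_{m-1}[\GG_{m-1}]$:
\[
  \widetilde{A}_0 \eqdef \sum_\hh u_\hh\hh,\quad \widetilde{B}_0 \eqdef \sum_\hh u_{\hh\theta_m}\hh,\quad \widetilde{A}_1 \eqdef \sum_\hh v_\hh\hh,\quad \widetilde{B}_1 \eqdef \sum_\hh v_{\hh\theta_m}\hh.
\]
Since $\wt(\hh\theta_m) = \wt(\hh)+1$, the support condition $\wt(\gg)\leq r$ in the definition of $\RM_{\LL/\K}(r,m)$ immediately forces $\widetilde{A}_i \in \RM_{\LL_{m-1}/\K}(r,m-1)$ and $\widetilde{B}_i \in \RM_{\LL_{m-1}/\K}(r-1,m-1)$. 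Their matrix representations in $\mathcal{B}_{m-1}$ then provide candidates $\Am_i,\Bm_i$ of the types prescribed by the statement.

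The heart of the proof is a direct evaluation of $f$ on each half of $\mathcal{B}_m$. Using that $\theta_m$ fixes $\LL_{m-1}$ pointwise and every $\hh \in \GG_{m-1}$ fixes $\alpha_m$ and stabilises $\LL_{m-1}$, a short calculation gives, for $x\in\LL_{m-1}$,
\[
  f(x) = (\widetilde{A}_0+\widetilde{B}_0)(x) + \alpha_m\,(\widetilde{A}_1+\widetilde{B}_1)(x),
\]
while the identities $\theta_m(\alpha_m) = -\alpha_m$ and $\alpha_m^2 = a_m$ together produce
\[
  f(x\alpha_m) = a_m\,(\widetilde{A}_1-\widetilde{B}_1)(x) + \alpha_m\,(\widetilde{A}_0-\widetilde{B}_0)(x).
\]
Expanding these outputs on $\mathcal{B}_{m-1}$ and $\alpha_m\mathcal{B}_{m-1}$ places $\Am_0+\Bm_0$ in the top-left block, $\Am_1+\Bm_1$ in the bottom-left, $a_m(\Am_1-\Bm_1)$ in the top-right, and $\Am_0-\Bm_0$ in the bottom-right, proving the inclusion ``$\subseteq$''.

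For the reverse inclusion I would run the construction backwards: given an admissible quadruple, lift the $\Am_i,\Bm_i$ to the corresponding $\widetilde{A}_i,\widetilde{B}_i$, define $f_\hh \eqdef u_\hh + v_\hh\alpha_m$ and $f_{\hh\theta_m}\eqdef u_{\hh\theta_m}+v_{\hh\theta_m}\alpha_m$ by reading off the coefficients, and observe that the support constraints on $\widetilde{A}_i$ and $\widetilde{B}_i$ translate directly into $f_\gg = 0$ whenever $\wt(\gg)>r$, so that $f \in \RM_{\LL/\K}(r,m)$. Alternatively, one can simply count $\K$-dimensions on both sides and invoke Pascal's identity $\binom{m}{i} = \binom{m-1}{i}+\binom{m-1}{i-1}$ to conclude via injectivity of the forward map. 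The main obstacle is purely bookkeeping: tracking the minus signs coming from $\theta_m(\alpha_m) = -\alpha_m$, placing the $a_m = \alpha_m^2$ factor in the correct block, and being consistent with the ordering $\mathcal{B}_m = \mathcal{B}_{m-1}\cup\alpha_m\mathcal{B}_{m-1}$ so that the four blocks end up exactly in the positions prescribed by the statement.
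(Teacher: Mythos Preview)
Your proposal is correct and follows essentially the same approach as the paper: both decompose $\GG = \GG_{m-1}\sqcup \GG_{m-1}\theta_m$ and $\LL = \LL_{m-1}\oplus \alpha_m\LL_{m-1}$, extract the four auxiliary elements $A_0,A_1,B_0,B_1 \in \LL_{m-1}[\GG_{m-1}]$ with the induced degree constraints, and then read off the block structure by evaluating separately on $\LL_{m-1}$ and on $\alpha_m\LL_{m-1}$ using $\theta_m(\alpha_m)=-\alpha_m$ and $\alpha_m^2=a_m$. Your treatment is in fact slightly more complete, since you argue the reverse inclusion explicitly (either by reversing the construction or by a dimension count), whereas the paper leaves this direction implicit in the bijectivity of the coefficient splitting.
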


\begin{IEEEproof}
  Let us give a decomposition of elements $F = \sum_{\gg \in \GG} f_\gg \gg$ of $\LL[\GG]$. 
  By considering separately $\theta_m$ and noting that $\GG$ splits
  into the disjoint union of the two cosets $\GG_{m-1}$ and
  $\GG_{m-1} \theta_m$, we can split $F$ as
\[
  \begin{aligned}
    F  
      & = \sum_{\gg \in \GG_{m-1}} f_\gg \gg + \left(\sum_{\gg \in \GG_{m-1}} f_{\gg \theta_m} \gg\right)\theta_m.
    \end{aligned}
  \]

  Next, $\LL$ splits into $ \LL = \LL_{m-1} \oplus \LL_{m-1} \alpha_m.$
Thus, for any $a \in \LL$, let  $a = a^0 + a^1 \alpha_m$ with $a^{0}, a^{1} \in \LL_{m-1}$.
  Then, we get a splitting of $F$ as
  \begin{equation}\label{eq:recursive_1}
    F = \sum_{\gg \in \GG_{m-1}} f_\gg^0 \gg +
    \alpha_m  \sum_{\gg \in \GG_{m-1}} f_\gg^1 \gg + \left(\sum_{\gg \in \GG_{m-1}} f^0_{\gg \theta_m} \gg \right)\theta_m 
     + \alpha_m\left(\sum_{\gg \in \GG_{m-1}} f^1_{\gg \theta_m} \gg \right)\theta_m.
  \end{equation}
  Set
  \[
    A_0 = \sum_{\gg \in \GG_{m-1}} f_\gg^0 \gg,  \qquad A_1 =  \sum_{\gg \in \GG_{m-1}} f_\gg^1\gg,\qquad
    B_0 = \sum_{\gg \in \GG_{m-1}} f^0_{\gg \theta_m} \gg, \quad \text{and} \quad   B_1 = \sum_{\gg \in \GG_{m-1}} f^1_{\gg \theta_m} \gg,
  \]
then the degree constraint on $F \in \RM_{\LL/\K}(r,m)$
entails
\[
    A_0, A_1 \in \RM_{\LL_{m-1}/\K}(r, m-1) \quad \text{and} \quad
    B_0, B_1 \in \RM_{\LL_{m-1}/\K}(r-1, m-1).
  \]

  Equation~(\ref{eq:recursive_1}) can then be rewritten as 

  \begin{equation}\label{eq:recursive_2}
  \setlength{\belowdisplayshortskip}{3pt}
    F = A_0 + \alpha_m A_1 +(B_0 + \alpha_m B_1)\theta_m.
  \end{equation}  
Recall that we chose a $\K$--basis $\mathcal B$ of $\LL$ such that
  $\mathcal B = \mathcal{B}_{m-1} \cup \mathcal{B}_{m-1}\alpha_m$ where
  $\mathcal{B}_{m-1}$ is a $\K$--basis of $\LL_{m-1}$.
  By
  denoting $\Am_0, \Am_1, \Bm_0, \Bm_1$ the respective representations
  of $A_0, A_1, B_0, B_1$ in $\mathcal{B}_{m-1}$, we get the desired
  matrix form of $F$ in $\mathcal B$.
  
Indeed, as ${\theta_m}_{|\LL_{m-1}} = \text{id}_{\LL_{m-1}}$ and thus 
\[
  F_{|\LL_{m-1}} = (A_0+B_0) + \alpha_m (A_1 + B_1), 
\]
the left-hand half of the matrix in the statement corresponds to
$F_{|\LL_{m-1}}$.
Similarly, the right-hand half corresponds to the restriction of $F$ to $\LL_{m-1} \alpha_m$. To see this, let  $u \alpha_m\in \LL_{m-1} \alpha_m$
with $ u \in \LL_{m-1}$, which yields
\[
  F(u\alpha_m) =  a_m (A_1(u) - B_1(u)) + \alpha_m (A_0(u) - B_0(u))\]
and thus the shape. 
\end{IEEEproof}

For the forthcoming complexity analysis, the structure of the
generating matrix of the vector code in $\LL^N$ obtained by
evaluations at the basis $\mathcal B$ will be used.

\begin{pro}\label{prop:struct_gen_mat_Kummer}
  Consider the code $\RM_{\LL /\K}(r,m)$ as the $\LL$-subspace of $\LL^N$
  obtained by evaluation of the $\TH$--polynomials at the elements of
  $\mathcal B$.  Denote by $\Gm(r,m) \in \LL^{k \times N}$ the
  generator matrix of $\RM_{\LL/\K}(r,m)$ whose rows correspond to the
  elements of $\Gm$ of $\TH$--degree $\leq r$ sorted w.r.t the reverse
  lexicographic order.
  Then,
  \[
    \Gm (r,m) =
    \begin{pmatrix}
      \Gm(r,m-1) & \alpha_m \Gm(r,m-1) \\
      \Gm(r-1,m-1) & -\alpha_m \Gm(r-1,m-1) \\      
    \end{pmatrix}.
  \]
\end{pro}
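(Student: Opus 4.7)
The plan is to unwind the definitions of $\Gm(r,m)$, $\mathcal B$ and the action of $\GG$, then read off the four blocks directly by evaluating the rows (indexed by $\TH$--monomials of degree $\leq r$) on the columns (indexed by the $\K$--basis $\mathcal B = \mathcal B_{m-1} \cup \mathcal B_{m-1}\alpha_m$).

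First I would split the rows of $\Gm(r,m)$ into two blocks according to whether the $\TH$--monomial involves $\theta_m$ or not. The reverse lexicographic ordering is chosen so that all monomials with $i_m = 0$ come first and those with $i_m = 1$ come second. A monomial of the first type is an element of $\LL[\GG_{m-1}]$ of $\TH$--degree $\leq r$, so it indexes a row of $\Gm(r,m-1)$. A monomial of the second type is of the form $P\theta_m$ with $P \in \LL[\GG_{m-1}]$ of $\TH$--degree $\leq r-1$, so $P$ indexes a row of $\Gm(r-1,m-1)$. The crucial observation is the Galois action: for any $j < m$, $\theta_j$ fixes $\alpha_m$, while $\theta_m$ sends $\alpha_m$ to $-\alpha_m$ and fixes $\LL_{m-1}$ pointwise.

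Next I would compute the four blocks by direct evaluation. For $P \in \LL[\GG_{m-1}]$ and $\beta \in \mathcal B_{m-1}$, the identities $P(\beta)$ (computed in $\LL[\GG]$) equals $P(\beta)$ (computed in $\LL[\GG_{m-1}]$), and $P(\beta\alpha_m) = \sum_\gamma a_\gamma \gamma(\beta)\gamma(\alpha_m) = P(\beta)\,\alpha_m$, yield the top-left block $\Gm(r,m-1)$ and top-right block $\alpha_m\Gm(r,m-1)$. For the bottom rows, indexed by $P\theta_m$, we compute $(P\theta_m)(\beta) = P(\theta_m(\beta)) = P(\beta)$ since $\theta_m$ fixes $\LL_{m-1}$, giving the bottom-left block $\Gm(r-1,m-1)$; and $(P\theta_m)(\beta\alpha_m) = P(\theta_m(\beta)\theta_m(\alpha_m)) = -P(\beta)\alpha_m$, giving the bottom-right block $-\alpha_m\Gm(r-1,m-1)$.

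The only thing to double--check is the row ordering: the reverse lexicographic order on tuples $(i_1,\dots,i_m)$ with $\sum i_j \leq r$ must induce, within the sub--block $i_m=0$ (resp. $i_m=1$), the reverse lex order on the reduced tuples $(i_1,\dots,i_{m-1})$ with $\sum i_j \leq r$ (resp. $\leq r-1$), so that the two sub--blocks coincide with $\Gm(r,m-1)$ and $\Gm(r-1,m-1)$ as defined inductively. This is essentially the only combinatorial subtlety and is the main obstacle, but it is an immediate consequence of how reverse lex processes the last coordinate first. Once this is established, the proposition follows by assembling the four blocks.
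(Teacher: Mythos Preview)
Your proposal is correct and follows essentially the same approach as the paper: split the $\TH$--polynomial according to the presence of $\theta_m$, split the evaluation points as $\mathcal B = \mathcal B_{m-1} \cup \mathcal B_{m-1}\alpha_m$, and use that $\GG_{m-1}$ fixes $\alpha_m$ while $\theta_m$ negates it. The paper phrases this via a general $F = F_0 + F_1\theta_m$ rather than row-by-row on monomials, and does not explicitly verify the reverse-lex ordering compatibility you mention, but the underlying computation is identical.
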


\begin{IEEEproof}
  Any element of the code is obtained by the evaluation at the elements of
  $\mathcal{B}$ of a $\TH$--polynomial
  \[F = \sum_{\gg \in \GG} f_\gg \gg = \sum_{\gg \in \GG_{m-1}} f_\gg \gg + \left(\sum_{\gg \in \GG_{m-1}} f_{\gg \theta_m}\gg\right)\theta_m.\]
  Hence
  \[
    F = F_0 + F_1 \theta_m,
  \]
  where $F_0, F_1 \in \LL [\GG_{m-1}]$ of respective $\TH$--degrees
  $\leq r$ and $\leq r-1$.  Since
  $\mathcal B = \mathcal B_{m-1} \cup \mathcal{B}_{m-1}\alpha_m$, denoting by
  $\text{ev}_{\mathcal{B}}$ the evaluation at the elements of $\mathcal B$,
  we get:
  \begin{align*}
    \text{ev}_{\mathcal B}(F) &= (\text{ev}_{\mathcal{B}_{m-1}}(F_0) +
                                \text{ev}_{\mathcal{B}_{m-1}}(F_1) ~|~ \alpha_{m}\text{ev}_{\mathcal{B}_{m-1}}(F_0)  - \alpha_{m} \text{ev}_{\mathcal{B}_{m-1}}(F_1))\\
                              &=  (\text{ev}_{\mathcal{B}_{m-1}}(F_0)  ~|~ \alpha_{m}\text{ev}_{\mathcal{B}_{m-1}}(F_0) )\ +\ ( \text{ev}_{\mathcal{B}_{m-1}}(F_1)~|~ - \alpha_{m} \text{ev}_{\mathcal{B}_{m-1}}(F_1)),
  \end{align*}
  which yields the result.
\end{IEEEproof}

Similar to $\Gm (r,m)$ in
Proposition~\ref{prop:struct_gen_mat_Kummer}, the code also admits a
parity check matrix $\Hm(r,m)$ with a recursive structure, which will
be used in the sequel for decoding in calculating the syndrome in the
fastest possible way. Therefore, we are interested in a similar
structure on parity--check matrices.

\begin{pro}\label{pro:recursive_dual}
  The code
  $\RM_{\LL/\K}(r,m)^{\perp}$ has a generator matrix
  $\Hm(r,m)$ with the recursive structure
\begin{equation}\label{eq:dual_kummer}
  \Hm(r,m) =
  \begin{pmatrix}
    \Hm (r,m-1) & \alpha_m^{-1}\Hm (r,m-1)\\
    \Hm (r-1,m-1) & -\alpha_m^{-1}\Hm (r-1,m-1)
  \end{pmatrix}. 
\end{equation}
\end{pro}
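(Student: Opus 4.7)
The plan is to proceed by induction on $m$, following the pattern of Proposition \ref{prop:struct_gen_mat_Kummer}. The base case (small $m$, say $m=1$) is a direct verification. For the inductive step, three things must be checked: that the matrix $\Hm(r,m)$ defined by (\ref{eq:dual_kummer}) has the correct number of rows, that its rows all lie in $\RM_{\LL/\K}(r,m)^\perp$, and that its rows are $\LL$-linearly independent; these together imply that it is a generator matrix of the full dual.

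The dimension count follows from Pascal's identity applied to the binomial formula for $\dim \RM_{\LL/\K}(r,m)$, yielding
\[
\dim_\LL \RM_{\LL/\K}(r,m) = \dim_{\LL_{m-1}} \RM_{\LL_{m-1}/\K}(r,m-1) + \dim_{\LL_{m-1}} \RM_{\LL_{m-1}/\K}(r-1,m-1),
\]
and hence the corresponding additivity for the codimensions, which matches the row count of the proposed $\Hm(r,m)$.

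For the orthogonality, I would expand $\Gm(r,m)\Hm(r,m)^\top$ block by block, using the recursive form of $\Gm(r,m)$ from Proposition \ref{prop:struct_gen_mat_Kummer}. The two off-diagonal blocks cancel identically thanks to the opposing signs between $\alpha_m$ in $\Gm$ and $\alpha_m^{-1}$ in $\Hm$, while the two diagonal blocks simplify to $2\,\Gm(r,m-1)\Hm(r,m-1)^\top$ and $2\,\Gm(r-1,m-1)\Hm(r-1,m-1)^\top$, which vanish by the inductive hypothesis.

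For the linear independence, I would consider a vanishing $\LL$-linear combination of the rows of $\Hm(r,m)$; reading off the left and right column blocks and then adding or subtracting them (using that $2 \neq 0$ in $\K$, which is available in the Kummer case) separates the contributions coming from the rows of $\Hm(r,m-1)$ and from the rows of $\Hm(r-1,m-1)$. Then, decomposing each coefficient $\lambda \in \LL$ as $\lambda = \lambda^0 + \alpha_m \lambda^1$ with $\lambda^0, \lambda^1 \in \LL_{m-1}$ and exploiting the direct sum $\LL = \LL_{m-1} \oplus \alpha_m \LL_{m-1}$ reduces the problem to $\LL_{m-1}$-linear independence of the rows of $\Hm(r,m-1)$ and of $\Hm(r-1,m-1)$, which is the inductive hypothesis. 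The main delicate point will be this linear-independence step, where the coefficients must be carefully split across the quadratic extension $\LL/\LL_{m-1}$; the rest is a routine block computation in which the symmetry between $\alpha_m$ and $\alpha_m^{-1}$ does all the work.
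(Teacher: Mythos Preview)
Your proposal is correct and follows essentially the same approach as the paper: the paper's proof is the one-line ``Reasoning by induction on $m$, we prove that $\Gm(r,m)\Hm(r,m)^\top = 0$,'' which is precisely your orthogonality step, with the dimension count and linear independence left implicit. Your write-up is simply more complete, making explicit the row count via the Pascal-type identity and the full-rank argument via the splitting $\LL = \LL_{m-1}\oplus \alpha_m\LL_{m-1}$; both are the natural checks the paper silently assumes.
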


\begin{proof}
Reasoning by induction on $m$, we prove that
$\Gm (r,m) \Hm(r,m)^\top = 0$.
\end{proof}

\begin{rem}\label{rem:C(s,s)}
  Also, for the forthcoming fast syndrome calculation, we need to have
  a recursive structure for $\Hm (m,m-1)$ whose rows actually span
  $\LL^N$ but which will appear as a sub-block of $\Hm (m,m)$. For
  this, we take the same descriptions as (\ref{eq:dual_kummer}) and
  (\ref{eq:dual_AS}) respectively by setting
  $\Hm(m, m-1) \eqdef \Hm (m-1,m-1)$.  Therefore, the description of
  the matrices (\ref{eq:dual_kummer}) and (\ref{eq:dual_AS}) holds
  even for $r \geq m$. This remark also holds in the Artin--Schreier
  case to follow.
\end{rem}

\subsubsection{The Artin--Schreier case}

\begin{pro}[Recursive structure, the Artin--Schreier case]\label{AS-structure}
  Let $\text{Char } \K = 2$ and $\LL/\K$ be a Galois extension with
  $\Gal(\LL/K) \cong (\ZZ/{2\ZZ})^m$ and $r<m$ be nonnegative
  integers. Then $\RM_{\TH}(r,m)$ has the following recursive
  structure. If $r=-1$ then it is the zero code, composed only with the
  $2^m \times 2^m$ zero matrix. Otherwise, it is defined as:
\begin{equation}\label{eq:rec_struct_AS_case}
\left\{\scriptstyle{\begin{pmatrix}
{\Am_0+\Bm_0 } &\quad &{a_m( \Am_1+  \Bm_1) +\Bm_0}\\ 
 && \\
{ \Am_1+ \Bm_1} &\quad &{\Am_0 +\Am_1 + \Bm_0}
\end{pmatrix}} \quad \colon \quad \begin{aligned}
    & \Am_i \in \RM_{{\LL_{m-1}}/\K}(r, m-1), \notag \\&\Bm_i \in \RM_{{\LL_{m-1}}/\K}(r-1, m-1)
\end{aligned}\right\},
\end{equation}

where $\LL = \K (\alpha_1, \dots, \alpha_m)$ and $a_m=\alpha_m^2+\alpha_m$.
\end{pro}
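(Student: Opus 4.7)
The plan is to mimic the proof of Proposition~\ref{rec_structure}, adjusting only those places where the Artin--Schreier structure differs from the Kummer one. The three points that change are: (i) the generator $\theta_m$ now satisfies $\theta_m(\alpha_m) = \alpha_m + 1$ instead of $-\alpha_m$, because $\alpha_m$ and $\alpha_m+1$ are the two roots of $X^2+X+a_m$; (ii) the relation $\alpha_m^2 = \alpha_m + a_m$ replaces $\alpha_m^2 = a_m$; and (iii) since $\mathrm{Char}\,\K = 2$, all signs disappear. As in the Kummer case, $\theta_1,\dots,\theta_{m-1}$ fix $\alpha_m$, so any $\gg \in \GG_{m-1}$ acts trivially on $\alpha_m$.

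First I would carry out the same splitting as in Proposition~\ref{rec_structure}: using $\GG = \GG_{m-1} \sqcup \GG_{m-1}\theta_m$ and $\LL = \LL_{m-1} \oplus \LL_{m-1}\alpha_m$, any $F \in \RM_{\LL/\K}(r,m)$ can be written uniquely as
\[
  F = A_0 + \alpha_m A_1 + (B_0 + \alpha_m B_1)\theta_m,
\]
with $A_0,A_1 \in \RM_{\LL_{m-1}/\K}(r,m-1)$ and $B_0,B_1 \in \RM_{\LL_{m-1}/\K}(r-1,m-1)$, where the degree constraint on $B_0,B_1$ comes exactly as before from the fact that their monomials are multiplied by $\theta_m$, which has $\TH$--weight one.

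Next I would evaluate $F$ on the basis $\mathcal{B} = \mathcal{B}_{m-1} \cup \mathcal{B}_{m-1}\alpha_m$. The restriction to $\LL_{m-1}$ is unchanged from the Kummer case because $\theta_m$ acts trivially there, giving the left column $(A_0+B_0,\, A_1+B_1)^\top$. For the right column, for any $u \in \LL_{m-1}$, I would apply each of the four summands of $F$ to $u\alpha_m$ using the two identities
\[
  \alpha_m \cdot (u\alpha_m) = u\alpha_m + u a_m \quad\text{and}\quad \theta_m(u\alpha_m) = u\alpha_m + u,
\]
and the fact that every $\gg \in \GG_{m-1}$ commutes with $\alpha_m$. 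Collecting the coefficients of $1$ and $\alpha_m$ in the resulting expression (and using characteristic two to absorb the $2A_1(u)\alpha_m$ term) should produce $a_m(A_1+B_1)(u) + B_0(u)$ and $(A_0+A_1+B_0)(u)$ respectively, which is exactly the right-hand column of the matrix in~\eqref{eq:rec_struct_AS_case}.

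The main (and fairly minor) obstacle is bookkeeping: unlike the Kummer case, both identities above produce an \emph{additive} correction term (a copy of $u$) which propagates and explains why $\Bm_0$ appears in both the top-right and bottom-right entries, and why $\Am_1$ reappears in the bottom-right entry. The rest is routine verification, together with the converse inclusion obtained by observing that every tuple $(\Am_0,\Am_1,\Bm_0,\Bm_1)$ in the prescribed product of smaller Reed--Muller codes arises from some $F$, by reading the identification $F \leftrightarrow (A_0,A_1,B_0,B_1)$ in reverse.
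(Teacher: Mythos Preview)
Your proposal is correct and follows exactly the approach the paper takes: it simply says the proof is the same as Proposition~\ref{rec_structure} with the single modification $\theta_m(\alpha_m)=\alpha_m+1$, and you have spelled out precisely that computation. One minor bookkeeping slip: the term that vanishes in characteristic~$2$ is $2\alpha_m B_1(u)$ (coming from $\alpha_m B_1\theta_m$ applied to $u\alpha_m$), not $2A_1(u)\alpha_m$; the $A_1$ contribution produces only a single $\alpha_m A_1(u)$.
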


\begin{IEEEproof}
  The proof is very similar to that of Proposition~\ref{rec_structure}
  with the slight difference that $\theta_m(\alpha_m) = \alpha_m+1$.
\end{IEEEproof}

\begin{pro}
  In the same context as Proposition~\ref{prop:struct_gen_mat_Kummer}
  but in the Artin--Schreier case, the ``canonical'' generator matrix
  $\Gm(r,m)$ of $\RM_{\LL / \K}(r,m) \subseteq \LL^N$ has the shape:
  \[
    \Gm (r,m) =
    \begin{pmatrix}
      \Gm (r, m-1)  & \alpha_m \Gm (r, m-1) \\
      \Gm (r-1, m-1)  & (\alpha_m+1) \Gm (r-1, m-1) \\      
    \end{pmatrix}.
  \]
  Moreover, in the same context as
  Proposition~\ref{pro:recursive_dual}, this code has a parity--check
  matrix with shape,
  \begin{equation}\label{eq:dual_AS}
  \Hm(r,m) =
\begin{pmatrix}
    (\alpha_m+1)\Hm (r,m-1) & \Hm (r,m-1)\\
    \alpha_m\Hm (r-1,m-1) & \Hm (r-1,m-1)
  \end{pmatrix}.
\end{equation}
\end{pro}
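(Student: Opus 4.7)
The plan is to treat the two statements separately, in each case recycling the arguments from the Kummer counterparts (Propositions~\ref{prop:struct_gen_mat_Kummer} and~\ref{pro:recursive_dual}), the only differences being the substitution $\theta_m(\alpha_m) = \alpha_m + 1$ and the systematic use of $\mathrm{Char}\,\K = 2$.

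For the generator matrix, I would decompose any $F \in \RM_{\LL/\K}(r,m)$ as $F = F_0 + F_1\theta_m$ with $F_0 \in \LL[\GG_{m-1}]$ of $\TH$-degree $\leq r$ and $F_1 \in \LL[\GG_{m-1}]$ of $\TH$-degree $\leq r-1$, and then evaluate at $\mathcal{B} = \mathcal{B}_{m-1} \cup \mathcal{B}_{m-1}\alpha_m$. For a row indexed by $\gg \in \GG_{m-1}$, the element $\gg$ fixes $\alpha_m$, so evaluation at $\beta \alpha_m$ yields $\gg(\beta)\alpha_m$; this reproduces the top two blocks $\Gm(r,m-1)$ and $\alpha_m \Gm(r,m-1)$ verbatim. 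For a row indexed by $\gg \theta_m$ with $\gg \in \GG_{m-1}$, evaluation at $\beta \alpha_m$ produces $\gg(\theta_m(\beta \alpha_m)) = \gg(\beta)(\alpha_m + 1)$, since $\gg$ fixes both $\alpha_m$ and $1$; this accounts for the factor $(\alpha_m + 1)$ in the lower-right block.

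For the parity-check matrix, I would proceed by induction on $m$, checking first that the dimensions agree via the Pascal-like recursion $\dim \RM_{\LL/\K}(r,m) = \dim \RM_{\LL/\K}(r,m-1) + \dim \RM_{\LL/\K}(r-1,m-1)$, and then verifying block by block that $\Gm(r,m)\Hm(r,m)^\top = 0$. The two diagonal blocks collapse, using $(\alpha_m + 1) + \alpha_m = 1$ in characteristic~$2$, to $\Gm(r,m-1)\Hm(r,m-1)^\top$ and $\Gm(r-1,m-1)\Hm(r-1,m-1)^\top$, both of which vanish by the induction hypothesis. The off-diagonal blocks produce cross-terms of the form $(\alpha_m + \alpha_m)\Gm(r,m-1)\Hm(r-1,m-1)^\top$ and $((\alpha_m + 1) + (\alpha_m + 1))\Gm(r-1,m-1)\Hm(r,m-1)^\top$, both of which vanish because $2\alpha_m = 0$ and $2(\alpha_m + 1) = 0$ in characteristic~$2$.

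The main subtlety relative to the Kummer case lies precisely in these off-diagonal cross-terms: in the Kummer setting the cancellation is $\alpha_m^{-1} - \alpha_m^{-1} = 0$, a characteristic-free identity that does not require $\Gm(r,m-1)\Hm(r-1,m-1)^\top$ to be zero; here the cancellation instead exploits $2 = 0$. I would therefore take care to display the scalar combinations $(\alpha_m+1)x + \alpha_m y$ and $\alpha_m x + (\alpha_m+1) y$ explicitly rather than try to factorise them, so that it is transparent which cancellations use the induction hypothesis and which rely on the characteristic.
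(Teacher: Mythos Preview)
Your proposal is correct and follows essentially the same approach as the paper: the paper leaves this proposition without proof, relying on the Kummer analogues (Propositions~\ref{prop:struct_gen_mat_Kummer} and~\ref{pro:recursive_dual}), and your argument is precisely the adaptation of those proofs with $\theta_m(\alpha_m)=\alpha_m+1$ in place of $\theta_m(\alpha_m)=-\alpha_m$. Your explicit block-by-block computation of $\Gm(r,m)\Hm(r,m)^\top$ and your remark on how the off-diagonal cancellation now uses $2=0$ rather than the characteristic-free identity $1-1=0$ are both accurate and in fact more detailed than what the paper records.
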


\begin{rem}\label{rem:parity-check_AS}
  The parity--check matrices can also be deduced from
  \cite[Cor.~52]{ACLN}. This statement claims that the vector
  representation of the code $\RM_{\LL /\K}(r, m)$ associated to a
  system of generators $\mathbf{\TH}$ and expressed in a basis
  $\mathcal B$ has a dual $\RM_{\LL/\K}(m-r-1,m)$ associated to
  generators $\mathbf{\TH}^{-1}$ and expressed in the dual basis
  $\mathcal{B}^*$  with respect to
the trace inner product $(a,b)\mapsto \text{Tr}_{\LL/\K}(ab)$.

  Since $\GG$ is of $2$--torsion, $\mathbf{\TH} = \mathbf{\TH}^{-1}$.
  Next, one can the following.
  \begin{itemize}
  \item In the Kummer case,
    $\mathcal{B}_1^* = (\frac{1}{2},\frac{1}{2\alpha_1})$ and
    $\mathcal{B}_m^* = \frac{1}{2}\mathcal{B}_{m-1}^* \cup
    \frac{1}{2}\mathcal{B}_{m-1}^* \alpha_m^{-1}$.  This permits to recover
    the expression~(\ref{eq:dual_kummer}).
  \item In the Artin--Schreier case, with
    $\mathcal{B}_2 = (1, \alpha_1, \alpha_2, \alpha_1 \alpha_2)$ we have
    $\mathcal{B}_2^* = ((\alpha_1+1)(\alpha_2+1), \alpha_2+1, \alpha_1+1, 1)$
    and $\mathcal{B}_m^* = \mathcal{B}_{m-1}^* (\alpha_m+1) \cup \mathcal{B}_{m-1}^*$. This
    permits to recover~(\ref{eq:dual_AS}).
  \end{itemize}
\end{rem}

 \section{Decoding}\label{decoding}

\subsection{The recursive decoding in the Hamming case}  
Let us quickly recall the behaviour of the decoder for binary
Reed--Muller codes $\RM_2 (r,m)$ of order $r$ and type $m$ in Hamming
metric, resting on the recursive structure mentioned in Lemma
\ref{recursivestructure}. See for instance \cite{D06}. The algorithm
is recursive: we assume to be able to decode any code
$\RM_2 (r',m')$ for either $r' < r$ or $m'<
m$.  Let
\[\yv = (\uv + \ev_l ~|~ \uv + \vv + \ev_r) \in \F_2^{2^m}\]
 be the received vector
such that $(\uv~|~\uv+\vv) \in \RM_2(r,m)$ and
$(\ev_l ~|~ \ev_r) \in \F_2^{2^m}$ has Hamming weight less or equal to
$2^{m-r-1}-1$. By \emph{folding} the received word \emph{i.e.}
by summing up its two halves, we get
\[(\uv + \ev_l) + (\uv+\vv+\ev_r) = (\vv + \ev_l +\ev_r) \quad \text{with}
  \quad
\wt(\ev_l + \ev_r) \le \wt(\ev_l ~|~ \ev_r) \le 2^{m-r-1}-1.\]

Since $\vv \in \RM_2(r-1,m-1)$, which has minimum distance $2^{m-r}$,
a recursive call to the algorithm permits to recover $\vv$ by decoding
$\vv+\ev_l+\ev_r$. Therefore, we now have to recover $\uv$ from
$(\uv +\ev_l ~|~ \uv + \ev_r)$, where
\[ \text{either}\quad
\wt(\ev_l) \le \frac{2^{m-r-1}-1}{2} \quad \text{or} \quad
\wt(\ev_r) \le \frac{2^{m-r-1}-1}{2}\cdot
\]
Therefore, at least one of the two recursive
calls respectively applied to $\uv+\ev_l$ and $\uv+\ev_r$ will succeed to recover $\uv$.

\subsection{A difficulty when switching to the rank case}\label{ss:difficulty_in_rank}
The aforementioned recursive decoding cannot be adapted to the rank
setting in a straightforward manner. Indeed, in the step of recovering
the part $\uv$, we crucially use the fact that in Hamming metric, when
cutting a weight $t$ vector in two halves $\ev = (\ev_l ~|~\ev_r )$,
then $w_{\text{H}}(\ev) = w_{\text{H}}(\ev_l) + w_{\text{H}}(\ev_r)$.
This property almost never holds in rank metric.
Indeed take $t < N/2$ and a rank $t$ matrix $\Em \in \K^{N\times N}$.
When splitting $\Em$ into two halves
\[
  \Em =
  \begin{pmatrix}
    \Em_l &|& \Em_r
  \end{pmatrix}
\]
with $\Em_l, \Em_r \in \K^{N \times \frac{N}{2}}$, then the most likely situation is that both
$\Em_l, \Em_r$ have rank $t$.

Therefore, in order to adapt the Hamming metric recursive decoder to the rank setting, we actually had to circumvent two difficulties:
\begin{enumerate}
\item find a matrix analogue of folding that permits to get rid of the contribution of the Reed--Muller code with the least minimum distance, namely  the matrices $\Am_0, \Am_1 \in \RM_{\LL/\K}(r, m-1)$ in~(\ref{eq:rec_struct_AS_case});
\item address the aforementioned issue, \emph{i.e.} once the folded codeword is decoded, \emph{i.e.}
  once $\Bm_0, \Bm_1 \in \RM_{\LL/\K}(r-1, m-1)$ are recovered we have to use a completely different approach to recover $\Am_0, \Am_1$.
\end{enumerate}

\subsection{Overview of our algorithm}
We give a recursive decoding algorithm for binary Reed--Muller
codes with rank metric for the Kummer case using the recursive structure identified in Proposition~\ref{rec_structure}.
Suppose we are given
     \begin{equation*}
 \Ym = \underbrace{\begin{pmatrix}
\Am_0+\Bm_0  &\quad &{a_m}( \Am_1- \Bm_1)\\ 
 && \\
\Am_1+ \Bm_1  &\quad &{\Am_0 -\Bm_0 }
\end{pmatrix}}_{\Cm \,\in\, \RM_{\LL/\K}(r,m)} + \underbrace{\begin{pmatrix}
\Em_{00}  &\quad &\Em_{01}\\ 
 && \\
\Em_{10}  &\quad &\Em_{11}
\end{pmatrix}}_{\Em \,\in \, {\K}^{2^m \times 2^m}},
\end{equation*}
where $\rk (\Em) \le 2^{m-r-1}-1$. Since the code $\RM_{\LL/\K}(r,m)$
has minimum distance $2^{m-r}$, the error's rank is less than half the
minimum distance.

\subsubsection*{Recursivity}
 To describe the recursive algorithm, we assume that we can decode up to half the minimum distance of any rank metric Reed--Muller code $\RM_{\LL/\K'}(r',m')$
over a field $\K'$ such that $\K \subseteq \K' \subseteq \LL$ with any
parameters $r' \leq r$ and $m'<m$.

\medskip

\subsubsection*{Overview of the algorithm} First, we briefly
describe the main steps of our decoding algorithm.

\begin{itemize}
\item \textit{Step 1. Folding $\Ym$.} We apply an operation that does
  not increase the error's rank and permits to get rid of the
  $\Am_i$'s and reduce to a decoding instance consisting in correcting
  $2^{m-r-1}-1$ errors for $\RM_{\LL / \K (\alpha_m)}(r-1,
  m-1)$. 
\item \textit{Step 2. Recursive calls.} Recursively calling the algorithm
  permits to recover $\Bm_0, \Bm_1$ and  get partial information on
  $\Em$.
\item \textit{Step 3. Recovery of the $\Am_i$'s.} Under some
  assumption on $\Em$, the partial information we got on the error
  permits to recover the $\Am_i$'s by solving linear systems.
\end{itemize}

\medskip

\subsection{Full description and proof of correctness of the algorithm}

\subsubsection{{Step 1. Folding}}

In the Hamming case, the decoding algorithm starts by folding the
received vector to get rid of the contribution of $\uv$, \emph{i.e.}
of the contribution of $\RM_2 (r, m-1)$ which is the one with the
smallest minimum distance. Similarly, here we look for an operation
that will not increase the rank of the error term while getting rid of
{ $\Am_0, \Am_1 \in \RM_{\LL/\K(\alpha_m)}(r,m-1)$}.

Denote by
$\II \in \K^{2^{m-1} \times 2^{m-1}}$ the identity matrix. The
following operation will be the analogue of the folding in the
Hamming case

\begin{align}
\label{reduction1}
  \begin{pmatrix}
    \frac{1}{{\a_m}} \II  & \II
  \end{pmatrix}
  \Ym 
  \begin{pmatrix}
    \II \\ - \frac{1}{{\a_m}} \II
  \end{pmatrix} &= 
  \frac{2}{{\a_m}}\Bm_0 + 2 \Bm_1  
   + \underbrace{\frac{1}{{\a_m}} \Em_{00} - \frac{1}{{a_m}} \Em_{01} + \Em_{10} - \frac{1}{{\a_m}} \Em_{11}}_{\eqdef\ \Em'};
\end{align}

\begin{rem}
It is worth noting that if the matrices $\Am_i, \Bm_i$ and $\Em_{ij}$ have their entries in $\K$, then the matrices obtained in reductions
  \eqref{reduction1} and \eqref{reduction2} have their entries in
  $\K (\alpha_m)$. Also, the operations on $\Em$ by left-and-right multiplication do not increase the rank, \emph{i.e.}, $\rk (\Em') \leq \rk (\Em)$.
\end{rem}

\begin{rem}
The folding operation could also have been  
\begin{align}
\label{reduction2}
  \begin{pmatrix}
-\frac{1}{{\a_m}} \II  & \II
\end{pmatrix} 
\Ym 
\begin{pmatrix}
\II \\  \frac{1}{{\a_m}} \II
\end{pmatrix} &= 
2 \Bm_1 - \frac{2}{{\a_m}}\Bm_0 
-  \frac{1}{{\a_m}} \Em_{00} - \frac{1}{{a_m}} \Em_{01} + \Em_{10} + \frac{1}{{\a_m}} \Em_{11},
\end{align}
which is nothing but the conjugate of (\ref{reduction1}). The use of
(\ref{reduction2}) is equivalent to that of (\ref{reduction1}).
\end{rem}

\subsubsection{{Step 2. A recursive call to recover $\Bm_0,\Bm_1$}}\label{sec:recursive}
Reduction \eqref{reduction1} yields the sum of an element of
$\RM_{\LL/\K(\alpha_m)}(r-1,m-1)$ and an error term of rank
$\rk (\Em')\leq \rk (\Em)\leq 2^{m-r-1}-1$, which is less than half
the minimum distance of $\RM_{\LL/\K(\alpha_m)}(r-1,m-1)$. Based on
our assumption that Reed--Muller codes of parameters $r'\leq r$ and
$m'<m$ can be decoded up to half their minimum distance, a recursive
call to the algorithm applied on the expression (\ref{reduction1})
permits to recover
\[
\frac{2}{{\a_m}}\Bm_0 +2 \Bm_1.
\]
Next, since $\Bm_0, \Bm_1$ have their entries in $\K$, they are
immediately deduced from above by extracting the $\K$ part and the
$\alpha_m$ part exactly as a complex matrix is decomposed into its
real and imaginary part.

\medskip

\noindent \textbf{Initialization of the recursive process.} In case
$r = -1$, then we are decoding the zero code and hence our decoder
will immediately return $0$.  In case $r=0$, then $\Bm_0, \Bm_1$ are
elements of $\RM_{\LL / \K (\alpha_m)}(-1, m-1)$ and hence are both
equal to $0$. Therefore, for $r = 0$, we immediately have access to
the $\Bm_i$'s. In this situation the folding should however be
computed: \eqref{reduction1} will yield the linear combination $\Em'$
of the $\Em_{ij}$'s, which is useful for Step 3 to follow.

\medskip

\subsubsection{{Step 3. Recovery of $\Am_0,\Am_1$}}\label{subsec:Step3}
Removing $\Bm_0, \Bm_1$ from $\Ym$, we
consider the new decoding problem:
 \begin{align}\label{eq:Ytilde}
\widetilde{\Ym} = \underbrace{\begin{pmatrix}
\Am_0  &\quad &{a_m} \Am_1\\ 
 && \\
 \Am_1  &\quad &{\Am_0 }
\end{pmatrix}}_{\widetilde{\Cm} \,\in\, \RM_{\LL/\K}(r,m)} + \underbrace{\begin{pmatrix}
\Em_{00}  &\quad &\Em_{01}\\ 
 && \\
\Em_{10}  &\quad &\Em_{11}
\end{pmatrix}}_{\Em \,\in \, {\K}^{2^m \times 2^m}}.
\end{align}

As explained in Section~\ref{ss:difficulty_in_rank}, to recover
$\Am_0,~\Am_1$, the technique used in the Hamming case does not
apply. However, the previous step permitted us also to recover from
(\ref{reduction1}) the following matrix:
\begin{align}
\small
 &\Em' \eqdef {\frac{1}{{\a_m}} \Em_{00} - \frac{1}{{a_m}} \Em_{01} + \Em_{10} - \frac{1}{{\a_m}} \Em_{11} }, \label{eq:E'}\end{align}
Here we make the following assumption.

\begin{assumption}\label{as:1}
The matrix
  $\Em' \in \K (\alpha_m)^{\frac N 2 \times \frac N 2}$ has
  rank $t$ and so do their foldings and so on until depth $r+1$,
  \emph{i.e.} we assume any iterated folding of $\Em$ of size
  $2^{m-i}\times 2^{m-i}$ with $i \leq r+1$ to have rank
  $t$.
\end{assumption}

\begin{rem}
  This assumption is actually what happens ``most of the time''.
  When applied
  to finite fields we analyse in Section~\ref{ss:proba} the
  probability that the folding of such a matrix $\Em$ keeps the same
  rank and we prove that this probability is exponentially close to
  $1$.
  The case of infinite fields is discussed later on in
  Section~\ref{ss:error_model} where we show that, under some choice of
  error model, we can bound from below the failure probability and make
  it exponentially close to $1$.
\end{rem}

Now, consider
\begin{equation}\label{eq:squeeze_Y}
\setlength{\abovedisplayshortskip}{3pt}
\setlength{\abovedisplayshortskip}{3pt}
\widetilde{\Ym}
\begin{pmatrix}
    \II \\ -\frac{1}{\alpha_m}\II
\end{pmatrix}
=
\begin{pmatrix}
    \Am_0 - \alpha_m \Am_1 \\
    \Am_1 - \alpha_m^{-1} \Am_0
\end{pmatrix}
+
\underbrace{\begin{pmatrix}
    \Em_{00} - \alpha_m^{-1} \Em_{01} \\
    \Em_{10} - \alpha_m^{-1} \Em_{11} 
\end{pmatrix}}_{\Fm}.
\end{equation}
The bottom part 
\begin{equation}\label{eq:def_F1}
   \Am_1 - \alpha_m^{-1} \Am_0 + \Fm_1, \quad \text{where} \quad \Fm_1 \eqdef  (\Em_{10} - \alpha_m^{-1} \Em_{11})
\end{equation}
gives an instance of the decoding problem for the code
$\RM_{\LL/\K (\alpha_m)}(r, m-1)$.  Moreover, since
\begin{equation}\label{eq:EFE'}
\small
    \Fm = \Em
  \begin{pmatrix}
    \II \\ - \frac{1}{\alpha_m}\II
  \end{pmatrix}
  \quad \text{and}\quad
  \Em' =
  \begin{pmatrix}
    \frac{1}{\alpha_m}\II & \II  
  \end{pmatrix} \Em
  \begin{pmatrix}
    \II \\ - \frac{1}{\alpha_m}\II
  \end{pmatrix}
  =\begin{pmatrix}
    \frac{1}{\alpha_m}\II & \II  
  \end{pmatrix}  \Fm,
\end{equation}
then \( t = \rk(\Em) \geq \rk (\Fm) \geq \rk (\Em') \) and, thanks to
Assumption~\ref{as:1}, all these matrices have rank $t$. Next, the
right--hand equation of (\ref{eq:EFE'}) entails that the row space of
$\Em'$ is contained in that of $\Fm$ and, these row spaces are
actually equal because they both have dimension $t$.  Thus, the row
space of the matrix $\Fm_1$ (defined in \eqref{eq:def_F1}) is contained in
that of $\Em'$.

Since $\Em'$ is known, we know its row space
$V \subseteq \K(\alpha)^{\frac N 2}$ and the previous discussion
asserts that $V$ contains the row space of $\Fm_1$. Similarly to the
Hamming case, with this partial knowledge of the row space (that can
be regarded as a rank metric counterpart of the support in the Hamming
metric) of $\Fm_1$, the decoding boils down to the resolution of a
linear system and permits to recover $\Am_1 - \alpha_m^{-1} \Am_0$.
This decoding approach while knowing the row space is well--known in
the literature (see, for instance \cite[\S~IV.1]{GRS16} or
\cite[\S~III.1]{AGHT18}) and can be regarded as the rank counterpart
of \emph{erasure decoding}. To ease the complexity analysis given in
Section~\ref{sec:complexity} we give an \emph{ad hoc} description of
the procedure in the subsequent paragraph. Note that $V$ has dimension
at most $t = 2^{m-r-1}-1$ which is less than the minimum distance of
$\RM_{\LL/\K(\alpha_m)}(r,m-1)$ which guarantees the uniqueness of the
solution of the decoding.

Finally, once $\Am_1 - \alpha_m^{-1} \Am_0$ is found, since both
$\Am_0, \Am_1$ have their entries in $\K$, they can be immediately
deduced.  

\subsubsection{Erasure decoding: further details on the recovery of $\Am_0, \Am_1$}\label{sec:erasure}
Let us
quickly explain how the decoding of (\ref{eq:squeeze_Y}) works. We
are given
\begin{equation}\label{eq:dec_prob_II}
  \widetilde{\Ym}_0 = (\Am_1 - \alpha_m^{-1} \Am_0) + \Fm_1 \ \in
  \K(\alpha_m)^{\frac N 2 \times \frac N 2},
\end{equation}
and we wish to recover
$(\Am_1 - \alpha_m^{-1} \Am_0) \in \RM_{\LL/\K (\alpha_m)}(r, m-1)$ with
$\rk (\Fm_1) \leq t$. In addition, we know a $\K (\alpha_m)$--subspace
$V \subseteq \K(\alpha_m)^{\frac N 2}$ of dimension $t$ that contains
the row space of $\Fm_1$. Then the decoding can be performed as
follows.  We compute a matrix
$\Rm \in \K(\alpha_m)^{t \times \frac N 2}$ (in row echelon form)
whose rows span $V$. Then, there exists 
$\Xm\in \K (\alpha_m)^{\frac N 2 \times t}$ such that
\begin{equation}\label{eq:f0}
  \Fm_1 = \Xm \Rm.
\end{equation}
Let us switch to a vector representation by representing elements of
$\K (\alpha_m)^{\frac N 2 \times \frac N 2}$ by vectors in
$\LL^{\frac N 2}$.  Note that the matrices in
$\K(\alpha_m)^{\frac N 2 \times \frac N 2}$ can be regarded either as
representations of elements of $\LL [\GG_{m-1}]$ in the basis
$\mathcal B_{m-1}$ or as elements of $\LL^{\frac N 2}$ obtained by
evaluating elements of $\LL [\GG_{m-1}]$ at the elements of
$\mathcal B_{m-1}$.  In vector representation, (\ref{eq:dec_prob_II})
becomes
\begin{equation}
  \label{eq:dec_prob_III}
  \widetilde{\yv}_0 = (\av_1 - \alpha_m^{-1} \av_0) + \fv_1 \ \in \LL^{\frac N 2},
  \qquad \text{where, from (\ref{eq:f0}),}\quad   \fv_1 = \xv \Rm,
  \qquad \text{for some } \xv \in \LL^{t}.  
\end{equation}
Since the space $V$ is known, the matrix
$\Rm$ is known and we aim to compute $\xv$.
For this sake, let $\Hm \in \LL^{(\frac N 2 - k) \times \frac N 2}$ be
a parity--check matrix of the code $\RM_{\LL/\K (\alpha_m)}(r, m-1)$
when regarded as a $k$--dimensional $\LL$-subspace of
$\LL^{\frac N 2}$. Then, $\xv$ can be computed as follows.

\begin{lem}\label{lem:uniqueness_erasures}
  The vector $\xv$ of \eqref{eq:dec_prob_III}
  is the unique solution of the linear system:
  \begin{equation}\label{eq:lin_sys}
    \Hm {\widetilde{\yv}_0}^{\top} = \Hm {(\xv \Rm)}^{\top} = \left(\Hm
      \Rm^{\top}\right) \xv^{\top}.
  \end{equation}
\end{lem}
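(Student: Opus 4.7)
The plan is to verify first that $\xv$ is a solution of the stated system and then establish uniqueness using the minimum distance of $\RM_{\LL/\K(\alpha_m)}(r, m-1)$.

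For the existence part, I would simply observe that $\av_1 - \alpha_m^{-1}\av_0 \in \RM_{\LL/\K(\alpha_m)}(r,m-1)$ implies $\Hm (\av_1 - \alpha_m^{-1}\av_0)^\top = 0$, so applying $\Hm$ on both sides of \eqref{eq:dec_prob_III} gives
\[
\Hm \widetilde{\yv}_0^\top = \Hm \fv_1^\top = \Hm (\xv \Rm)^\top = (\Hm \Rm^\top)\xv^\top,
\]
which shows that $\xv$ satisfies the linear system. This is the routine verification; no obstacle here.

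For uniqueness, suppose $\xv' \in \LL^t$ is another solution. Then $(\Hm\Rm^\top)(\xv - \xv')^\top = 0$, which means the vector $\cv \eqdef (\xv - \xv')\Rm \in \LL^{N/2}$ lies in the kernel of $\Hm$, hence $\cv \in \RM_{\LL/\K(\alpha_m)}(r, m-1)$. Moreover, the rows of $\cv$'s matrix representation are $\K(\alpha_m)$-linear combinations of rows of $\Rm$, so the row space of $\cv$ is contained in $V$, giving $\rk_{\K(\alpha_m)}(\cv) \leq \dim_{\K(\alpha_m)} V = t$. Since $t \leq 2^{m-r-1} - 1$ is strictly less than the minimum distance $2^{m-1-r}$ of $\RM_{\LL/\K(\alpha_m)}(r, m-1)$, the only possibility is $\cv = 0$, i.e.\ $(\xv-\xv')\Rm = 0$.

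To conclude, I would use that $\Rm$ was chosen in row echelon form with $t$ rows spanning $V$ of dimension $t$, so $\Rm$ has full row rank $t$ over $\K(\alpha_m)$ (and a fortiori over $\LL$). Therefore $(\xv-\xv')\Rm = 0$ forces $\xv = \xv'$, which proves uniqueness. The only nontrivial step is the rank bound on $\cv$ together with the comparison to the minimum distance; once these are in place the argument is purely linear--algebraic. I do not anticipate any obstacle beyond being careful that the minimum distance of the smaller Reed--Muller code indeed exceeds $t$, which is guaranteed by the standing error--correcting radius $2^{m-r-1}-1$.
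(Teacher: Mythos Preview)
Your proof is correct and follows essentially the same approach as the paper's: show that $(\xv-\xv')\Rm$ is a codeword of $\RM_{\LL/\K(\alpha_m)}(r,m-1)$ whose $\K(\alpha_m)$--rank is at most $t<2^{m-1-r}$, hence must vanish, and then use the full row rank of $\Rm$ to conclude. The only cosmetic difference is that you bound the rank via the row space of the matrix representation being contained in $V$, whereas the paper bounds it by observing that $\xv-\xv'\in\LL^t$ has $\K(\alpha_m)$--rank at most $t$ and that right multiplication by the $\K(\alpha_m)$--matrix $\Rm$ cannot increase this rank; both are equivalent viewpoints on the same inequality.
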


\begin{proof}
  Only the uniqueness should be proven. Suppose that another
  $\xv' \neq \xv$ is solution as well.  Then
  $\Hm ((\xv-\xv')\Rm)^\top = 0$ or equivalently
  $(\xv-\xv')\Rm \in \RM_{\LL/\K(\alpha_m)}(r, m-1)$. Since $\Rm$ has
  full rank and $\xv \neq \xv'$ we deduce that
  $(\xv-\xv')\Rm \neq \mathbf 0$. Thus, as a nonzero element of
  $\RM_{\LL/\K(\alpha_m)}(r, m-1)$, we have
  \[\rk_{\K (\alpha_m)}(\xv-\xv')\Rm \geq 2^{m-1-r},\] while $\xv-\xv' \in \LL^{t}$ has length $t$
  and hence has $\K (\alpha_m)$--rank at most $t \leq 2^{m-1-r}-1$. Since
  $\Rm$ has entries in $\K(\alpha_m)$ then
  \[\rk_{\K(\alpha_m)} (\xv - \xv')\Rm \leq \rk_{\K (\alpha_m)}(\xv -
    \xv') \leq t \leq 2^{m-1-r}-1.\] A contradiction.
\end{proof}

\begin{algorithm}
\caption{\textsc{ErasureDecRM$({\Xm}, r,m, V)$}}
\begin{flushleft}
\textbf{Input:} A matrix $\Xm = \mathbf{C} + \mathbf{E} \in \K^{2^{m} \times 2^{m}}$, where $\mathbf{C} \in \textrm{RM}_{{\LL}/{\K}}(r,m)$ and $\rk(\mathbf{E}) = t \le 2^{m-r}-1$, and a vector space $V \subseteq {\K}^{2^{m}}$ which contains the row space of $\mathbf{E}$\\
\textbf{Output:} The codeword $\mathbf{C} \in \textrm{RM}_{{\LL}/{\K}}(r,m)$
\end{flushleft}
\hrule
\begin{flushleft}   
$\Hm \gets \text{ a parity-check matrix } \Hm(r,m) \text{ of } RM_{\LL/\K}(r,m)$ \hfill /* $\Hm \in \K^{(2^{m} - k) \times 2^{m}} \text{ where } k =\dim RM_{{\LL}/{\K}}(r,m)$  */ \\
$\mathbf{R} \gets$ a matrix in $\K^{t \times 2^m}$ whose rows span $V$\\ 
$\mathbf z \gets$ solution of $\Hm \mathbf{R}^{\top} \mathbf{z}^{\top}$\\
$\Em \gets$ matrix representation of $\ev \eqdef \mathbf z \Rm$ \\
\Return $\Xm - \Em$  \hfill /* {Returning $\Cm$} */
\end{flushleft}
\end{algorithm}

\begin{algorithm}
\caption{\textsc{DecodeRM}($\mathbf{Y}; r, m, \LL, \K$)}
\begin{flushleft}
\textbf{Input:}
\begin{itemize}
\item An $[\LL[\GG],k, d]_{\K}$-rank metric code $\CC (=\textrm{RM}_{\LL/\K}(r,m))$;
\item $\mathbf{Y} = \mathbf{C} + \mathbf{E} \in \K^{2^m \times 2^m},$ where $\mathbf{C} \in \CC,\, \rk(\mathbf{E}) \le 2^{m-r-1}-1$.
\end{itemize}
\textbf{Output:} Estimated codeword $\mathbf{C} \in \CC$ or \emph{decoding failure}
\end{flushleft}
\hrule
\begin{algorithmic}[1]
\If{$r=-1$}

\State{\Return $\mathbf{0} \in \K^{2^m \times 2^m}$}\\

\EndIf

\State $\Ym' \gets  \textbf{ Fold}(\Ym, a_m) \eqdef \begin{pmatrix}
    \frac{1}{\alpha_m} \II  & \II
  \end{pmatrix}
  \Ym 
  \begin{pmatrix}
    \II \\ - \frac{1}{\alpha_m} \II
  \end{pmatrix} 
$ \hfill /* $a_m = \alpha_m^2$ as given in definition of $\CC$*/\\

\State{$\frac{2}{{\sqrt{a}}} \Bm_0 + 2 \Bm_1 \gets \textbf{Decode}(\Ym',r-1,m-1)$}\\
\State{$\Bm_0, \Bm_1 \gets \frac{2}{{\sqrt{a}}} \Bm_0 + 2 \Bm_1$}\\
\State{$\widetilde{\Ym} \gets \Ym - \begin{pmatrix} \Bm_0 & - a\Bm_1\\ \Bm_1 & -\Bm_0 \end{pmatrix}$}\\
\State{$\widetilde{\Ym}_0 \gets \widetilde{\Ym} \begin{pmatrix} \II \\ - \frac{1}{\sqrt{a}}\II \end{pmatrix}$}\\
\State{$\Am_0, \Am_1$  or  ``\emph{Decoding failure}'' $\gets$ \textbf{ErasureDecRM}$(\widetilde{\Ym}_0, r',m')$}\\

\If{No \emph{Decoding failure}}

  \State{ \Return $\Cm \eqdef \begin{pmatrix} \Am_0 + \Bm_0 & a(\Am_1  - \Bm_1)\\
                       \Am_1 + \Bm_1 & \Am_0 - \Bm_0 \end{pmatrix}$}
\Else
   \State{\Return \emph{Decoding failure}}
\EndIf
\end{algorithmic}
\label{algo:decoder}
\end{algorithm}

\subsection{The Artin--Schreier case}
Suppose $\K$ has characteristic $2$, \emph{i.e.}, $\LL/\K$ is an Artin--Schreier extension. 
The received vector has the following form
     \begin{equation*}
       \Ym = \underbrace{\begin{pmatrix}
           \Am_0+\Bm_0  &\quad &{a_m}( \Am_1+ \Bm_1) +\Bm_0\\ 
            && \\
            \Am_1+ \Bm_1  &\quad &{\Am_0 +\Am_1+\Bm_0 }
\end{pmatrix}}_{\Cm \,\in\, \RM_{\LL/\K}(r,m)} + \underbrace{\begin{pmatrix}
\Em_{00}  &\quad &\Em_{01}\\ 
 && \\
\Em_{10}  &\quad &\Em_{11}
\end{pmatrix}}_{\Em \,\in \, {\K}^{2^m \times 2^m}},
\end{equation*}
where $\rk(\Em)\le 2^{m-r-1}-1$. Then, the approach is the same as in
the Kummer case with the exception that we get the following 
equation involving $\Bm_0$ and $\Bm_1$:
\[
\begin{pmatrix}
 \II  & \a_m \II
\end{pmatrix} \Ym \begin{pmatrix}
\II \\  \frac{1}{\a_m} \II
\end{pmatrix}
= \frac{1}{{\a_m}}\Bm_0 + \Bm_1 + \Em_{00} + \a_m
\Em_{10} + \frac{1}{{\a_m}} \Em_{01} + \Em_{11}.
\]
Next, a similar approach permits to
decode up to half the minimum
distance.

\subsection{Correctness}

\begin{thm}
  Let $-1\leq r\leq m$ be integers. Given an input \begin{equation}\label{eq:dec_pb}
    \Ym = \Cm + \Em
  \end{equation} where
  $\Cm \in \RM_{\LL/\K}(r,m)$ and $\rk (\Em) = t \leq 2^{m-r-1}-1$, then,
  under Assumption~\ref{as:1}, Algorithm~\ref{algo:decoder} returns
  $\Cm$.
\end{thm}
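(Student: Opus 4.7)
The proof naturally proceeds by induction on $m$, or more precisely on the lexicographic pair $(m,r)$ (so that both recursive calls, with parameters $(m-1, r-1)$ and $(m-1, r)$, fall under the inductive hypothesis). The base case $r=-1$ is immediate: the algorithm returns the zero matrix, which is the unique element of the zero code. The substantive work is to verify, for the inductive step, that each of the three stages of Algorithm~\ref{algo:decoder} behaves as claimed, and that the inductive hypothesis indeed applies.

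For the folding stage, I would invoke Proposition~\ref{rec_structure} to write $\Cm$ in the recursive shape and then carry out the product \eqref{reduction1} explicitly, showing that $\Ym' = \frac{2}{\alpha_m}\Bm_0 + 2\Bm_1 + \Em'$ where $\Em'$ is obtained from $\Em$ by a left-and-right multiplication by fixed matrices. Since such multiplications can only decrease the rank, $\rk(\Em')\le \rk(\Em)= t \le 2^{m-r-1}-1$. The resulting instance lies in $\RM_{\LL/\K(\alpha_m)}(r-1,m-1)$, whose minimum distance is $2^{(m-1)-(r-1)}=2^{m-r}$, so the error budget $t\le 2^{m-r-1}-1$ is strictly less than half the minimum distance. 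This is exactly the range in which the inductive hypothesis applies, so the recursive call returns $\frac{2}{\alpha_m}\Bm_0 + 2\Bm_1$. Since $\Bm_0,\Bm_1$ have entries in $\K$ while $\alpha_m^{-1}\notin \K$, the decomposition $\K(\alpha_m)=\K\oplus\alpha_m\K$ (Kummer case) or the analogue in the Artin--Schreier case allows unique extraction of $\Bm_0$ and $\Bm_1$. The special case $r=0$ must be treated separately: there $\Bm_0=\Bm_1=0$ by definition, but the folding is still performed because $\Em'$ is needed for Step~3.

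For the recovery of $\Am_0,\Am_1$, I would subtract the known $\Bm$-contribution to obtain $\widetilde{\Ym}$ as in \eqref{eq:Ytilde}, then compute $\widetilde{\Ym}_0$ as in \eqref{eq:squeeze_Y} so that its lower block is $\Am_1-\alpha_m^{-1}\Am_0+\Fm_1$ with $\Fm_1$ defined by \eqref{eq:def_F1}. The codeword part lies in $\RM_{\LL/\K(\alpha_m)}(r,m-1)$, whose minimum distance is $2^{m-1-r}$. Under Assumption~\ref{as:1}, $\rk(\Em')=t$; combining this with the matrix identities \eqref{eq:EFE'}, the row space of $\Em'$ equals that of $\Fm$ (they are comparable and of the same dimension~$t$), so the row space of $\Fm_1$ is contained in $V\eqdef\mathrm{Row}(\Em')$ of dimension $t\le 2^{m-1-r}-1$. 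This is precisely the input condition of the erasure-type routine. Lemma~\ref{lem:uniqueness_erasures} then guarantees that the linear system solved in \textsc{ErasureDecRM} has a unique solution, yielding $\Am_1-\alpha_m^{-1}\Am_0$; extracting $\K$- and $\alpha_m$-parts recovers $\Am_0,\Am_1$ individually. Reassembling the block matrix gives back $\Cm$.

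The main obstacle, which I would treat with care, is the justification that Assumption~\ref{as:1} suffices for Step~3: the hypothesis must guarantee not only that $\Em'$ has rank $t$ at the current recursion level, but that the \emph{iterated} folding at every deeper level still preserves the rank, since the recursive call on $\Ym'$ will itself invoke Assumption~\ref{as:1} at depth one more. This is exactly how the assumption is phrased (``so do their foldings and so on until depth $r+1$''), so the induction closes cleanly. The Artin--Schreier case requires only that the folding matrices and the formulas for $\Fm_1$ be replaced by their characteristic-two analogues derived from Proposition~\ref{AS-structure}; the structural argument above is otherwise unchanged.
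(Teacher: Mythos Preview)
Your proposal is correct and matches the paper's own proof in substance: induction with base case $r=-1$, then folding to reduce to the inductive hypothesis for $\RM_{\LL/\K(\alpha_m)}(r-1,m-1)$, extraction of $\Bm_0,\Bm_1$, and erasure decoding via Lemma~\ref{lem:uniqueness_erasures} under Assumption~\ref{as:1}. One small remark: the lexicographic induction on $(m,r)$ is harmless but unnecessary---the erasure step for $\RM_{\LL/\K(\alpha_m)}(r,m-1)$ is not a recursive call to \textsc{DecodeRM} and needs no inductive hypothesis (you correctly handle it via the lemma), so induction on $r$ alone, as the paper does, already suffices.
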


\begin{proof}
  Note first, that the since $\RM_{\LL/\K}(r,m)$ has minimum distance $2^{m-r}$, the
  constraint on $\rk (\Em)$ entails that $\Cm$ is the unique solution to the decoding problem
  \eqref{eq:dec_pb}.

  We prove the result by induction on $r$.
  For $r = -1$ then $\Cm = \mathbf 0$ and hence the algorithm succeeds.  
  For $r\geq 0$, then, folding $\Ym$ yields the instance
  (\ref{reduction1}) of decoding for $\RM_{\LL/\K}(r-1,m-1)$ where the
  error term $\Em'$ has rank below half the minimum distance of
  $\RM_{\LL/\K}(r-1,m-1)$. Thus, by induction, the decoder applied to
  the folding of $\Ym$ returns $\frac{2}{\a_m}\Bm_0 + \Bm_1$.  Then, we
  recover $\Bm_0, \Bm_1$ and can
  reduce the decoding problem to the ``erasure decoding''
  problem~(\ref{eq:dec_prob_II}), where, according to Assumption~\ref{as:1}
  the row space of the error term $\Fm_1$ is known.
  This erasure decoding problem is solved by computing the
  solution of~(\ref{eq:lin_sys}) which, from
  Lemma~\ref{lem:uniqueness_erasures}, is unique.
\end{proof}

 \section{Complexity analysis}\label{sec:complexity}
Here we give a complete analysis of the algorithm's complexity.
Since we are handling calculations in various field
extensions of $\K$ we will take as a reference the operations in
$\K$. Given two extensions $\K_1, \K_2$, we denote by
$\mul(\K_1, \K_2)$ the number of operations in $\K$ that costs a
multiplication of an element of $\K_1$ by an element of $\K_2$. Note
that \begin{equation}\label{eq:naive_cost_of_mult} [\K_1:\K] + [\K_2 :
  \K] \leq \mul (\K_1,\K_2) \leq [\K_1:\K][\K_2 : \K]
\end{equation}
since such multiplication involves more than
$ [\K_1:\K] + [\K_2 : \K] $ elements of $\K$ and that the naive
multiplication costs $[\K_1:\K][\K_2 : \K]$ operations in $\K$.  When
$\K_1 = \K_2$, we simply use $\mul (\K_1)$ to denote the number of operations in $\K$ that a multiplication of two elements in $\K_1$ costs. Following \cite[Corollary 11.11]{GG13}, we
will assume that if $\K_1$ is an extension with $[\K_1 : \K] = \ell$, then
\begin{equation}\label{eq:Mul(N)}
  \mul (\K_1) = \OO(\ell \log \ell).
\end{equation}
Finally, at some places, we will be interested specifically in the multiplication by $\alpha_m$.

\begin{lem}\label{lem:mult_by_alpham}
  Recall that $N = [\LL : \K]$.
  The multiplication of an element of $\LL$ by $\alpha_m$ costs
  $\OO(N)$ operations in $\K$.
\end{lem}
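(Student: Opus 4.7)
The plan is to exploit the concrete basis $\mathcal{B} = \mathcal{B}_{m-1} \cup \mathcal{B}_{m-1}\alpha_m$ together with the fact that $\alpha_m$ satisfies a quadratic relation over $\K$, so multiplication by $\alpha_m$ reduces to a single reduction step modulo this quadratic.

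First I would write any $x \in \LL$ in the chosen basis as
\[
x = y_0 + y_1\, \alpha_m,\qquad y_0,y_1 \in \LL_{m-1},
\]
where each $y_i$ is represented by its $N/2$ coordinates in $\mathcal{B}_{m-1}$. By Lemma~\ref{extensiontype}, the minimal polynomial of $\alpha_m$ over $\K$ is $X^2 - a_m$ in the Kummer case and $X^2 + X + a_m$ in the Artin--Schreier case, with $a_m \in \K$ in both situations. Hence
\[
  \alpha_m x =
  \begin{cases}
    a_m y_1 + y_0 \alpha_m & \text{(Kummer)},\\
    a_m y_1 + (y_0 + y_1)\alpha_m & \text{(Artin--Schreier)}.
  \end{cases}
\]
In both cases, the output is expressed directly in the basis $\mathcal B$ with no further reduction needed.

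Next I would count the cost of each elementary step. Computing $a_m y_1$ amounts to scaling the $N/2$ coordinates of $y_1$ by the \emph{fixed} scalar $a_m \in \K$, which costs exactly $N/2$ multiplications in $\K$. In the Artin--Schreier case the additional term $y_0 + y_1$ costs $N/2$ additions in $\K$, and rearranging the blocks into the output is free (just concatenation of coordinate vectors). Therefore the total cost is at most $N/2$ multiplications and $N/2$ additions in $\K$, which is $\OO(N)$ operations in $\K$, as claimed.

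The only potentially subtle point is that a generic multiplication in $\LL$ could \emph{a priori} cost $\mul(\LL) = \OO(N \log N)$ by~\eqref{eq:Mul(N)}; the gain here is that we are multiplying by the specific element $\alpha_m$, whose minimal polynomial over $\K$ has degree $2$ and constant coefficients, so no full FFT-based multiplication in $\LL$ is invoked. This is the only step that must be highlighted carefully; the rest of the argument is a direct expansion in the basis $\mathcal B$.
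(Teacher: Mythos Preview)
Your proof is correct and follows essentially the same approach as the paper: write an element in the basis $\mathcal{B} = \mathcal{B}_{m-1} \cup \mathcal{B}_{m-1}\alpha_m$, use the quadratic relation for $\alpha_m$ to compute $\alpha_m x$ explicitly in both the Kummer and Artin--Schreier cases, and count the resulting $\OO(N)$ scalar operations. Your additional closing remark contrasting this with the generic $\mul(\LL)$ cost is not in the paper's proof but is a helpful clarification.
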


\begin{proof}
  Let us start with the Kummer case.  An element $v$ of $\LL$ is
  written $v = v_0 + v_1 \alpha_m$ for
  $v_0, v_1 \in \LL_{m-1} = \K (\alpha_1, \dots, \alpha_{m-1})$.
  Thus, $v_0, v_1$ are represented as elements of $\K^{\frac N 2}$ and
  \[
    \alpha_m v = a_m v_1 + \alpha_m v_0.
  \]
  In practice, it consists in a permutation of the entries of the
  element of $\K^N$ representing $v$ and then the multiplication of
  the $\frac N 2$ entries of $v_1$ by $a_m$ which is in $\K$.
  This yields $\OO (N)$ operations in $\K$.

  In the Artin--Schreier case, the multiplication by $\alpha_m$:
  \[
    \alpha_m (v_0 + v_1 \alpha_m) = v_1 a_m + (v_0 + v_1) \alpha_m
  \]
  which leads to the same asymptotic complexity estimate.
\end{proof}
Before
starting the analysis, we focus on a specific routine which is used in
Step 3: the calculation of \emph{syndromes}. That is to say products
$\Hm \yv^\top$ where
$\Hm \in \LL^{(N -\dim_\LL \RM_{\LL/\K} (r, m)) \times N}$ is the
structured parity--check matrix for $\RM_{\LL/\K} (r,m)$ as introduced
in Proposition~\ref{pro:recursive_dual}. Taking advantage of this
structure permits a faster computation compared to arbitrary matrix
products.

\subsection{Fast syndrome calculation}
As mentioned in Sections~\ref{sec:erasure}, Step 3, presented in
Section~\ref{subsec:Step3}, requires the computation of products
$\Hm {\widetilde{\yv}_0}^\top$ and $\Hm \Rm^\top$, where $\Hm$ is a
parity--check matrix of an RM code, $\widetilde{\yv}_0 \in \LL^{N/2}$
and $\Rm \in \K(\alpha_m)^{\frac N 2 \times t}$.

A naive version of the calculation of (for instance)
$\Hm \widetilde{\yv}_0^\top$ gives a cost in $\OO (N^2 \mul(\LL))$.
However, we can take advantage of the recursive structure of $\Hm$ to
perform a fast syndrome calculation.

  \begin{pro}\label{prop:complexity_of_syndrome_calculation}
    Let $\Hm$ be a matrix of the shape $\Hm (r, m)$ as defined in
    Equation~(\ref{eq:dual_kummer}) of Proposition~\ref{pro:recursive_dual}. Let $\K'$ be some
    intermediary field $\K \subset \K' \subset \LL$. Let
    $\yv \in (\K')^N$, then the product $\Hm \yv^\top$ can be computed
    in $\OO(N \mul (\K',\LL))$ operations in $\K$.
\end{pro}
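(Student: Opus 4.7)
The plan is to proceed by induction on $m$, exploiting the recursive block structure of $\Hm(r,m)$ given by~(\ref{eq:dual_kummer}). The base case $m=1$ is immediate since $\Hm(r,1)$ has constant size, so any matrix--vector product on an input of length $2$ over $\K'\subseteq\LL$ is easily bounded by $\OO(\mul(\K',\LL))$.

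For the inductive step, I would split $\yv=(\yv_0 \mid \yv_1)$ with $\yv_0,\yv_1\in(\K')^{N/2}$ and factor the product as
\[
\Hm(r,m)\yv^{\top}
=
\begin{pmatrix}
\Hm(r,m-1)\,\zv_+^{\top}\\
\Hm(r-1,m-1)\,\zv_-^{\top}
\end{pmatrix},
\qquad\text{where}\quad \zv_{\pm}\eqdef\yv_0\pm\alpha_m^{-1}\yv_1\in(\K'(\alpha_m))^{N/2}.
\]
This reduces the computation to two phases: (i) forming $\zv_{+}$ and $\zv_{-}$, and (ii) two recursive calls now operating over the enlarged field $\K'(\alpha_m)\subseteq\LL$. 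For phase~(i), the key observation is that $\alpha_m^{-1}$ lies in the degree-two extension $\K(\alpha_m)/\K$: writing $\alpha_m^{-1}=c+d\alpha_m$ with $c,d\in\K$ (and analogously in the Artin--Schreier setting), each product $\alpha_m^{-1}y$ with $y\in\K'$ reduces to two $\K$-scalar rescalings of the $\K$-representation of $y$ and thus costs $\OO([\K':\K])$ operations in $\K$. Together with the $N$ additions, phase~(i) costs $\OO(N[\K':\K])$.

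To close the argument, I would unroll the recursion level by level. At depth $k\in\{0,\dots,m-1\}$, the algorithm runs $2^k$ subproblems on vectors of length $2^{m-k}$ over a field contained in $\K'(\alpha_m,\dots,\alpha_{m-k+1})$, whose $\K$-dimension is at most $2^k[\K':\K]$. Exactly $N/2$ multiplications by some $\alpha_\ell^{-1}$ are carried out across the subproblems of that depth, each costing $\OO(2^k[\K':\K])$ operations in $\K$ by the same observation as in phase~(i), and likewise for the additions. Summing the resulting geometric series yields total cost
\[
\OO\!\left(\sum_{k=0}^{m-1} N\cdot 2^k[\K':\K]\right) = \OO(N^2[\K':\K]),
\]
which is $\OO(N\cdot\mul(\K',\LL))$ via the naive upper bound $\mul(\K',\LL)\le[\K':\K]\,[\LL:\K]$ recalled in~(\ref{eq:naive_cost_of_mult}). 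The hard part will be keeping track of the growing base field as the recursion descends, since each recursive call operates on a strictly larger extension than its parent; this is handled by the observation above that the relevant $\K$-dimensions double at each level but remain uniformly bounded by $[\LL:\K]=N$, which is precisely what makes the per-level contributions collapse into a single geometric sum.
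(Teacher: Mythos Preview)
Your final step contains a genuine error: you claim $\OO(N^2[\K':\K])=\OO(N\,\mul(\K',\LL))$ by invoking the \emph{upper} bound $\mul(\K',\LL)\le[\K':\K][\LL:\K]$ from~(\ref{eq:naive_cost_of_mult}), but that inequality points the wrong way for what you need. To conclude $N^2[\K':\K]=\OO(N\,\mul(\K',\LL))$ you would need $N[\K':\K]=\OO(\mul(\K',\LL))$, i.e.\ a \emph{lower} bound on $\mul(\K',\LL)$ of order $N[\K':\K]$. The only lower bound available is $\mul(\K',\LL)\ge[\K':\K]+N$, which is far too weak: for instance with $[\K':\K]=\sqrt{N}$ and the quasi-linear estimate $\mul(\K',\LL)\le\mul(\LL)=\OO(N\log N)$ from~(\ref{eq:Mul(N)}), your bound $N^{5/2}$ is genuinely larger than the target $N^2\log N$.

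The root cause is your decision to pre-multiply by $\alpha_m^{-1}$ and thereby enlarge the input field at every level of the recursion; the accumulated field growth is exactly the extra factor $[\K':\K]$ (and the leaf calls end up over $\LL$ rather than $\K'$, costing $\mul(\LL)$ rather than $\mul(\K',\LL)$ each). The paper avoids this by reversing the order: it first computes $\Hm(r,m-1)\yv_0^\top$ and $\Hm(r,m-1)\yv_1^\top$ with the input still in $(\K')^{N/2}$, and only afterwards multiplies the output vector (which lies in $\LL$ regardless) by $\alpha_m^{-1}$, at cost $\OO(N)$ per entry via Lemma~\ref{lem:mult_by_alpham}. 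Since the rows of $\Hm(r-1,m-1)$ are a subset of those of $\Hm(r,m-1)$, the second block comes for free, giving the recursion $\compsyn(r,m)\le 2\,\compsyn(r,m-1)+\OO(N^2)$. Unrolling yields $\OO(N^2)$ internal work plus $N$ leaves each costing $\mul(\K',\LL)$, and now the \emph{lower} bound $\mul(\K',\LL)\ge N$ absorbs the $N^2$ term into $\OO(N\,\mul(\K',\LL))$.
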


\begin{proof}
  We prove the result in the Kummer case, the Artin--Schreier case
  being proved in the very same manner.  Let us first consider the
  complexity of computing $\Hm \yv^\top$ with $\yv \in \LL^N$.  Denote
  by $\compsyn (r,m)$ this complexity. From
  Proposition~\ref{pro:recursive_dual},
  \[
    \Hm \yv^\top = \Hm
    \begin{pmatrix}
      \yv_0^\top \\ \yv_1^\top
    \end{pmatrix} = 
  \begin{pmatrix}
    \Hm (r,m-1) & \alpha_m^{-1}\Hm (r,m-1)\\
    \Hm (r-1,m-1) & -\alpha_m^{-1}\Hm (r-1,m-1)
    \end{pmatrix}
        \begin{pmatrix}
      \yv_0^\top \\ \yv_1^\top
    \end{pmatrix}.
  \]
  The calculations of $\Hm(r,m-1)\yv_0^\top$ and
  $\Hm (r,m-1) \yv_1^\top$ cost $2 \compsyn(r, m-1)$. Then we need to
  multiply $\Hm (r,m-1) \yv_1^\top$ by
  $- \alpha_m^{-1} = -\frac{\alpha_m}{a_m}$ which, from
  Lemma~\ref{lem:mult_by_alpham} represents a cost of $\OO(N)$
  operations in $\K$ per entry, hence a cost in $\OO(N^2)$ operations in $\K$.
  Since the rows $\Hm (r-1, m-1)$ are rows of $\Hm(r,m-1)$, the
  vectors $\Hm(r-1,m-1)\yv_0^\top$ and
  $\alpha_m^{-1}\Hm (r-1,m-1) \yv_1^\top$ can be deduced for free
  from the previous calculations. There only remains to compute the sums
  \[
    \Hm(r,m-1)\yv_0^\top- \alpha_m^{-1}\Hm (r,m-1) \yv_1^\top \qquad \text{and}
    \qquad
        \Hm(r-1,m-1)\yv_0^\top+ \alpha_m^{-1}\Hm (r-1,m-1) \yv_1^\top,
      \]
      which also costs $\OO (N^2)$ operations in $\K$. 
      Thus, we have the recursive formula:
      \[
        \compsyn(r,m) \leq 2\compsyn (r, m-1) + \kappa N^2
      \]
      for some constant $\kappa$.
      Therefore,
      \[
        \compsyn (r,m) \leq \kappa \left(N^2 + 2
          {\left(\frac{N}{2}\right)}^2 + \cdots + 2^{m-r-1}
          {\left(\frac{N}{2^{m-r-1}} \right)}^2 \right) + 2^{m-r} \compsyn
        (r,r).
      \]
Next, from Remark~\ref{rem:C(s,s)},
      \begin{align*}
        \compsyn (r,m) \leq \kappa \Bigg(N^2 + \cdots &+ 2^{m-r-1}
          {\left(\frac{N}{2^{m-r-1}} \right)}^2 \Bigg) \\ &+ 2^{m-r}
        \kappa \left({\left(\frac{N}{2^{m-r}} \right)}^2 + \cdots +
        2^{r-1} {\left(\frac{N}{2^{m-1}} \right)}^{2} +2^{r} \compsyn (0,0) \right).
      \end{align*}
      Then, $\compsyn (0,0) = \mul (\K', \LL)$.  This yields an overall
      cost
      \[
        \compsyn (r,m) \leq \kappa' N^2 + N \mul(\K',\LL)
      \]
      for some other constant $\kappa'$. Then, from
      (\ref{eq:naive_cost_of_mult}), we deduce that
      \(
        \compsyn (r,m) = \OO (N \mul (\K',\LL)).
      \)
\end{proof}

\subsection{Complexity analysis of the decoding algorithm}
Denote by $\compdec (r,m,\K)$ the complexity of decoding
$\RM_{\LL/\K}(r,m)$ up to half the minimum distance, \emph{i.e.}
errors of rank up to $2^{m-r-1}-1$. We aim to evaluate this
quantity. 
\subsubsection{Complexity of Step  1}\label{subsec:cost_step1}
The first step of the decoding consists in folding $\Ym$. Thus,
consists in summing up $\frac N 2 \times \frac N 2$ sub-blocks of
$\Ym$.  Since $\Ym$ has entries in $\K$ multiplications by $\alpha_m$
are for free: we represent any element of $\K (\alpha_m)$ by
$u + \alpha_mv$ for $u,v \in \K$.

Thus, this step represents a cost in $\kappa (N/2)^2$ operations in $\K$ for
some positive constant $\kappa$.  

\subsubsection{Complexity of Step 2}\label{subsec:cost_step2}
The second step consists in a recursive call of the decoder on
decoding instances for $\RM_{\LL/\K (\alpha_{m})}(r-1,m-1)$, leading
to a cost of
\[
  \compdec (r-1, m-1, \K (\alpha_{m})).
\]

\subsubsection{Complexity of Step 3}
The third step consists in the recovery of $\Am_0, \Am_1$.  If it
boils down to linear algebra, the recursive structure of the codes
permits to perform structured linear algebra much faster than with
generic algorithms.
Indeed, according to Section~\ref{sec:erasure}, the decoding procedure
consists in
\begin{enumerate}
\item Computing a matrix $\Rm \in \K (\alpha_m)^{t \times \frac N 2}$
  whose rows span the space $V$ containing the error's row space;
\item Computing the syndrome $\Hm \widetilde{y}_0$ and the matrix $\Hm \Rm^\top$;
\item Solving the linear system \eqref{eq:dec_prob_III}.
\end{enumerate}

Thus, in terms of computations, Step 3 consists in the following steps.
\begin{enumerate}[(3.a)]
\item\label{item:3a} From the matrix $\Em'$ that was obtained from
  Step 2 (see~(\ref{eq:E'})), compute a basis of its row space $V$ to
  get the matrix $\Rm$. This can be performed by putting $\Em'$ in row
  echelon form. This can be done in less than
\[
    \kappa (N/ 2)^2t^{\omega-2} \mul (\K (\alpha_m)) ,\ \ \text{for some }\kappa >0
  \]
  (see for
  instance \cite[\S~3.3]{JPS13}).  
\item From Proposition~\ref{prop:complexity_of_syndrome_calculation}, the calculation of $\Hm
  \widetilde{\yv}_0$ can be performed in less than \[\kappa (N/2) \mul (\LL)\quad \text{operations,
  for some }\kappa >0.\]
\item Similarly to the previous case, using fast syndromes
  calculations, the computation of the matrix $\Hm \Rm^\top$ corresponds to
  computing $t$ products of $\Hm$ by a vector in
  $\K(\alpha_m)^N$. From
  Proposition~\ref{prop:complexity_of_syndrome_calculation}, this
  yields a cost bounded by
  \[
    \kappa t (N/2)\mul(\K(\alpha_m), \LL)) \quad \text{operations in
    }\K, \text{ for some }\kappa >0.
  \]
\item Solve the linear system~(\ref{eq:lin_sys}) to get $\xv$.  This
  is an $(N/2) \times t$ linear system with entries in $\LL$, yielding
  a cost in $\OO (N t^{\omega - 1} \mul (\LL))$ operations. This gives a complexity bounded by
  \[
    \kappa (Nt^{\omega-1} \mul(\LL))  \quad \text{operations in
    }\K, \text{ for some }\kappa >0.
\]
$\frac N{2^s}$.
\item\label{item:3e} Compute $\fv_0 =\xv \Rm$ where $\xv \in \LL^t$ and deduce
  $\av_1 - \alpha_m^{-1} \av_0$. The calculation of $\fv_0$
 represents
  \[
    \OO (t^2N \mul(\K (\alpha_m), \LL))\qquad \text{operations in}\ \K.
  \]
  This can be improved by expanding
  $\xv$ into an $\frac N 2 \times t$ matrix $\mathbf{X}$ with entries in
  $\K$. The cost of the product $\mathbf{X}\Rm$ is bounded by
  \[
    \kappa (N/2)^2 t^{\omega-2} \mul(\K, \K (\alpha_m)),\ \text{for some }\kappa > 0.
  \]
\end{enumerate}

We now have all the elements to give formulas for the complexity of
our algorithm.

\begin{thm}\label{thm:decoding_binary}
Under Assumption~\ref{as:1},
the decoding of $\RM_{\LL/\K} (r,m)$ can be performed with a deterministic
algorithm in
\[
    \OO (t^{\omega - 1} N \mul(\LL))\ \ \text{operations in }\K,
  \]
  which gives $\OO (t^{\omega-1}N^2 \log N)$ operations in $\K$.
\end{thm}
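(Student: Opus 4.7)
The plan is to aggregate the per-level costs obtained in Sections~\ref{subsec:cost_step1}--(\ref{item:3e}) and then solve the resulting recurrence. First I would isolate the dominant term at a single level of recursion. Step~1 contributes only $\OO(N^2)$ operations in $\K$. Among the sub-steps of Step~3, I would check that (3.a), (3.b), (3.c) and (3.e) are all dominated by (3.d): indeed (3.a) and (3.e) cost $\OO(N^2 t^{\omega-2})$ operations in $\K$ because $\mul (\K, \K (\alpha_m))$ and $\mul (\K(\alpha_m))$ are $\OO(1)$; (3.b) costs $\OO(N \mul (\LL))$; (3.c) costs $\OO(tN \mul(\K(\alpha_m), \LL)) = \OO(tN \mul(\LL))$. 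Each of these is bounded above by the cost of (3.d), namely $\OO(N t^{\omega-1} \mul(\LL))$, using $\omega \geq 2$ and the fact that $\mul(\LL) \geq \mul(\K(\alpha_m), \LL)$.

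Next, I would set up the recurrence for $\compdec(r, m, \K)$. Denoting by $N_i \eqdef 2^{m-i}$ the matrix size at recursion depth $i$, Step~2 triggers a recursive call on an instance of $\RM_{\LL/\K(\alpha_m)}(r-1, m-1)$. Under Assumption~\ref{as:1}, the folded error keeps the same rank $t$ throughout the recursion, so the parameter $t$ is invariant. Moreover, $\LL$ is unchanged along the recursion, hence $\mul(\LL)$ remains constant. Combining Steps~1, 2 and 3 yields
\[
  \compdec(r, m, \K) \leq \compdec(r-1, m-1, \K(\alpha_m)) + \kappa\, N t^{\omega-1}\mul(\LL)
\]
for some absolute constant $\kappa>0$, with base case $\compdec(-1, \cdot, \cdot) = 0$.

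Unrolling the recurrence over its $r+1$ levels yields a geometric sum
\[
  \compdec(r, m, \K) \leq \kappa\, t^{\omega-1}\mul(\LL) \sum_{i=0}^{r} 2^{m-i} \leq 2\kappa \, 2^m t^{\omega-1}\mul(\LL) = \OO\!\left(N t^{\omega-1}\mul(\LL)\right),
\]
which is the first claimed bound. The second estimate then follows by invoking \eqref{eq:Mul(N)} to replace $\mul(\LL)$ by $\OO(N \log N)$, yielding $\OO(t^{\omega-1} N^2 \log N)$ operations in $\K$.

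The main subtlety, and the only non-mechanical point, is verifying that $t$ really does stay invariant along the recursion: this is precisely what Assumption~\ref{as:1} guarantees, since at depth $i$ the relevant iterated folding has size $2^{m-i}\times 2^{m-i}$ and rank exactly $t$. Without this invariance the recursion would either blow up or have to be re-analysed with a depth-dependent $t_i$, which would change the leading exponent.
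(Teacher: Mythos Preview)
Your proposal is correct and follows essentially the same route as the paper: identify the dominant per-level cost as step~(3.d), set up the recurrence $\compdec(r,m,\K)\le \compdec(r-1,m-1,\K(\alpha_m))+\kappa N t^{\omega-1}\mul(\LL)$, unroll it as a geometric series, and invoke~\eqref{eq:Mul(N)}. The only point you gloss over---that $\mul(\K(\alpha_m))$ is no longer $\OO(1)$ at deeper recursion levels since the base field grows---is handled with the same level of informality in the paper, and in any case the growth is absorbed by the shrinking matrix size so the dominant term remains~(3.d).
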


\begin{IEEEproof}
Summarizing the costs of the various steps we get a
  recursive formula:
  \begin{align*}
    \compdec (r,m, \K) \leq \compdec (r-1 &, m-1, \K (\alpha_m)) + \\
                                            & \kappa \Big((N/2)^2 + (N/2)^2 t^{\omega - 2} \mul (\K (\alpha_m))+ (N/2) \mul (\LL)\\
                                            &+ t(N/2)
                                              \mul (\K(\alpha_m),\LL)
                                              + t^{\omega - 1}(N/2) \mul (\LL) + t^{\omega - 2}(N/2)^2 \mul(\K, \K (\alpha_m)) \Big),
  \end{align*}
  for some positive constant $\kappa$.  Since $\omega$ is known to be
  $\geq 2$, some terms are negligible, which
  gives
  \[
    \compdec (r,m, \K) \leq \compdec (r-1 , m-1, \K (\alpha_m)) + 
                                             \kappa \Big( \frac N 2 t^{\omega - 1}  \mul (\LL)  \Big).
  \]
After recursive calls up to depth $r+1$
  we get:
  \[
    \compdec(r,m,\K) \leq \compdec (-1, m-r-1, \K(\alpha_{m-r}, \dots,
    \alpha_m)) + \kappa  \left(\frac N 2 + \frac{N}{4} + \cdots
        + \frac{N}{2^{r+1}} \right)t^{\omega - 1} \mul(\LL).
                     \]
The cost of a recursive call for a Reed--Muller code of degree $-1$, \emph{i.e.} of dimension $0$
is nothing but the cost of Step 1, and hence is bounded by $\kappa (N^2/2^r)$.
This yields
\[  \compdec(r,m,\K) \leq \kappa \left(\frac{N^2}{2^r} + N t^{\omega-1} \mul(\LL) \right).
\]
Since $\mul(\LL) = \Omega(N \log N)$, the $N^2/2^r$ term in the right--hand side is negligible
compared to the other term, which leads to:
\[
  \compdec(r,m,\K) = \OO (t^{\omega - 1} N \mul (\LL)).
\]
This provides the complexity. Using~(\ref{eq:Mul(N)}), we deduce the $\OO (t^{\omega-1} N^2 \log N)$.
\end{IEEEproof}

\begin{rem}
  Note that our complexity formula differs from that of
  \cite{CP25b}. Indeed, compared to our former short paper, we changed
  the initialization step of the recursive decoder which originally
  involved the Dickson decoder \cite{CP25}. The new version is fully independent from
  the Dickson decoder and benefits from a much cheaper initialization step.

  Also, note that \cite[Rem.~III.8]{CP25b} refers to the Dickson decoder
  while claiming a complexity in $\OO (k t^\omega \mul (\LL))$. This claim
  was based on a first version of \cite{CP25} which was not fully correct. This has been
  corrected in the sequel. The true complexity of the Dickson decoder \cite{CP25}
  is discussed in the subsequent Section \ref{subsec:comparison}.
\end{rem}

\subsection{Comparison with the Dickson--based approach}\label{subsec:comparison}
The recursive algorithm we presented in this paper fares well compared
to the Dickson matrix--based decoding algorithm in
\cite[Thm~5.15]{CP25}.  Indeed, according to this reference, the
Dickson--based decoder costs requires the Gaussian elimination of a
rank $t$ matrix in $\LL^{N \times N}$ and $\OO (kN)$ evaluations of
elements of $\GG$ at elements of $\LL$. Even when neglecting the
Galois action part, this leads to a complexity of
$\OO (t N^2 \mul(\LL))$ operations in $\K$.

In comparison, our algorithm has a complexity of
$\OO (t^{\omega - 1} N  \mul (\LL))$ operations in $\K$, which,
since $t \leq N$ is always asymptotically better because $\omega$ can be
taken to be $< 3$.  Our algorithm is even particularly better in the
linear rate regime: $k = \Omega (N)$ where $t$ is sublinear in $N$.

\begin{rem}
  Getting into \cite[\S~5.5]{CP25}, one could argue that the Gaussian
  elimination part can be performed in
  $\OO (t^{\omega - 2} N^2 \mul (\LL))$ operations in $\K$. This claim
  is disputable since the aforementioned Gaussian elimination involves
  formal variables that are specialised ``on the fly'' during the
  decoding step. Even if one could achieve this
  $\OO (t^{\omega-2} N^2 \mul (\LL))$ operations for the decoding,
  our new algorithm would still remain the best one in the regime $t = \circ (N)$ and would
  have the same complexity as the Dickson approach in the regime $t = \Theta(N)$.
\end{rem}

 \section{Plotkin construction for matrix codes}\label{sec:5}

Inspired by the previous sections where an analogue of Plotkin
$(u ~|~ u+v)$ structure was identified for rank metric binary
Reed--Muller codes, we abstract this structure to provide a rank
analogue of Plotkin construction.  A particular interest of this
approach is that contrary to rank metric Reed--Muller codes that can
be defined only over infinite fields, we can here propose a structure
that can be applied to any rank metric codes. In particular our
construction works over finite fields, which is the context we
consider from now on even if one could define the construction in a
much broader generality. Thus, in the sequel the ground field will be
a finite field $\Fq$ with $q$ odd.

\medskip

\noindent \textbf{Caution.} In the previous section, the Reed--Muller
codes corresponded to codes of length $N$ \emph{i.e.}, $N \times N$
matrix codes. Moreover, following the notation of \cite{ACLN,CP25}, we
used $m$ for the number of generators of the Galois group which yields
$N = 2^m$. In the sequel, we go back to more usual notation on rank
metric codes by considering spaces of $m \times n$ matrices. Hence,
contrary to the previous section, $m$ (and also $2m$) will stand for
the number of rows. Later, in Section~\ref{ss:error_model} we
establish a connection with rank Reed--Muller codes and for this
specific subsection, and only in this one, we will go back to the
notation used in the previous sections.

\subsection{Construction and first properties}

\begin{defn}\label{def:rank_plotkin}
  Let $a \in \Fq$.  Let $\CC$ be an $[m \times n, k_1, d_1]_q$ code
  and $\DD$ be an $[m \times n, k_2, d_2]_q$ code. We define the
  Plotkin construction of $\CC$ and $\DD$ to be the $\Fq$-linear
  matrix rank metric code:
\begin{equation}
\CC \plotkin_a \DD \eqdef 
\left\{\begin{pmatrix}
{\Am_0+\Bm_0 } &\quad &{a(\Am_1-  \Bm_1)}\\ 
 && \\
{ \Am_1+ \Bm_1} &\quad &{\Am_0 -\Bm_0}
\end{pmatrix} \quad\colon\quad  \begin{aligned}
     &\Am_i \in \CC, \notag \\
     &\Bm_i \in \DD
\end{aligned}\right\} \subseteq \Fq^{2m \times 2n}.
\end{equation}
\end{defn}

\begin{rem}\label{Plotkin:AS}
  The rationale behind choosing odd $q$ is to adapt the decoding
  algorithm for Kummer extensions to finite fields. For $q$ even, we
  propose to consider the Plotkin construction of $\CC$ and $\DD$ to
  be the following $\Fq$-linear matrix rank metric code imitating the
  recursive structure appeared for the Artin--Schreier extensions:
\begin{equation}
\CC \plotkinAS_a \DD \eqdef 
\left\{\begin{pmatrix}
{\Am_0+\Bm_0 } &\quad &{a(\Am_1+  \Bm_1) + \Bm_0}\\ 
 && \\
{ \Am_1+ \Bm_1} &\quad &{\Am_0 +\Am_1 + \Bm_0}
\end{pmatrix} \quad\colon\quad  \begin{aligned}
     &\Am_i \in \CC, \notag \\
     &\Bm_i \in \DD
\end{aligned}\right\}.
\end{equation}
\end{rem}

\begin{pro}\label{pro:dim_plotkin}
  In the context of Definition~\ref{def:rank_plotkin},
  \[
    \dim_{\Fq} \CC \plotkin_a \DD = 2 (\dim_{\Fq} \CC + \dim_{\Fq} \DD).
  \]
\end{pro}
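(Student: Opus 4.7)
The plan is to exhibit the code as the image of an explicit $\Fq$-linear map from $\CC^2 \times \DD^2$ and to show that this map is injective, whereupon the dimension formula follows from the rank--nullity theorem.

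Concretely, I would introduce
\[
  \varphi : \map{\CC \times \CC \times \DD \times \DD}{\Fq^{2m \times 2n}}%
    {(\Am_0,\Am_1,\Bm_0,\Bm_1)}%
    {\begin{pmatrix}\Am_0+\Bm_0 & a(\Am_1-\Bm_1)\\ \Am_1+\Bm_1 & \Am_0-\Bm_0\end{pmatrix}.}
\]
This map is $\Fq$-linear, and its image is by Definition~\ref{def:rank_plotkin} exactly $\CC \plotkin_a \DD$. The source has $\Fq$-dimension $2\dim_{\Fq}\CC + 2\dim_{\Fq}\DD$, so it suffices to establish $\ker \varphi = 0$.

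For the injectivity step, suppose $\varphi(\Am_0,\Am_1,\Bm_0,\Bm_1) = \mathbf 0$. Comparing the two blocks of the first block-column gives $\Am_0+\Bm_0 = 0$ and $\Am_0-\Bm_0 = 0$; since $q$ is odd (so $2$ is invertible in $\Fq$), summing and subtracting yields $\Am_0 = \Bm_0 = 0$. The two blocks of the second block-column give $a(\Am_1-\Bm_1)=0$ and $\Am_1+\Bm_1=0$; using that $a \neq 0$ in $\Fq$ (the case $a=0$ being degenerate and anyway excluded by the intended analogy with $a_m = \alpha_m^2 \in \K^\times$ from the Kummer construction), the same argument forces $\Am_1 = \Bm_1 = 0$.

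The only mild subtlety lies in these two standing hypotheses, namely $\mathrm{char}(\Fq) \neq 2$ and $a \neq 0$; both have been fixed at the outset of Section~\ref{sec:5} and in the usage inherited from the rank Reed--Muller setting. Once injectivity is in hand the dimension formula is immediate, and an entirely analogous argument would also handle the variant $\CC \plotkin_{a}^{\diamond} \DD$ of Remark~\ref{Plotkin:AS} in characteristic $2$, which is not needed here.
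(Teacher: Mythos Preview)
Your proof is correct and follows essentially the same approach as the paper, which also argues by injectivity of the natural linear map $(\Am_0,\Am_1,\Bm_0,\Bm_1)\mapsto \varphi(\Am_0,\Am_1,\Bm_0,\Bm_1)$ (the paper phrases it via the two component maps $(\Am_0,\Bm_0)\mapsto(\Am_0+\Bm_0,\Am_0-\Bm_0)$ and $(\Am_1,\Bm_1)\mapsto(\Am_1+\Bm_1,a(\Am_1-\Bm_1))$). One cosmetic slip: the pair $\Am_0+\Bm_0,\ \Am_0-\Bm_0$ sits on the \emph{diagonal} blocks, not in a single block-column, and likewise $a(\Am_1-\Bm_1),\ \Am_1+\Bm_1$ are the anti-diagonal blocks; the equations and conclusions you draw are unaffected.
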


\begin{IEEEproof}
  It suffices to observe that the maps 
\[
  (\Am_0, \Bm_0) \mapsto (\Am_0 + \Bm_0, \Am_0 - \Bm_0) \qquad
  \text{and} \qquad (\Am_1, \Bm_1) \mapsto (\Am_1+\Bm_1, a(\Am_1 -
  \Bm_1))
\]
are injective.  
\end{IEEEproof}

\begin{rem}
The dimension formula also holds true when $q$ is even and $ \CC \plotkinAS_a \DD$ defined as in Remark \ref{Plotkin:AS}. Indeed, as the map
\[
(\Am_0, \Bm_0, \Am_1, \Bm_1) \mapsto (\Am_0 + \Bm_0, \Am_1+\Bm_1, a(\Am_1 + \Bm_1)+ \Bm_0, \Am_0 + \Am_1 + \Bm_0)
\]
is injective.
\end{rem}

\begin{lem}
For $q$ odd, the dual of the Plotkin construction $\CC \plotkin_a \DD$ of matrix rank metric codes $\CC$ and $\DD$ is the Plotkin construction of their duals. More precisely,
\[
(\CC \plotkin_a \DD)^{\perp} = \CC^{\perp} \plotkin_{a^{-1}} \DD^{\perp}.
\]
\end{lem}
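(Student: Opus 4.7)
My plan is the standard ``dimension plus inclusion'' strategy, which is particularly clean here because the dimensions match for free.

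\textbf{Step 1: Match the dimensions.} By Proposition~\ref{pro:dim_plotkin}, $\dim_{\Fq}(\CC\plotkin_a\DD) = 2(\dim_{\Fq}\CC+\dim_{\Fq}\DD)$, and the same formula applied to the duals gives
\[
\dim_{\Fq}(\CC^\perp\plotkin_{a^{-1}}\DD^\perp) = 2\bigl((mn-\dim_{\Fq}\CC)+(mn-\dim_{\Fq}\DD)\bigr).
\]
Since $(\CC\plotkin_a\DD)$ lives in $\Fq^{2m\times 2n}$, its dual has dimension $4mn - 2(\dim_{\Fq}\CC+\dim_{\Fq}\DD)$, which matches. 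So it suffices to prove the inclusion $\CC^\perp\plotkin_{a^{-1}}\DD^\perp \subseteq (\CC\plotkin_a\DD)^\perp$.

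\textbf{Step 2: Verify the inclusion by a direct trace computation.} Take generic elements
\[
\Mm = \begin{pmatrix} \Am_0+\Bm_0 & a(\Am_1-\Bm_1) \\ \Am_1+\Bm_1 & \Am_0-\Bm_0 \end{pmatrix}, \qquad
\Nm = \begin{pmatrix} \Am_0'+\Bm_0' & a^{-1}(\Am_1'-\Bm_1') \\ \Am_1'+\Bm_1' & \Am_0'-\Bm_0' \end{pmatrix}
\]
with $\Am_i\in\CC$, $\Bm_i\in\DD$, $\Am_i'\in\CC^\perp$, $\Bm_i'\in\DD^\perp$. Computing $\Tr(\Mm\Nm^\top)$ blockwise, the factors $a$ and $a^{-1}$ cancel in the top-left block. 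The key observation is that the cross terms like $\Tr(\Am_0\Bm_0'^\top)$, $\Tr(\Bm_0\Am_0'^\top)$, $\Tr(\Am_1\Bm_1'^\top)$, $\Tr(\Bm_1\Am_1'^\top)$ appear with opposite signs in the top-left versus bottom-right diagonal blocks (because of the sign pattern $\Am_0-\Bm_0$ and $\Am_1-\Bm_1$), so they cancel in the full trace. What remains is $2\bigl(\Tr(\Am_0\Am_0'^\top)+\Tr(\Bm_0\Bm_0'^\top)+\Tr(\Am_1\Am_1'^\top)+\Tr(\Bm_1\Bm_1'^\top)\bigr)$, and each of these four summands vanishes directly from $\Am_i \perp \Am_i'$ and $\Bm_i \perp \Bm_i'$. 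Since $q$ is odd, the factor $2$ is invertible and the whole thing is $0$.

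\textbf{Step 3: Conclude.} Combining the inclusion of Step 2 with the equality of dimensions from Step 1 yields $(\CC\plotkin_a\DD)^\perp = \CC^\perp\plotkin_{a^{-1}}\DD^\perp$.

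There is no real obstacle: the only delicate point is the sign bookkeeping in Step 2, and the role of $q$ odd is simply to ensure that $a^{-1}$ exists and that the common factor $2$ does not kill the argument (for $q$ even the Artin--Schreier-flavoured variant $\plotkinAS$ of Remark~\ref{Plotkin:AS} should be treated separately with an analogous but distinct computation).
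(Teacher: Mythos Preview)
Your proof is correct and follows essentially the same approach as the paper's: match dimensions via Proposition~\ref{pro:dim_plotkin}, then verify orthogonality by the same block trace computation, obtaining $2\Tr(\Am_0{\Am_0'}^\top + \Bm_0{\Bm_0'}^\top + \Am_1{\Am_1'}^\top + \Bm_1{\Bm_1'}^\top)=0$. One minor remark: the factor $2$ is harmless regardless (since $2\cdot 0=0$); the genuine role of $q$ odd is that the dimension formula of Proposition~\ref{pro:dim_plotkin} relies on the injectivity of $(\Am_0,\Bm_0)\mapsto(\Am_0+\Bm_0,\Am_0-\Bm_0)$, which fails in characteristic~$2$.
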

\begin{IEEEproof}
  Using Proposition~\ref{pro:dim_plotkin} we observe that the
  dimensions coincide. Therefore, it suffices to establish that
  $\Tr(\Cm \Dm^{\top}) =0$ for any $\Cm \in \CC \plotkin_a \DD$ and
  $\Dm \in \CC^{\perp} \plotkin_{a^{-1}} \DD^{\perp}$.
For $ \Am_i \in \CC, \Bm_i \in \DD$ and $ \Sm_i \in \CC^{\perp}, 
     \Um_i \in \DD^{\perp}$,
      \begin{align*}
  \Tr (\Cm \Dm^{\top}) &=  \Tr \left(\begin{pmatrix}
{\Am_0+\Bm_0 } &\quad &{a(\Am_1-  \Bm_1)}\\ 
 && \\
{ \Am_1+ \Bm_1} &\quad &{\Am_0 -\Bm_0}
\end{pmatrix}
     \begin{pmatrix}
{\Sm_0^{\top}+\Um_0^{\top} } &\quad &{\Sm_1^{\top} +  \Um_1^{\top}}\\ 
 && \\
{a^{-1} (\Sm_1^{\top} - \Um_1^{\top})} &\quad &{\Sm_0^{\top} -\Um_0^{\top}}
\end{pmatrix}\right)\\
&= \Tr\Bigg((\Am_0 + \Bm_0)(\Sm_0^{\top} + \Um_0^{\top}) + (\Am_1 - \Bm_1)( \Sm_1^{\top} - \Um_1^{\top}) + (\Am_1 + \Bm_1)(\Sm_1^{\top} + \Um_1^{\top}) + (\Am_0 - \Bm_0)(\Sm_0^{\top} - \Um_0^{\top})\Bigg)\\
&=2\Tr\Bigg( \Am_0 \Sm_0^{\top} +  \Bm_0 \Um_0^{\top} + \Am_1 \Sm_1^{\top} +  \Bm_1 \Um_1^{\top}   \Bigg) = 0.
\end{align*}
 It completes the proof of the lemma.
\end{IEEEproof}

\begin{rem}
  Note that the result does not hold when $q$ is even for the codes
  defined in Remark~\ref{Plotkin:AS}. Still, one can prove that:
  \[
    {(\CC \plotkinAS_a \DD)}^\perp = \left\{
      \begin{pmatrix}
        \Sm_0 + \Um_0 + \Um_1 & \quad & \Sm_1 + \Um_1
        \\ & \\
        a(\Sm_1 + \Um_1)+\Um_0+\Um_1 & \quad & \Sm_0+\Sm_1+\Um_0+\Um_1 
      \end{pmatrix} ~:~ \begin{aligned}
     &\Sm_i \in \CC^\perp, \notag \\ 
     &\Um_i \in \DD^\perp
\end{aligned}
      \right\}.
  \]
  To obtain this structure, we started from rank Reed--Muller in the
  Artin Schreier case and expressed the dual (which is also a rank
  Reed--Muller code) in the dual basis $\mathcal{B}_m^*$ as described in
  Remark~\ref{rem:parity-check_AS}.
\end{rem}

\subsection{Decoding} In the sequel, for convenience's sake we
restrict to the case of square matrices, \emph{i.e.}, $m=n$.

\subsubsection{When $a$ is not a square in $\Fq$}\label{sss:a_non_square}
If $a$ is not a square in $\Fq$, then the decoding of such a code can
be performed by running in a same fashion as in Section~\ref{decoding}
for binary rank metric Reed--Muller codes.

Precisely, suppose that we are given when given two codes
$\CC \subset \Fq^{m\times n}$ and $\DD \subset \Fq^{m\times n}$
equipped with decoders such that $\DD \otimes \F_{q^2}$ can correct
errors of rank $t$ and $\CC \otimes \F_{q^2}$ can correct ``erasures''
of rank $t$, \emph{i.e.} can decode when the error term has rank
$\leq t$ and its row space is contained in a known space $V$ of
dimension $t$. Suppose that we are given
\[
  \Ym = \Cm + \Em,\quad \text{where}\quad \Cm \in \CC \plotkin_a \DD, \quad
  \text{and}\quad \rk \Em \leq t.
\]
Denote
\[
  \Cm =
  \begin{pmatrix}
    \Am_0 + \Bm_0 & a(\Am_1 - \Bm_1) \\
    \Am_1 + \Bm_1 & \Am_0 - \Bm_0
  \end{pmatrix}
  \quad \text{and} \quad
  \Em =
  \begin{pmatrix}
    \Em_{00} & \Em_{01} \\ \Em_{10} & \Em_{11}
  \end{pmatrix}.
\]
Choose a square root $\sqrt a$ of $a$ in $\F_{q^2}$.
Then one can perform the decoding as follows:
\begin{enumerate}
\item Fold $\Ym$ as
  \begin{equation}\label{eq:fold_plotkin}
\mathbf{Fold}(\Ym) \eqdef  \begin{pmatrix}
    \frac{1}{{\sqrt{a}}} \II_m  & \II_m
  \end{pmatrix}
  \Ym 
  \begin{pmatrix}
    \II_n \\ - \frac{1}{{\sqrt a}} \II_n
  \end{pmatrix} = 
  \frac{2}{{\sqrt{a}}}\Bm_0 + 2 \Bm_1  
   + \underbrace{\frac{1}{{\sqrt{a}}} \Em_{00} - \frac{1}{a} \Em_{01} +  \Em_{10} - \frac{1}{a} \Em_{11}}_{\eqdef\ \Em'}.
    \end{equation}
\item The error term $\Em'$ has rank $\leq t$. Thus, apply the decoder of
  $\DD$ to the above folding to recover
  $\frac{2}{\sqrt a} \Bm_0 + 2\Bm_1$ and hence the pair $\Bm_0, \Bm_1$
  (immediate since the $\Bm_i$'s have entries in $\Fq$ while
  $\sqrt a \in \F_{q^2} \setminus \Fq$).
\item With the knowledge of $\Bm_0, \Bm_1$ we are reduced to the following
  decoding problem:
  \begin{equation}\label{eq:Ytilde_2}
    \widetilde{\Ym} =
    \begin{pmatrix}
      \Am_0 & a \Am_1 \\ \Am_1 & \Am_0
    \end{pmatrix} + \Em.
  \end{equation}
  Next, using a partial folding of $\widetilde \Ym$, we get the following decoding problem
  \begin{equation}\label{eq:fold_Ytilde}
    \widetilde \Ym     \begin{pmatrix}
      \mathbf{I}_n \\ -\frac{1}{\sqrt a}\mathbf{I}_n
                       \end{pmatrix} =
                       \begin{pmatrix}
    \Am_0 - \sqrt a \Am_1 \\
    \Am_1 - \frac{1}{\sqrt{a}} \Am_0
                       \end{pmatrix} +
    \begin{pmatrix}
    \Em_{00} - \frac{1}{\sqrt{a}} \Em_{01} \\
    \Em_{10} - \frac{1}{\sqrt{a}} \Em_{11} 
\end{pmatrix}                   
\end{equation}
and under Assumption~\ref{as:2} below (which is very similar to Assumption~\ref{as:1}
but without also involving recursive calls)
the row space of the error term is known. Applying our erasure
decoder to the lower part:
\[
  \Am_1 - \frac{1}{\sqrt{a}} \Am_0 + (\Em_{10} - \frac{1}{\sqrt{a}} \Em_{11}),
\]
we can recover $\Am_1 - \frac{1}{\sqrt a} \Am_0$ and immediately deduce
the pair $\Am_0, \Am_1$ which yields the whole $\Cm$.
\end{enumerate}

\begin{assumption}\label{as:2}
  In the process above, the folded error $\Em'$ of (\ref{eq:fold_plotkin})
  has the same rank as $\Em$.
\end{assumption}

\subsubsection{When $a$ is a square in $\Fq$}
If $a$ is a square in $\Fq$, then one needs to adopt a slightly
different approach that requires to involve twice more calls to the decoders but
without working in field extensions. Namely, let us also fix a square root $\sqrt{a} \in \Fq$
of $a$. Given $\Ym = \Cm + \Em$, the decoding will run as follows:
\begin{enumerate}
\item Fold $\Ym$ in two different manners:
  \begin{align*}
    \mathbf{Fold}_1(\Ym) &\eqdef  \begin{pmatrix}
    \frac{1}{{\sqrt{a}}} \II_m  & \II_m
  \end{pmatrix}
  \Ym 
  \begin{pmatrix}
    \II_n \\ - \frac{1}{{\sqrt a}} \II_n
  \end{pmatrix} = 
  \frac{2}{{\sqrt{a}}}\Bm_0 + 2 \Bm_1  
    + \frac{1}{{\sqrt{a}}} \Em_{00} - \frac{1}{a} \Em_{01} + \Em_{10} - \frac{1}{\sqrt{a}} \Em_{11}; \\
    \mathbf{Fold}_2(\Ym) &\eqdef  \begin{pmatrix}
    -\frac{1}{{\sqrt{a}}} \II_m  & \II_m
  \end{pmatrix}
  \Ym 
  \begin{pmatrix}
    \II_n \\  \frac{1}{{\sqrt a}} \II_n
  \end{pmatrix} = 
  -\frac{2}{{\sqrt{a}}}\Bm_0 + 2 \Bm_1  
    - \frac{1}{{\sqrt{a}}} \Em_{00} - \frac{1}{a} \Em_{01} +\Em_{10} + \frac{1}{\sqrt{a}} \Em_{11}.
  \end{align*}
\item Apply the decoder of $\DD$ to the two above foldings to recover
  $\frac{2}{\sqrt{a}}\Bm_0 + 2\Bm_1$ and $-\frac{2}{\sqrt{a}}\Bm_0 + 2\Bm_1$ from which
  we deduce the pair $\Bm_0, \Bm_1$.
\item We deduce another decoding problem which is the same as (\ref{eq:Ytilde_2}).
  Next, we involve partial foldings as in~(\ref{eq:fold_Ytilde}), however, here, since we
  are no longer working over a quadratic extension, we need two partial foldings:
  \[
    \widetilde{\Ym}
    \begin{pmatrix}
      \II_n \\ -\frac{1}{\sqrt{a}}\II_n
    \end{pmatrix} \quad \text{and}  \quad
    \widetilde{\Ym}
    \begin{pmatrix}
      \II_n \\ \frac{1}{\sqrt{a}}\II_n
    \end{pmatrix}
  \]
  As in Section~\ref{sss:a_non_square}, taking the bottom parts of these partial foldings,
  leads to two decoding problems whose error term has a known row space. Namely:
  \[
    \Am_1 - \frac{1}{\sqrt{a}}\Am_0 + (\Em_{10} - \frac{1}{\sqrt{a}}\Em_{11}) \quad
    \text{and}\quad
    \Am_1 + \frac{1}{\sqrt{a}}\Am_0 + (\Em_{10} + \frac{1}{\sqrt{a}}\Em_{11}).
  \]
  As in Section~\ref{sss:a_non_square}, applying the erasure decoder for $\CC$ to the two
  instances above,  yields $\Am_1 \pm \frac{1}{\sqrt{a}}\Am_0$ from which we deduce the
  pair $\Am_0, \Am_1$.
\end{enumerate}
The process above is summarised in Algorithm~\ref{algo:Plotkin}.

\begin{algorithm}
  \caption{\textsc{\textsc{DecodePlotkinConstruction}($\mathbf{Y}; \CC, \DD, a$)}
  \hfill /* Case $a$ is a square in $\Fq$ */}\label{algo:Plotkin}
\begin{flushleft}
  \textbf{Input:} An $[m \times m, k_1]_q$ code $\CC$ and an $[m \times m, k_2]_q$ code $\DD$
  with decoder correcting $t$ errors for $\DD$ and $t$ erasures for $\CC$\\
  A matrix $\mathbf{Y} \in \mathbb{F}_{q}^{2m \times 2m} = \Cm + \Em$ for some $\Em \in \Fq^{2m \times 2m}$ of rank $\leq t$\\
\textbf{Output:} Either $\mathbf{C}$ or \emph{decoding failure}
\end{flushleft}
\begin{algorithmic}[1]
\State $\sqrt a \gets $ a square root of $a$ in $\Fq$.
\State \textbf{Foldings:}  \[\Ym' \gets \begin{pmatrix}
    \frac{1}{{\sqrt{a}}} \II_m  & \II_m
  \end{pmatrix}
  \Ym 
  \begin{pmatrix}
    \II_n \\ - \frac{1}{{\sqrt{a}}} \II_n
  \end{pmatrix}  \quad \text{and} \quad \Ym'' \gets \begin{pmatrix}
    - \frac{1}{{\sqrt{a}}} \II_m  & \II_m
  \end{pmatrix}
  \Ym 
  \begin{pmatrix}
    \II_n \\  \frac{1}{{\sqrt{a}}} \II_n
  \end{pmatrix}\]
\State \textbf{Recovering the part of $\DD$:}  
\begin{align*}
&\frac{2}{{\sqrt{a}}} \Bm_0 + 2 \Bm_1 \gets ErrorDec_{\DD}(\Ym') \text{ and } - \frac{2}{{\sqrt{a}}} \Bm_0 + 2 \Bm_1 \gets ErrorDec_{\DD}(\Ym'')\\
&\Bm_0, \Bm_1 \gets \pm \frac{2}{{\sqrt{a}}} \Bm_0 + 2 \Bm_1
\end{align*}
 \State \textbf{Removing $\Bm_0, \Bm_1$ from $\Ym$:} $\widetilde{\Ym} \gets \Ym - \begin{pmatrix} \Bm_0 & - a\Bm_1\\ \Bm_1 & -\Bm_0 \end{pmatrix}$
 \State \textbf{Recovering the part of $\CC$:} $
 \begin{pmatrix}
   \widetilde{\Ym}_{00} \\ \widetilde{\Ym}_{01}
 \end{pmatrix}
 \gets \widetilde{\Ym} \begin{pmatrix} \II_n \\ - \frac{1}{\sqrt{a}}\II_n \end{pmatrix}\qquad $
 and $\qquad
 \begin{pmatrix}
   \widetilde{\Ym}_{10} \\ \widetilde{\Ym}_{11}
 \end{pmatrix}
 \gets \widetilde{\Ym} \begin{pmatrix} \II_n \\  \frac{1}{\sqrt{a}}\II_n \end{pmatrix} $
 \State
 \[
   \Am_0 -\frac{1}{\sqrt{a}}\Am_1 \gets ErasureDec_{\CC}(\widetilde{\Ym}_{01}), \qquad
   \Am_0 + \frac{1}{\sqrt{a}}\Am_1 \gets ErasureDec_{\CC}(\widetilde{\Ym}_{11}), \qquad
 \]
\If{No \emph{Decoding failure}}
\State $\Am_0, \Am_1 \gets \Am_0 \pm \frac{1}{\sqrt{a}}\Am_1$   
   \State \Return $\begin{pmatrix} \Am_0 + \Bm_0 & a(\Am_1  - \Bm_1)\\
                       \Am_1 + \Bm_1 & \Am_0 - \Bm_0 \end{pmatrix}$
\Else
   \State{\Return \emph{Decoding failure}}
\EndIf
\end{algorithmic}
\end{algorithm}

\subsubsection{The Artin--Schreier case}
In characteristic $2$, the Artin--Schreier case can be decoded in a similar fashion by separating the cases when $a$ can be written as $\alpha^2+\alpha$ for $\alpha \in \F_{q^2}\setminus \Fq$ and
the case where $\alpha \in \Fq$.

\subsection{Probability analysis}\label{ss:proba}
According to Assumption~\ref{as:1}, for the algorithm to succeed, we
need the following \emph{folding} of $\Em$ to have same rank as
$\Em$. The folding has, up to scalar multiplication the following
form:
\[
 \text{Fold}(\Em) \eqdef\begin{pmatrix} \II_m & b \II_m
  \end{pmatrix}
  \Em   \begin{pmatrix}
    b'\II_n \\ \II_n
  \end{pmatrix}
\]
for some nonzero $b,b'$ which are either elements of $\Fq$ or elements $\Fqq$
whose squares are in $\Fq$, depending on whether $a$ is a square or a non square
in $\Fq$.

To estimate the probability
that $\text{Fold}(\Em)$ has rank $t$ we proceed in two steps.
First we evaluate the probability:
\[
  \Prob{\rk \left(\begin{pmatrix} \II_m & b \II_m
  \end{pmatrix}
  \Em\right) = t}.
\]
Then, we evaluate the overall probability. It turns out that the probability behaves
differently if $a$ is a square or a non square. Thus, we treat the two situations separately.

\subsubsection{If $a$ is a square in $\Fq$}\label{sss:proba_a_square}
Note first that the right kernel of $\begin{pmatrix} \II_m & b \II_m
\end{pmatrix}$ has dimension $m$. Denote by $K$ this kernel. Second,
note that since $\Em$ is a uniformly random $2m \times 2n$ matrix of
rank $t$, its column space $U$ is a uniformly random subspace of
$\Fq^{2m}$ of dimension $t$. For the rank not to collapse during the
first part of the folding we should have $U \cap K = \{0\}$. Here
we use the following statement.

  \begin{lem}\label{lem:esp}
    Let $K$ be a subspace of $\Fq^{2m}$ of dimension $m$ and $U$ be a
    uniformly random subspace of $\Fq^{2m}$ of dimension $t < m$.
    Then \[
      \Prob{U \cap K \neq 0} = \OO (q^{t-m-1}).\]
  \end{lem}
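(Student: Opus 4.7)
The plan is to use a double counting argument on pairs $(U, v)$ where $v$ is a nonzero vector in $U \cap K$, and to recognise that this naïve count overcounts each ``bad'' $U$ by a factor at least $q-1$, which is precisely what produces the extra factor of $q^{-1}$ in the target bound.

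First I would fix a nonzero vector $v \in \F_q^{2m}$ and compute, using a standard orbit--stabiliser / counting argument on Gaussian binomial coefficients, that the number of $t$-dimensional subspaces of $\F_q^{2m}$ containing $v$ equals $\binom{2m-1}{t-1}_q$. Since $U$ is uniform among $\binom{2m}{t}_q$ subspaces, this gives
\[
  \Prob{v \in U} \;=\; \frac{\binom{2m-1}{t-1}_q}{\binom{2m}{t}_q} \;=\; \frac{q^{t}-1}{q^{2m}-1},
\]
by the classical identity $\binom{2m}{t}_q (q^t-1) = \binom{2m-1}{t-1}_q (q^{2m}-1)$.

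Next I would count pairs. Summing the above over $v \in K\setminus\{0\}$ and exchanging the order of summation yields
\[
  \sum_{v \in K\setminus\{0\}} \Prob{v \in U} \;=\; \sum_{d \geq 1} (q^d - 1)\, \Prob{\dim(U\cap K) = d}.
\]
The naïve union bound bounds $\Prob{U\cap K \neq 0}$ by the left-hand side and loses a factor $q$; to recover it, I would exploit that for every $U$ with $U\cap K\neq 0$ we have $|U\cap K \setminus \{0\}| = q^{\dim U\cap K} - 1 \geq q-1$. Therefore
\[
  (q-1)\,\Prob{U\cap K \neq 0} \;\leq\; \sum_{d\geq 1}(q^d-1)\Prob{\dim(U\cap K)=d} \;=\; (q^m-1)\,\frac{q^{t}-1}{q^{2m}-1}.
\]
Dividing by $q-1$ and using $(q^m - 1)(q^t-1)/((q-1)(q^{2m}-1)) = \OO(q^{t-m-1})$ for $t < m$ yields the desired bound.

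The main obstacle is purely psychological: the raw union bound gives only $\OO(q^{t-m})$, and one must notice that since $K \cap U$ is itself an $\F_q$-subspace, every bad subspace contributes at least $q-1$ vectors to the pair count, which is exactly the slack needed to gain the extra factor $q^{-1}$. Once this observation is made, the rest is book-keeping with Gaussian binomial coefficients.
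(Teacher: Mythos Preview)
Your proof is correct and follows essentially the same approach as the paper's: both compute the expectation $\Esp{|U\cap K\setminus\{0\}|}$ (the paper by summing over elements of $U$ via a generator matrix, you by summing over $v\in K\setminus\{0\}$ via Gaussian binomials) and then exploit that a nonzero intersection contributes at least $q-1$ elements, which the paper phrases as an application of Markov's inequality. The two arguments are the same double count viewed from opposite sides.
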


  \begin{IEEEproof}    
    Let $\Mm_U$ be an arbitrary $t \times 2m$ generator matrix of $U$,
    \emph{i.e.} a matrix whose rows span $U$. Then $\Mm_U$ is a
    uniformly random full--rank $t \times 2m$ matrix and for any
    $\mv \in \Fq^t \setminus \{0\}$ then $\mv \Mm_U$ is a uniformly
    random vector in $\Fq^{2m}\setminus \{0\}$. Thus,
    \[\Prob{\mv \Mm_U \in K} = \frac{q^m-1}{q^{2m}-1} = q^{-m}(1+ \circ (1)).\]  Therefore,
    \begin{align*}
      \Esp{|U \cap K \setminus \{0\}|} &= \sum_{\mv \in \Fq^t \setminus \{0\}}
                                         \Prob{\mv \Mm_U \in K}\\
                                       & =  (q^t-1)q^{-m}(1 + \circ (1))\\
                                       & =  q^{t-m}(1+ \circ (1))                         
    \end{align*}
    Next, if $U \cap K \neq \{0\}$ then it has cardinality $\geq q$
    and, by Markov inequality,
    \[
      \Prob{|U \cap K \setminus \{0\}| \geq q-1} \leq \frac{\Esp{|U \cap K \setminus \{0\}|}}{q-1}.
    \]
    Hence the result.
  \end{IEEEproof}
 
  As a consequence of this lemma, the probability that the first folding
  keeps rank $t$ is in $1- \OO (q^{t-m-1})$.
  For the second folding, assuming that it has rank $t$, then
  the obtained matrix has a row space $V$ of dimension $t$ which is
  nothing but the row space of $\Em$. Hence it is a uniformly random
  space of dimension $t$ in $\Fq^{2n}$.
  
  Denote by $K'$ the left kernel of $
  \begin{pmatrix}
    \II_n \\ b' \II_n
  \end{pmatrix}
$, then,
  according to Lemma~\ref{lem:esp},
  \[
    \Prob{V \cap K' \neq 0} = \OO (q^{t-n-1}).
  \]
  In summary, the probability that both folding have rank $t$ is
  $1 - \OO (q^{t-\min(m,n)-1})$.

\subsubsection{If $a$ is not a square in $\Fq$}
In this situation, then $b,b'$ are in $\Fqq \setminus \Fq$.  Moreover,
similarly to the case of rank metric Reed--Muller codes studied in
Section~\ref{Binary_rankRM} we assume $b^2 = {b'}^2 = a \in \Fq$. For
the first folding, one can do the same reasoning with the only
difficulty that the column space $U$ of $ \Em $ is in $\Fq^{2m}$ and
the right kernel of $\begin{pmatrix} \II_m & b \II_m
\end{pmatrix} $ is in $\Fqq^{2m}$. However, this situation is even better
as explains the following lemma.
  
  \begin{lem}
    Let $K$ be a subspace of $\Fqq^{2m}$ of dimension $m$.
    Let $U$ be
    a uniformly random subspace of dimension $t$ in $\Fq^{2m}$ and
    $U_2 \eqdef U \otimes \Fqq$. Then
    \[
      \Prob{U_2 \cap K \neq 0} = \OO (q^{2t-2m-2}).
    \]
  \end{lem}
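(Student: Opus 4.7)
The plan is to follow the same route as Lemma~\ref{lem:esp}: invoke a first-moment / Markov-type bound on $|U_2\cap K\setminus\{0\}|$ and compute the expectation by summing single-vector probabilities. Since $U_2\cap K$ is an $\Fqq$-subspace of $\Fqq^{2m}$, a non-trivial intersection contains at least $q^2-1$ nonzero elements, so
\begin{equation*}
\Prob{U_2\cap K\neq 0}\;\leq\;\frac{\Esp{|U_2\cap K\setminus\{0\}|}}{q^2-1}\;=\;\frac{1}{q^2-1}\sum_{\kv\in K\setminus\{0\}}\Prob{\kv\in U_2}.
\end{equation*}
This replaces the factor $q-1$ used in the $\Fq$-case by $q^2-1$, which is where the ``extra'' two powers of $q$ that separate $q^{2t-2m-2}$ from the naive first-moment estimate will come from.

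First, I would fix an $\Fq$-basis $\{1,\lambda\}$ of $\Fqq$ and write every $\kv\in\Fqq^{2m}$ uniquely as $\kv=\kv_1+\lambda\kv_2$ with $\kv_1,\kv_2\in\Fq^{2m}$. The key observation is that $U_2=U\otimes\Fqq$ consists precisely of those $\kv$ with $\kv_1,\kv_2\in U$; equivalently, $\kv\in U_2$ iff $\text{span}_{\Fq}(\kv_1,\kv_2)\subseteq U$. The probability that a fixed $\Fq$-subspace of dimension $d\in\{1,2\}$ is contained in a uniformly random $t$-dimensional $U\subseteq\Fq^{2m}$ is a ratio of Gaussian binomial coefficients; asymptotically it behaves like $q^{t-2m}$ for $d=1$ and like $q^{2t-4m}$ for $d=2$.

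Next, I would split the sum over $\kv\in K\setminus\{0\}$ according to $d\eqdef\dim_{\Fq}\text{span}(\kv_1,\kv_2)\in\{1,2\}$. The $d=1$ locus is exactly $\Fqq\cdot K_0\setminus\{0\}$, where $K_0\eqdef K\cap\Fq^{2m}$, and the $d=2$ locus is its complement in $K\setminus\{0\}$. Bounding the number of $d=2$ contributors by $|K|=q^{2m}$ and plugging in the per-vector probabilities yields a $d=2$ contribution of order $q^{2m}\cdot q^{2t-4m}=q^{2t-2m}$, which after division by $q^2-1$ gives precisely the target bound $\OO(q^{2t-2m-2})$.

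The remaining step, and the main obstacle, is the $d=1$ contribution, which is of order $q^{2e+t-2m}$ with $e=\dim_{\Fq}K_0$. This is $\OO(q^{2t-2m})$ only when $e$ is small (essentially $e\lesssim t/2$), so a genericity argument on $K$ is required. For the subspaces $K$ that actually arise in Section~\ref{ss:proba} -- namely the left and right kernels of matrices of the form $\begin{pmatrix}\II_m & \lambda\II_m\end{pmatrix}$ with $\lambda^2\in\Fq$ a non-square -- a direct verification (analogous to the computation carried out in Section~\ref{sss:proba_a_square}) shows that $K\cap\Fq^{2m}=\{0\}$, so $e=0$ and the $d=1$ contribution vanishes identically. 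This is really the crux: without an assumption that $K$ has trivial $\Fq$-rational part, the stated exponent $2t-2m-2$ is too optimistic, and one only obtains the weaker $\OO(q^{e+t-2m-1}+q^{2t-2m-2})$ bound by also applying Lemma~\ref{lem:esp} to the $\Fq$-subspace $K_0$. The proof to be written down should therefore either make the hypothesis on $K_0$ explicit or verify it for the $K$ at hand before concluding.
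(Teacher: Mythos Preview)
Your first-moment-plus-Markov strategy is exactly the paper's: it parametrises $U_2$ by pairs $(\mv_1,\mv_2)\in\Fq^{2t}$, asserts that $(\mv_1+b\mv_2)\Mm_U$ is uniformly distributed in $\Fqq^{2m}$, sums to get $\Esp{|U_2\cap K\setminus\{0\}|}\approx q^{2t-2m}$, and divides by $q^2-1$. The concern you flag is genuine and is glossed over in the paper's argument: the uniformity assertion fails when $\mv_1,\mv_2$ are $\Fq$-linearly dependent, because then $(\mv_1+b\mv_2)\Mm_U$ is an $\Fqq$-multiple of a vector in $\Fq^{2m}\setminus\{0\}$ and the probability that it lands in $K$ is governed by $K_0=K\cap\Fq^{2m}$, not by $|K|/q^{4m}$. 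In fact the lemma is false for arbitrary $K$: taking $K=K_0\otimes\Fqq$ with $\dim_{\Fq}K_0=m$ one has $U_2\cap K=(U\cap K_0)\otimes\Fqq$, so $\Prob{U_2\cap K\neq 0}=\Prob{U\cap K_0\neq 0}=\Theta(q^{t-m-1})$, which exceeds $q^{2t-2m-2}$ whenever $t<m-1$. Your proposed remedy --- add the hypothesis $K_0=0$, or verify it directly for the specific kernels of $\begin{pmatrix}\II_m & b\II_m\end{pmatrix}$ that actually occur in \S\ref{ss:proba} --- is exactly the missing ingredient.

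One small slip: the $d=1$ locus in $K$ is $\{\mu v:\mu\in\Fqq^*,\ v\in K_0\setminus\{0\}\}$, of size $(q+1)(q^e-1)$; the full $\Fqq$-span of $K_0$ (of size $q^{2e}$) already contains $d=2$ vectors once $e\geq 2$. Hence the $d=1$ contribution to the expectation is of order $q^{e+t+1-2m}$ rather than $q^{2e+t-2m}$. This does not change your conclusion.
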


  \begin{IEEEproof}
    Let $\Mm_{U}$ be an arbitrary $t \times 2m$ generator matrix of $U$.  Let
    $b \in \Fqq \setminus \Fq$.  Elements of $U_2$ are
    $\mv_1\Mm_U + b \mv_2 \Mm_U$ for $\mv_1, \mv_2 \in \Fq^{t}$.
    Similarly to the proof of Lemma~\ref{lem:esp}, $\mv_1\Mm_U + b \mv_2 \Mm_U$
    turns out to be a uniformly random element of $\F_{q^2}^{2m}$ and we deduce that
    \[
      \Prob{\mv_1\Mm_U + b \mv_2 \Mm_U \in K} = q^{-2m}(1+\circ (1)).
    \]
    Next,
    \begin{align*}
      \Esp{|U_2 \cap K \setminus \{0\}|} & = 
      \sum_{(\mv_1, \mv_2) \in \Fq^{2t} \setminus \{0\}} \Prob{\mv_1\Mm_U + b \mv_2 \Mm_U \in K} \\
      & = q^{2t-2m}.
    \end{align*}
    Finally, as for Lemma~\ref{lem:esp}, Markov inequality permits to
    conclude.
  \end{IEEEproof}

  For the second folding, the situation gets more complicated since
  the obtained matrix after the first folding is not a uniformly
  random matrix of rank $t$ in $\Fqq^{m \times 2n}$. Indeed its row
  space $V_2$ is actually $V \otimes \Fqq$ where $V$ is the row space
  of $\Em$. In short, its row space has a basis in $\Fq^{2n}$ which
  does not hold for any such rank $t$ matrix in $\Fqq^{m \times
    2n}$. Still, one can prove that its row space is uniform among the
  $t$--dimensional spaces of the form $V \otimes \Fqq$.  Then, the
  following statement will permit to conclude.

  \begin{lem}
    Let $V$ be a uniformly random subspace of dimension $t$
    of $\Fq^{2m}$. Let $K$ be the left kernel of $
    \begin{pmatrix}
      \II_n \\ b' \II_n
    \end{pmatrix}
    $. Then
    \[
      \Prob{ (V \otimes \Fqq) \cap K \neq \{0\}} = \OO (q^{2t -2n - 2}).
    \]
  \end{lem}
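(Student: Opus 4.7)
The plan is to mirror the template of the preceding lemma: estimate $\Esp{|(V\otimes\Fqq)\cap K \setminus \{0\}|}$ and then invoke Markov's inequality. The new ingredient compared to the first lemma of the subsection is that the intersection is an $\Fqq$-subspace, not merely an $\Fq$-subspace; in particular, whenever it is nonzero it contains an $\Fqq$-line and hence has at least $q^2-1$ nonzero elements. Therefore
\[
  \Prob{(V\otimes\Fqq)\cap K \neq \{0\}} \;\leq\; \frac{\Esp{|(V\otimes\Fqq)\cap K \setminus \{0\}|}}{q^2-1},
\]
which is what is responsible for the extra factor $q^{-2}$ in the target bound.

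Next I would bound the expectation by $\sum_{\vv\in K\setminus\{0\}} \Prob{\vv\in V\otimes\Fqq}$. Fix $b\in\Fqq\setminus\Fq$ with $b^2=a$ and decompose each $\vv\in\Fqq^{2n}$ uniquely as $\vv=\vv_1+b\vv_2$ with $\vv_1,\vv_2\in\Fq^{2n}$. Then $\vv\in V\otimes\Fqq$ if and only if $\vv_1,\vv_2\in V$, and for a uniformly random $t$-dimensional $V$ this probability depends only on $s(\vv)\eqdef\dim_\Fq\langle\vv_1,\vv_2\rangle$, being equal to the ratio between the number of $t$-dimensional $\Fq$-subspaces of $\Fq^{2n}$ containing $\langle\vv_1,\vv_2\rangle$ and the total number of such subspaces. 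A standard Gaussian-binomial estimate then yields $\Prob{\vv_1,\vv_2\in V}=\OO(q^{s(\vv)(t-2n)})$.

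The heart of the argument is then to analyse $s(\vv)$ for $\vv\in K\setminus\{0\}$. Using that $K=\{(-b'\wv,\wv):\wv\in\Fqq^n\}$, and writing $\wv=\uv_1+b\uv_2$ with $\uv_i\in\Fq^n$ and $b'=b'_1+b'_2 b$ with $b'_i\in\Fq$, the pair $(\vv_1,\vv_2)\in\Fq^{2n}\times\Fq^{2n}$ can be written explicitly in terms of $\uv_1,\uv_2$ and the scalars $a,b'_1,b'_2$. The claim I would establish is that for every nonzero $\wv$ the vectors $\vv_1,\vv_2$ are $\Fq$-linearly independent, so that $s(\vv)=2$. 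A hypothetical relation $\vv_2=\lambda\vv_1$ with $\lambda\in\Fq$ leads, after eliminating the $\uv_i$'s, to an equation whose solvability in $\Fq$ amounts to $a$ being a square in $\Fq$, contradicting the standing hypothesis of the subsection. I expect this to be the main obstacle, since it requires careful bookkeeping of the mixing between the $\Fq$- and $b$-components induced by left multiplication by $-b'$, and one must check that no degeneration of the coefficients $b'_1,b'_2$ spoils the argument.

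Granting $s(\vv)=2$ uniformly on $K\setminus\{0\}$, together with $|K\setminus\{0\}|=q^{2n}-1$, one obtains $\Esp{|(V\otimes\Fqq)\cap K \setminus \{0\}|}=\OO(q^{2n}\cdot q^{2(t-2n)})=\OO(q^{2t-2n})$, and dividing by $q^2-1$ yields the announced bound $\OO(q^{2t-2n-2})$.
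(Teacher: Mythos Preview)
Your proposal is correct and follows essentially the same strategy as the paper: establish that every nonzero $\vv\in K$ has $\Fq$-linearly independent components $\vv_1,\vv_2$, bound the expected size of $(V\otimes\Fqq)\cap K\setminus\{0\}$, and conclude with Markov's inequality using that a nonzero $\Fqq$-subspace has at least $q^2-1$ nonzero vectors. The only differences are cosmetic: the paper sums over pairs $(\mv_0,\mv_1)\in\Fq^t\times\Fq^t$ parameterising $V\otimes\Fqq$ rather than over $\vv\in K$, and it proves the linear-independence claim more directly by observing that $\vv_0=\lambda\vv_1$ with $\lambda\in\Fq$ gives $(\lambda+b')\,\vv_1\begin{pmatrix}\II_n\\ b'\II_n\end{pmatrix}=0$, hence $\vv_1\begin{pmatrix}\II_n\\ b'\II_n\end{pmatrix}=0$, which forces $\vv_1=0$ since $\vv_1\in\Fq^{2n}$ and $b'\notin\Fq$---no decomposition $b'=b'_1+b'_2 b$ is needed.
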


  \begin{proof}
    First let us prove that any nonzero element $\vv$ of the left kernel of $
    \begin{pmatrix}
      \II_n \\ b\II_n
    \end{pmatrix}
    $ satisfies $\vv = \vv_0 + b \vv_1$, were $\vv_0, \vv_1 \in \Fq^{2n}$ are linearly independent.
    Indeed suppose $\vv_0 = \lambda \vv_1$ for some $\lambda \in \Fq$. Then,
    \[
      \vv \begin{pmatrix}
      \II_n \\ b\II_n
          \end{pmatrix} = 0 \quad \Longrightarrow \quad
          (\lambda + b)\vv_1 \begin{pmatrix}
      \II_n \\ b\II_n
                             \end{pmatrix} = 0 \quad \Longrightarrow \quad
                             \vv_1 \begin{pmatrix}
      \II_n \\ b\II_n
                             \end{pmatrix} = 0.
    \]
  Recall that $\vv_1$ has entries in $\Fq$ and denote
  $\vv_{10}, \vv_{11}$ its left and right halves. That is to say
  $\vv_1 = (\vv_{10} ~|~ \vv_{11})$.
  Then, the above equality yields $\vv_{10}+b\vv_{11} = 0$ and since $\vv_{10}, \vv_{11}$
  have entries in $\Fq$ while $b \in \F_{q^2}\setminus \Fq$, we deduce that $\vv_1=0$
  and hence $\vv=0$.

  Now, let $\Mm \in \Fq^{t \times 2n}$ be a full rank matrix whose rows span $V$. An element of
  $V \otimes \F_{q^2}$ has the shape
  \[
    (\mv_0 + b \mv_1) \Mm \quad \text{where} \quad \mv_0, \mv_1 \in \Fq^{t}.
  \]
  Given a pair of linearly independent vectors
  $\mv_0, \mv_1 \in \Fq^{t}$, then $(\mv_0 + b \mv_1) \Mm$ is of the
  form $\vv_0 + b \vv_1$ with $\vv_0, \vv_1 \in \Fq^{2n}$ linearly
  independent. Moreover, $(\mv_0 + b \mv_1) \Mm$ is uniformly random among such vectors.
  Thus
  \begin{align*}
    \mathbb{P}\left( (\mv_0 + b \mv_1) \Mm
      \begin{pmatrix}
        \II_n \\ b \II_n
      \end{pmatrix} = 0
    \right) & = \frac{\left|\ker_{\text{Left}}
        \begin{pmatrix}
          \II_n \\ b \II_n
        \end{pmatrix}
              \right|}{\left| \{\vv = \vv_0 +b \vv_1 \in \F_{q^2}^{2n} ~|~ \vv_0,\vv_1 \in \Fq^{2n}\ \text{linearly independent}\} \right|} \\
    & = \frac{q^{2n}-1}{(q^{2n}-1)(q^{2n}-q)} = \frac{1}{q^{2n}-q}\cdot
  \end{align*}
  Therefore,
  \[
    \mathbb{E}\left(|V \otimes \F_{q^2} \cap K| \right) = \sum_{\mv_0, \mv_1} \mathbb{P}\left( (\mv_0 + b \mv_1) \Mm
      \begin{pmatrix}
        \II_n \\ b \II_n
      \end{pmatrix} = 0
    \right)
  \]
  where the sum is taken over all the pairs $\mv_0, \mv_1 \in \Fq^{t}$. Note that the above probability is $0$ if $\mv_1, \mv_2$ are collinear and $1/(q^{2n}-q)$ otherwise, which yields
  \[\mathbb{E}\left(|V \otimes \F_{q^2} \cap K| \right) =
    \frac{(q^t-1)(q^t-q)}{q^{2n}-q} = \OO(q^{2t-2n}).\]
  Finally, by Markov inequality,
  \[\mathbb{P}\left(V \otimes \F_{q^2} \cap K \neq 0 \right)
    = \mathbb{P}\left(|V \otimes \F_{q^2} \cap K \neq 0| \geq q^2 \right)
    = \OO (q^{2t-2n-2}.)
  \]
  \end{proof}

  With these lemmas, we deduce that
  $\Prob {\rk (\text{Fold}(\Em) < t)} = \OO (q^{2t - 2\min(m,n) - 2})$.
 
\subsubsection{Experimental results}
We tested the validity of Assumption~\ref{as:2} using \emph{SageMath}
\cite{sagemath}. For $a$ square,
\begin{itemize}
\item With parameters $q=23, m=16, t=4$ we
  generated a random rank $t$ matrix and folded it.  We ran 10,000 tries
  and none of them let to a folded matrix of rank $<4$.
\item With parameters $q=23, m=16, t=4$ we
  generated a random rank $t$ matrix and folded it.  We ran 1,000,000 tries
  and only one matrix got a folding of rank $<4$. 
\end{itemize}

\subsection{Complexity} The following statement summarises the previous discussions.

\begin{thm}\label{thm:plotkin}
  Let $\CC$ be an $[m \times n, k_1]_q$ code and $\DD$ an
  $[m \times n, k_2]_q$ code where $q$ is power of an odd prime. Suppose that
  for $\DD$, we can correct $t \leq \frac{\min (m,n)}{2}$ errors for some
  positive integer $t$ and that for $\CC$ we can correct $t$ erasures
  (\emph{i.e.}  errors of rank $t$ whose row or column space is
  known). Then, we have an algorithm taking as input
  \[
    \Ym = \Cm + \Em
  \]
  where $\Cm \in \CC \plotkin_a\DD$ and $\Em$ is uniformly random
  among the $2m \times 2n$ matrices with rank $t$, which returns the
  pair $(\Cm, \Em)$. Moreover,
  \begin{itemize}
  \item if $a$ is a square in $\Fq$, then the algorithm succeeds with
    probability $1 - \OO(q^{t-\min (m,n)-1})$ and has complexity
    $\OO (mn) + 2(C(t)+D(t))$ operations
    in $\Fq$, where $C(t)$ and $D(t)$ denote respectively the
    complexity of correcting $t$ erasures for $\CC$ and $t$ errors for
    $\DD$;
  \item if a is not a square in $\Fq$, then the algorithm succeeds with
    probability $1 - \OO(q^{2t-2\min (m,n)-2})$ and has complexity
    $\OO (mn) + C(t)+D(t)$ operations
    in $\Fq$, where $C(t)$ and $D(t)$ denote respectively the
    complexity of correcting $t$ erasures for $\CC\otimes \F_{q^2}$ and $t$ errors for
    $\DD \otimes \F_{q^2}$.
  \end{itemize}
\end{thm}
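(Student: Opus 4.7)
The plan is to separate the three claims (correctness, success probability, complexity) and handle the square and non--square cases in parallel, leveraging the dissection of the algorithm given in Sections~\ref{sss:a_non_square} and 5.2.2, together with the intersection lemmas of Section~\ref{ss:proba}.

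\textbf{Correctness.} I would first argue that, conditionally on the folding preserving the rank of $\Em$, the algorithm returns $(\Cm, \Em)$. In the non--square case, a single fold $\mathbf{Fold}(\Ym)$ yields $\frac{2}{\sqrt a}\Bm_0 + 2\Bm_1 + \Em'$ with $\rk(\Em') \leq t$, so the decoder for $\DD \otimes \F_{q^2}$ correctly returns $\frac{2}{\sqrt a}\Bm_0 + 2\Bm_1$; since $\sqrt a \in \F_{q^2}\setminus\Fq$ and $\Bm_0,\Bm_1 \in \Fq^{m\times n}$, the pair is recovered by splitting along the $\Fq$--basis $(1,\sqrt a)$ of $\F_{q^2}$. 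Subtracting gives $\widetilde\Ym$ of (\ref{eq:Ytilde_2}), whose partial fold (\ref{eq:fold_Ytilde}) produces a $\CC\otimes\F_{q^2}$ decoding instance whose error has row space contained in the row space of $\Em'$ (by the identity (\ref{eq:EFE'}) adapted to the present setting, both spaces having the same dimension $t$ under the folding assumption). The erasure decoder of $\CC\otimes\F_{q^2}$ then recovers $\Am_1 - \tfrac{1}{\sqrt a}\Am_0$, from which $\Am_0,\Am_1$ are deduced since both lie in $\Fq^{m\times n}$. In the square case, Algorithm~\ref{algo:Plotkin} uses two folds so as to work entirely over $\Fq$: the two foldings return $\pm \tfrac{2}{\sqrt a}\Bm_0 + 2\Bm_1$ and two erasure decodings return $\Am_1 \pm \tfrac{1}{\sqrt a}\Am_0$; summing and subtracting gives $(\Am_0,\Am_1)$ and $(\Bm_0,\Bm_1)$.

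\textbf{Failure probability.} The only source of failure is the rank of the folded error dropping below $t$. Writing $\text{Fold}(\Em) = \begin{pmatrix}\II_m & b\II_m\end{pmatrix}\Em\begin{pmatrix}b'\II_n \\ \II_n\end{pmatrix}$, failure occurs exactly when the column space of $\Em$ meets the right kernel $K$ of the left factor non--trivially, or when the row space of the partially folded matrix meets the left kernel $K'$ of the right factor non--trivially. Since $\Em$ is uniformly random of rank $t$, its column and row spaces are uniformly random of dimension $t$ in $\Fq^{2m}$ and $\Fq^{2n}$ respectively. In the square case $b,b' \in \Fq$, so $K$ and $K'$ sit inside $\Fq^{2m}$ and $\Fq^{2n}$; Lemma~\ref{lem:esp} and a union bound then give failure probability $\OO(q^{t-m-1}) + \OO(q^{t-n-1}) = \OO(q^{t-\min(m,n)-1})$. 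In the non--square case $b,b' \in \F_{q^2}\setminus\Fq$, so $K$ and $K'$ sit in $\F_{q^2}^{2m}$ and $\F_{q^2}^{2n}$; the two extension--field intersection lemmas of Section~\ref{ss:proba} bound the respective failure probabilities by $\OO(q^{2t-2m-2})$ and $\OO(q^{2t-2n-2})$, and a union bound yields $\OO(q^{2t-2\min(m,n)-2})$.

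\textbf{Complexity.} Each folding is a pair of block additions/scalar multiplications of $m \times n$ matrices, hence costs $\OO(mn)$ operations in $\Fq$; subtracting $\Bm_0,\Bm_1$ to form $\widetilde\Ym$ and computing the partial foldings are likewise $\OO(mn)$. The remaining cost comes from the oracle calls: in the non--square case one call to the decoder of $\DD\otimes\F_{q^2}$ (cost $D(t)$) and one call to the erasure decoder of $\CC\otimes\F_{q^2}$ (cost $C(t)$); in the square case, two calls to each (cost $2D(t)+2C(t)$). Summing yields the announced totals.

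\textbf{Main obstacle.} The delicate point is the row--space identification in the erasure--decoding step: one must show that the row space of the error appearing in (\ref{eq:fold_Ytilde}) equals that of $\Em'$, \emph{not} just that it is contained in it, so that the erasure decoder can be invoked with a subspace of dimension exactly $t$. This hinges on Assumption~\ref{as:2} and on the factorisation (\ref{eq:EFE'}) adapted to the Plotkin setting; once this is granted, the rest is a routine bookkeeping of the estimates of Section~\ref{ss:proba}.
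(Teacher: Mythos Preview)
Your proposal is correct and follows exactly the approach the paper takes: the theorem is explicitly introduced as a summary of the preceding discussion (Sections~5.2 and~5.3), and your write-up faithfully assembles the correctness argument from the folding/erasure-decoding description, the failure bounds from the intersection lemmas, and the complexity count from the step-by-step cost of the algorithm. One small clarification on your ``main obstacle'': what is actually needed is (i) $\rk(\Em')=t$ (Assumption~\ref{as:2}) so that the known space has dimension $t$, and (ii) containment of the row space of the bottom block $\Fm_1$ in that of $\Em'$; the equality you mention is between the row spaces of $\Fm$ and $\Em'$ (forced by (i) together with the factorisation $\Em'=(\cdots)\Fm$), and containment for $\Fm_1$ then follows trivially.
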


\begin{rem}
  Of course, exactly as in the case of rank metric binary Reed--Muller codes
  or in the spirit of \cite{MCT16,MCT17} in Hamming metric, this
  Plotkin--like construction can be iterated and decoded with a
  recursive decoder ultimately calling other decoders. 
\end{rem}

\subsection{An instantiation with Gabidulin codes} \label{exa:nonMRD}
The Plotkin construction gives a scheme to obtain efficiently decodable rank metric codes by combining a pair of rank metric codes with efficient decoders. As mentioned before, there are essentially three families of structured codes previously known to have efficient decoding algorithms and they are mostly $\Fqm$-linear: Gabidulin codes, LRPC codes and Simple codes. However, these codes as components of the Plotkin construction can provide new families of rank metric codes with efficient decoding algorithms. We see below such an example of Plotkin construction of a code by considering a pair $(\CC, \DD)$ of Gabidulin codes.

Let $\CC, \DD \subset \Fq^{m \times m}$ be two matrix representation
of Gabidulin codes over $\Fqm$ of respective $\Fqm$--dimensions
$k_1, k_2$.  Recall that these codes have respective minimum distances
$m-k_1+1$ and $m-k_2+1$. Moreover, Loidreau's algorithm \cite{Loid}
permits to decode $\DD$ up to $\frac{m-k_2}{2}$ in $\OO(m^2)$
operations in $\Fqm$ and it is possible to correct up to $m-k_1$
rank erasures for $\CC$.  Suppose that $m = 2k_1- k_2$ so that
\begin{equation}\label{eq:k1k2}
t \eqdef (m-k_1) = \frac{m-k_2}{2}\cdot
\end{equation}
Then, according to Proposition~\ref{pro:dim_plotkin} and
Theorem~\ref{thm:plotkin}, the code $\CC \plotkin_a \DD$ has $\Fq$--dimension
$2m(k_1+k_2)$ and benefits from a decoder correcting up to $t$ errors.

\begin{rem}
To conclude, note that $\CC \plotkin_a \DD$ contains some
  \[
    \begin{pmatrix}
      \Am_0 & (0) \\ (0) & \Am_0
    \end{pmatrix},
  \]
  with $\Am_0 \in \CC$ of weight $m-k_1+1$. Hence the minimum distance
  of $\CC \plotkin_a \DD$ is less than $2m - 2k_1 + 2$. If
  $m-k_2 > 2$, then (\ref{eq:k1k2}) entails that
  the latter minimum distance is strictly below the rank Singleton
  bound $2m - k_1 - k_2 + 1$. Therefore, $\CC \plotkin_a \DD$ is never
  MRD and hence cannot be equivalent to a Gabidulin code. The obtained
  family of rank--metric codes is hence a new one equipped with an
  efficient decoder.
\end{rem}

\subsection{About the error model}\label{ss:error_model} Back to the
case of Rank Reed--Muller case and suppose that $\K = \mathbb{Q}$ and
$\LL$ is some $(\mathbb{Z}/2\mathbb{Z})^m$--extension of $\mathbb{Q}$.
Take note that, for this subsection, we go back to the notation of the
previous sections where $m$ denotes the number of generators of the
Galois group. Our codes are spaces of $N \times N$ matrices where
$N = 2^m$. Let us consider the code $\RM_{\LL / \mathbb{Q}}(r,m)$ for some
nonnegative integer $m$ and let $t = 2^{m-r-1}-1$ be the number of errors we
expect to correct.

Let $p$ be a prime integer that splits totally in $\LL$;
Chebotarev density theorem asserts the existence of infinitely many
such primes. Here let us consider an error model where errors are
matrices $\Em$ of rank $\leq t$ whose entries lie in some interval
$[0, T]$ for some $T\geq  p$ and whose reduction $\overline{\Em}$
modulo $p$ also has rank $t$.

Under this error model, one can check that Assumption~\ref{as:1} holds
for a single folding if it holds modulo $p$. Thus, from
Section~\ref{sss:proba_a_square}, the modulo $p$ case gives a lower
bound for the probability that the folding of the error keeps the same
rank: it is at least $1 - \OO(p^{t-N-1})$. Since to decode
$\RM_{\LL/\mathbb{Q}}(r, m)$ we will perform $\OO(r)$ recursive calls,
Assumption~\ref{as:1} holds with a probability at least
\[1 - \OO(r p^{t-N-1}) = 1 - \OO(rp^{2^{m-r-1}-2^m-2}).\]
In short, the failure probability is doubly exponentially close to $0$.

 \section{Conclusion}
We proposed a novel algorithm for decoding
``binary--like'' rank metric Reed--Muller codes. This decoder rests on
the use of a recursive structure that we identified on these codes. In
its design, this decoder is extremely simple and basically boils down
to linear algebra.  If compared to the algorithm proposed in
\cite{CP25}, this new algorithm might fail on some rare instances, but has a
much better complexity.

Finally, our research on rank Reed--Muller codes led us to identify
what a natural rank analogue of the Plotkin $(u ~|~ u+v)$ construction
could be. This novel construction gives a generic frame to design new
families of efficiently decodable rank metric codes from old ones.  We
expect this novel operation on rank metric codes to open interesting
perspectives in the near future.

 \bibliographystyle{acm}

\begin{thebibliography}{10}

\bibitem{ASAM}
{\sc Abbe, E., Sberlo, O., Shpilka, A., and Ye, M.}
\newblock {R}eed--{M}uller codes.
\newblock {\em Foundations and Trends{\textregistered} in Communications and
  Information Theory 20}, 1--2 (2023), 1--156.
\newblock \url{http://dx.doi.org/10.1561/0100000123}.

\bibitem{AABBBBDDGLPRVZ22}
{\sc {Aguilar Melchor}, C., Aragon, N., Bettaieb, S., Bidoux, L., Blazy, O.,
  Bos, J., Deneuville, J.-C., Dion, A., Gaborit, P., Lacan, J., Persichetti,
  E., Robert, J.-M., V{\'{e}}ron, P., Z{\'{e}}mor, G., and Bos, J.}
\newblock {HQC}.
\newblock Round 4 Submission to the NIST Post-Quantum Cryptography Call, Oct.
  2022.
\newblock \url{https://pqc-hqc.org/}.

\bibitem{A19b}
{\sc {Aguilar Melchor}, C., Aragon, N., Bettaieb, S., Bidoux, L., Blazy, O.,
  Bros, M., Couvreur, A., Deneuville, J.-C., Gaborit, P., Z{\'e}mor, G., and
  Hauteville, A.}
\newblock Rank quasi cyclic {(RQC)}.
\newblock Second Round submission to NIST Post-Quantum Cryptography call, Apr.
  2020.
\newblock \url{https://pqc-rqc.org}.

\bibitem{ABCCGLMMMNPPPSSSTW20}
{\sc Albrecht, M., Bernstein, D.~J., Chou, T., Cid, C., Gilcher, J., Lange, T.,
  Maram, V., von Maurich, I., Mizoczki, R., Niederhagen, R., Persichetti, E.,
  Paterson, K., Peters, C., Schwabe, P., Sendrier, N., Szefer, J., Tjhai,
  C.~J., Tomlinson, M., and Wen, W.}
\newblock Classic {M}c{E}liece (merger of {Classic McEliece} and {NTS-KEM}).
\newblock \url{https://classic.mceliece.org}, Nov. 2022.
\newblock Fourth round finalist of the NIST post-quantum cryptography call.

\bibitem{ROLLO}
{\sc Aragon, N., Blazy, O., Deneuville, J.-C., Gaborit, P., Hauteville, A.,
  Ruatta, O., Tillich, J.-P., Z{\'e}mor, G., {Aguilar Melchor}, C., Bettaieb,
  S., Bidoux, L., Bardet, M., and Otmani, A.}
\newblock {ROLLO} (merger of {Rank-Ourobouros, LAKE and LOCKER}).
\newblock Second round submission to the NIST post-quantum cryptography call,
  Mar. 2019.

\bibitem{AGHRZ}
{\sc Aragon, N., Gaborit, P., Hauteville, A., Ruatta, O., and Z{\'e}mor, G.}
\newblock Low rank parity check codes: New decoding algorithms and applications
  to cryptography.
\newblock {\em IEEE Trans. Inform. Theory 65}, 12 (2019), 7697--7717.

\bibitem{AGHT18}
{\sc Aragon, N., Gaborit, P., Hauteville, A., and Tillich, J.-P.}
\newblock A new algorithm for solving the rank syndrome decoding problem.
\newblock In {\em 2018 {IEEE} International Symposium on Information Theory,
  {ISIT} 2018, Vail, CO, USA, June 17-22, 2018\/} (2018), IEEE, pp.~2421--2425.
\newblock \url{https://doi.org/10.1109/ISIT.2018.8437464}.

\bibitem{ACLN}
{\sc Augot, D., Couvreur, A., Lavauzelle, J., and Neri, A.}
\newblock Rank-metric codes over arbitrary {G}alois extensions and rank
  analogues of {R}eed--{M}uller codes.
\newblock {\em SIAM J. Appl. Algebra Geom. 5}, 2 (2021), 165--199.

\bibitem{ALR18}
{\sc Augot, D., Loidreau, P., and Robert, G.}
\newblock Generalized {G}abidulin codes over fields of any characteristic.
\newblock {\em Des. Codes Cryptogr. 86}, 8 (2018), 1807--1848.

\bibitem{BCCCDGKLNSST23}
{\sc Banegas, G., Carrier, K., Chailloux, A., Couvreur, A., Debris-Alazard, T.,
  Gaborit, P., Karpman, P., Loyer, J., Niederhagen, R., Sendrier, N., Smith,
  B., and Tillich, J.-P.}
\newblock {Wave}.
\newblock Round 1 Additional Signatures to the NIST Post-Quantum Cryptography:
  Digital Signature Schemes Call, June 2023.

\bibitem{BHLPRW}
{\sc Bartz, H., Holzbaur, L., Liu, H., Puchinger, S., Renner, J., and
  Wachter-Zeh, A.}
\newblock Rank-metric codes and their applications.
\newblock {\em Found. Trends Commun. Inf. Theory 19}, 3 (May 2022), 390--546.

\bibitem{BC24}
{\sc Berardini, E., and Caruso, X.}
\newblock {R}eed--{M}uller codes in the sum-rank metric.
\newblock {\em arXiv preprint arXiv:2405.09944\/} (2024).

\bibitem{CP25}
{\sc Couvreur, A., and Pratihar, R.}
\newblock Decoding rank metric {R}eed--{M}uller codes.
\newblock arXiv preprint
  \href{https://arxiv.org/abs/2501.04766}{arXiv:2501.04766}, 2025.

\bibitem{CP25b}
{\sc Couvreur, A., and Pratihar, R.}
\newblock {Recursive Decoding of Binary Rank Reed-Muller Codes and Plotkin
  Construction for Matrix Codes}.
\newblock In {\em Proc. IEEE Int. Symposium Inf. Theory - ISIT~2025\/} (2025),
  pp.~1--6.

\bibitem{DST19a}
{\sc {Debris-Alazard}, T., Sendrier, N., and Tillich, J.-P.}
\newblock Wave: A new family of trapdoor one-way preimage sampleable functions
  based on codes.
\newblock In {\em Advances in Cryptology - ASIACRYPT~2019, Part {I}\/} (Kobe,
  Japan, Dec. 2019), S.~D. Galbraith and S.~Moriai, Eds., vol.~11921 of {\em
  Lecture Notes in Comput. Sci.}, Springer, pp.~21--51.
\newblock \url{https://doi.org/10.1007/978-3-030-34578-5\_2}, doi =
  {10.1007/978-3-030-34578-5\_2}.

\bibitem{Del}
{\sc Delsarte, P.}
\newblock Bilinear forms over a finite field, with applications to coding
  theory.
\newblock {\em J. Combin. Theory Ser. A 25}, 3 (1978), 226--241.

\bibitem{D06}
{\sc Dumer, I.}
\newblock Soft-decision decoding of {R}eed--{M}uller codes: a simplified
  algorithm.
\newblock {\em IEEE Trans. Inform. Theory 52}, 3 (2006), 954--963.

\bibitem{Gab}
{\sc Gabidulin, E.~M.}
\newblock Theory of codes with maximum rank distance.
\newblock {\em Problemy Peredachi Informatsii 21}, 1 (1985), 3--16.

\bibitem{GPT91}
{\sc Gabidulin, E.~M., Paramonov, A.~V., and Tretjakov, O.}
\newblock Ideals over a non-commutative ring and their application in
  cryptology.
\newblock In {\em Workshop on the Theory and Application of of Cryptographic
  Techniques\/} (1991), Springer, pp.~482--489.

\bibitem{GHPT17}
{\sc Gaborit, P., Hauteville, A., Phan, D.~H., and Tillich, J.-P.}
\newblock Identity-based encryption from codes with rank metric.
\newblock In {\em Advances in Cryptology -- CRYPTO 2017\/} (Cham, 2017),
  J.~Katz and H.~Shacham, Eds., vol.~10403, Springer International Publishing,
  pp.~194--224.

\bibitem{GMRZ13}
{\sc Gaborit, P., Murat, G., Ruatta, O., and Z{\'e}mor, G.}
\newblock Low rank parity check codes and their application to cryptography.
\newblock In {\em Proceedings of the Workshop on Coding and Cryptography
  WCC'2013\/} (Bergen, Norway, 2013).

\bibitem{GRS16}
{\sc Gaborit, P., Ruatta, O., and Schrek, J.}
\newblock On the complexity of the rank syndrome decoding problem.
\newblock {\em IEEE Trans. Inform. Theory 62}, 2 (2016), 1006--1019.

\bibitem{GU19}
{\sc Geiselmann, W., and Ulmer, F.}
\newblock Skew {R}eed--{M}uller codes.
\newblock {\em Rings, Modules and Codes 727\/} (2019), 107--116.

\bibitem{GQ09}
{\sc Gow, R., and Quinlan, R.}
\newblock Galois theory and linear algebra.
\newblock {\em Linear Algebra and its Applications 430}, 7 (2009), 1778--1789.

\bibitem{JPS13}
{\sc Jeannerod, C.-P., Pernet, C., and Storjohann, A.}
\newblock Rank-profile revealing {G}aussian elimination and the {CUP} matrix
  decomposition.
\newblock {\em J. Symbolic Comput. 56\/} (2013), 46--68.

\bibitem{KLP68}
{\sc Kasami, T., Lin, S., and Peterson, W.}
\newblock {New generalizations of the Reed-Muller codes--I: Primitive codes}.
\newblock {\em IEEE Trans. Inform. Theory 14}, 2 (1968), 189--199.

\bibitem{KLP68b}
{\sc Kasami, T., Lin, S., and Peterson, W.}
\newblock {New generalizations of the Reed-Muller codes--II: Primitive codes}.
\newblock {\em IEEE Trans. Inform. Theory 14}, 2 (1968), 199--205.

\bibitem{KK08}
{\sc Koetter, R., and Kschischang, F.~R.}
\newblock Coding for errors and erasures in random network coding.
\newblock {\em IEEE Trans. Inform. Theory 54}, 8 (2008), 3579--3591.

\bibitem{L88}
{\sc Lachaud, G.}
\newblock Projective {R}eed--{M}uller codes.
\newblock In {\em Coding theory and applications ({C}achan, 1986)}, vol.~311 of
  {\em Lecture Notes in Comput. Sci.} Springer, Berlin, 1988, pp.~125--129.

\bibitem{L90}
{\sc Lachaud, G.}
\newblock The parameters of projective {R}eed--{M}uller codes.
\newblock {\em Discrete Math. 81}, 2 (1990), 217--221.

\bibitem{Loid}
{\sc Loidreau, P.}
\newblock A {W}elch-{B}erlekamp like algorithm for decoding {G}abidulin codes.
\newblock In {\em Coding and cryptography}, vol.~3969 of {\em Lecture Notes in
  Comput. Sci.} Springer, Berlin, 2006, pp.~36--45.

\bibitem{L17}
{\sc Loidreau, P.}
\newblock A new rank metric codes based encryption scheme.
\newblock In {\em Post-Quantum Cryptography~2017\/} (2017), vol.~10346 of {\em
  Lecture Notes in Comput. Sci.}, Springer, pp.~3--17.

\bibitem{LK05}
{\sc Lu, H.-F., and Kumar, P.}
\newblock A unified construction of space-time codes with optimal
  rate-diversity tradeoff.
\newblock {\em IEEE Trans. Inform. Theory 51}, 5 (2005), 1709--1730.

\bibitem{LTZ17}
{\sc Lunardon, G., Trombetti, R., and Zhou, Y.}
\newblock On kernels and nuclei of rank metric codes.
\newblock {\em J. Algebraic Combin. 46}, 2 (2017), 313--340.

\bibitem{MS77}
{\sc MacWilliams, F.~J., and Sloane, N. J.~A.}
\newblock {\em The theory of error-correcting codes}, vol.~16.
\newblock Elsevier, 1977.

\bibitem{MCT16}
{\sc M{\'a}rquez-Corbella, I., and Tillich, J.-P.}
\newblock Using {R}eed--{S}olomon codes in the $(u | u + v)$ construction and
  an application to cryptography.
\newblock In {\em 2016 IEEE International Symposium on Information Theory
  (ISIT)\/} (2016), IEEE, pp.~930--934.

\bibitem{MCT17}
{\sc M{\'a}rquez-Corbella, I., and Tillich, J.-P.}
\newblock Attaining capacity with iterated $(u | u + v)$ codes based on {AG}
  codes and {K}oetter--{V}ardy soft decoding.
\newblock In {\em 2017 IEEE International Symposium on Information Theory
  (ISIT)\/} (2017), IEEE, pp.~6--10.

\bibitem{MP}
{\sc Mart{\'{\i}}nez{-}Pe{\~{n}}as, U., and Pellikaan, R.}
\newblock Rank error-correcting pairs.
\newblock {\em Des. Codes Cryptogr. 84}, 1-2 (2017), 261--281.

\bibitem{M78}
{\sc McEliece, R.~J.}
\newblock {\em A Public-Key System Based on Algebraic Coding Theory}.
\newblock Jet Propulsion Lab, 1978, pp.~114--116.
\newblock DSN Progress Report 44.

\bibitem{N86}
{\sc Niederreiter, H.}
\newblock Knapsack-type cryptosystems and algebraic coding theory.
\newblock {\em Problems of Control and Information Theory 15}, 2 (1986),
  159--166.

\bibitem{Oreqpoly}
{\sc Ore, {\O}.}
\newblock On a special class of polynomials.
\newblock {\em Trans. Amer. Math. Soc. 35}, 3 (1933), 559--584.

\bibitem{Orenoncom}
{\sc Ore, {\O}.}
\newblock Theory of non-commutative polynomials.
\newblock {\em Ann. of Math.\/} (1933), 480--508.

\bibitem{RJB}
{\sc Renner, J., Jerkovits, T., and Bartz, H.}
\newblock Efficient decoding of interleaved low-rank parity-check codes.
\newblock In {\em 2019 XVI International Symposium "Problems of Redundancy in
  Information and Control Systems" (REDUNDANCY)\/} (2019), pp.~121--126.

\bibitem{R15}
{\sc Robert, G.}
\newblock {A new constellation for space-time coding}.
\newblock In {\em {WCC 2015}\/} (Paris, France, Apr. 2015), P.~Charpin,
  N.~Sendrier, and J.-P. Tillich, Eds., WCC 2015, {Anne Canteaut, Ga{\"e}tan
  Leurent, Maria Naya-Plasencia}.

\bibitem{R15a}
{\sc Robert, G.}
\newblock {\em {Codes de Gabidulin en caract{\'e}ristique nulle : application
  au codage espace-temps}}.
\newblock Theses, {Universit{\'e} de Rennes}, Dec. 2015.

\bibitem{Rot91}
{\sc Roth, R.~M.}
\newblock Maximum-rank array codes and their application to crisscross error
  correction.
\newblock {\em IEEE Trans. Inform. Theory 37}, 2 (1991), 328--336.

\bibitem{SB10}
{\sc Sidorenko, V., and Bossert, M.}
\newblock Decoding interleaved {G}abidulin codes and multisequence linearized
  shift-register synthesis.
\newblock In {\em 2010 IEEE International Symposium on Information Theory\/}
  (2010), IEEE, pp.~1148--1152.

\bibitem{SKK11}
{\sc Silva, D., and Kschischang, F.~R.}
\newblock Universal secure network coding via rank-metric codes.
\newblock {\em IEEE Trans. Inform. Theory 57}, 2 (2011), 1124--1135.

\bibitem{S91}
{\sc S{\o}rensen, A.~B.}
\newblock {Projective Reed--Muller codes}.
\newblock {\em IEEE Trans. Inform. Theory 37}, 6 (1991), 1567--1576.

\bibitem{sagemath}
{\sc {The Sage Developers}}.
\newblock {\em {S}ageMath, the {S}age {M}athematics {S}oftware {S}ystem
  ({V}ersion 9.5)}, 2022.
\newblock {\tt https://www.sagemath.org}.

\bibitem{GG13}
{\sc von~zur Gathen, J., and Gerhard, J.}
\newblock {\em Modern Computer Algebra}, 3rd~ed.
\newblock Cambridge University Press, 2013.

\bibitem{WL13}
{\sc Wu, B., and Liu, Z.}
\newblock Linearized polynomials over finite fields revisited.
\newblock {\em Finite Fields Appl. 22\/} (2013), 79--100.

\end{thebibliography}

\end{document}